







\documentclass[3p,times]{elsarticle}

\usepackage{ecrc}


\volume{00}

\firstpage{1}

\journalname{Computer Methods in Applied Mechanics and Engineering}

\runauth{S. Kang, T. Bui-Thanh, and T. Arbogast}


\jid{cmame}





\usepackage{amssymb}
\usepackage{amsmath}
\usepackage{amsthm}

\usepackage{float}
\usepackage[pdfborder={0 0 0}, colorlinks=true, linkcolor=blue, citecolor=red]{hyperref}
\hypersetup{pdfborder=0 0 0}
\usepackage{algorithm}
\usepackage{algorithmic}
\usepackage{mathrsfs}
\usepackage{subfigure}
\usepackage{multirow}

\usepackage{leftidx}
\usepackage[latin1]{inputenc}
\usepackage{ulem}

\usepackage{IEEEtrantools}

\usepackage{lineno,hyperref}
\modulolinenumbers[5]





\usepackage[figuresright]{rotating}


\newcommand{\presstar}{ {\pres}^\star}
\newcommand{\pstar}{ {\p}^\star}
\newcommand{\e}{ {e}}
\renewcommand{\k}{ {k}}
\newcommand{\pres}{{ \pt }}
\newcommand{\Ne}{ {N_e}}
\newcommand{\nb}{{{\bf n }}}

\newcommand{\test}{{{q}}}
\newcommand{\testv}{{\vb}}
\newcommand{\testmu}{{\qh}}
\newcommand{\phihalf}{{\phi^{\frac{1}{2}} }}
\newcommand{\phimhalf}{{\phi^{-\frac{1}{2}} }}
\newcommand{\pOmegah}{{\pOmega_h}}
\newcommand{\Omegah}{{\Omega_h}}
\def\d{\delta}

\def\norm#1{\|#1\|}




\newcommand{\beq} {\begin{equation}}
\newcommand{\eeq} {\end{equation}}
\newcommand{\bdm} {\begin{displaymath}}
\newcommand{\edm} {\end{displaymath}}
\newcommand{\bit}{\begin{itemize}}
\newcommand{\eit}{\end{itemize}}
\newcommand{\bde}{\begin{description}}
\newcommand{\ede}{\end{description}}
\newcommand{\bce}{\begin{center}}
\newcommand{\ece}{\end{center}}
\newcommand{\ben} {\begin{enumerate}}
\newcommand{\een} {\end{enumerate}}
\newcommand{\bea} {\begin{eqnarray}}
\newcommand{\eea} {\end{eqnarray}}
\newcommand{\barr} {\begin{array}}
\newcommand{\earr} {\end{array}}
\newcommand{\bean} {\begin{eqnarray*}}
\newcommand{\eean} {\end{eqnarray*}}
\newcommand{\edoc} {

\newtheorem{proposition}{Proposition}
\newtheorem{lemma}{Lemma}
\newtheorem{theorem}{Theorem}
\newtheorem{remark}{Remark}

\renewcommand{\algorithmicrequire}{\textbf{STEP}}
\renewcommand{\algorithmicensure}{\textbf{Input:}}

\usepackage{color}
\usepackage{soul,xargs}
\usepackage[pdftex,dvipsnames]{xcolor}

\usepackage[colorinlistoftodos,prependcaption,textsize=tiny]{todonotes}

\newcommandx{\question}[2][1=]{\todo[linecolor=red,backgroundcolor=red!25,bordercolor=red,#1]{#2}}
\newcommandx{\change}[2][1=]{\todo[linecolor=blue,backgroundcolor=blue!25,bordercolor=blue,#1]{#2}}
\newcommandx{\add}[2][1=]{\todo[linecolor=OliveGreen,backgroundcolor=OliveGreen!25,bordercolor=OliveGreen,#1]{#2}}
\newcommandx{\improve}[2][1=]{\todo[linecolor=Plum,backgroundcolor=Plum!25,bordercolor=Plum,#1]{#2}}
\newcommandx{\thiswillnotshow}[2][1=]{\todo[disable,#1]{#2}}
\newcommandx{\remove}[2][1=]{\todo[linecolor=yelllow,backgroundcolor=yellow!10,bordercolor=red,#1]{#2}}






\begin{document}

\begin{frontmatter}



\dochead{}

\title{A Hybridized Discontinuous Galerkin Method for A Linear Degenerate Elliptic Equation Arising from Two-Phase Mixtures}


\author[AddrShinhoo]{Shinhoo Kang} 
\author[AddrTan]{Tan Bui-Thanh}
\author[AddrTodd]{Todd Arbogast}

\address[AddrShinhoo]{Department of Aerospace Engineering and Engineering Mechanics,
  The University of Texas at Austin, Austin, TX 78712,
  USA.}

\address[AddrTan]{Department of Aerospace Engineering and Engineering Mechanics, and Institute for Computational Engineering
  \& Sciences, The University of Texas at Austin, Austin, TX 78712,
USA.}
\address[AddrTodd]{Department of Mathematics, and Institute for Computational Engineering
  \& Sciences, The University of Texas at Austin, Austin, TX 78712}

\begin{abstract}
We develop a high-order hybridized discontinuous Galerkin (HDG) method for a linear degenerate elliptic equation arising from a two-phase mixture of mantle convection or glacier dynamics.
We show that the proposed HDG method is well-posed by using an energy approach.
We derive {\it a priori} error estimates for the 
  method on simplicial meshes in both two- and three-dimensions.
The error analysis shows that the convergence rates
are optimal for both the scaled pressure and the scaled velocity for non-degenerate problems and 
are sub-optimal by half order for degenerate ones.
 Several numerical results are presented to confirm the theoretical
 estimates. We also enhance the HDG solutions by post-processing.
The superconvergence rates of $(\k+2)$ and $(\k+\frac{3}{2})$ are observed for both a non-degenerate case
 and a degenerate case away from the degeneracy. 
 Degenerate problems with low regularity solutions are also studied,
 and numerical results show that high-order methods are beneficial in terms of
 accuracy. 
\end{abstract}


\begin{keyword}
discontinuous Galerkin methods, hybridization, degenerate elliptic equation, two-phase mixtures, error estimates

\end{keyword}

\end{frontmatter}



\section{Introduction}


The Earth's core is hotter than the Earth's surface, which leads to
thermal convection in which the cold mantle is dense and sinks while
the hot mantle is light and rises to the surface.  The induced
current, i.e., mantle convection, moves slowly and cools gradually.
The evolution and circulation of the mantle induce plate
tectonics, volcanic activity, and variation in crustal chemical composition.
Therefore, the study of mantle dynamics is critical to understanding
how the planet functions \cite{herring2010geodesy}.  Glacier dynamics,
on the other hand, describes the movement of glaciers and ice
sheets. Glaciers and ice sheets interact with the atmosphere, the oceans,
and the landscape \cite{chorley1971physical}, which could lead to a large
impact on weather and climate change \cite{kaser2001glacier}. Though
mantle convection and glacier dynamics are different in nature, their dynamics
can be mathematically modeled by the Stokes equations combined with a Darcy equation accounting for melt.


In this paper, we are interested in developing numerical methods for both glacial dynamics
and mantle convection 
 described by a similar two-phase mathematical model, as we now
briefly discuss.  In glacial dynamics, the mixture of ice and water is
observed near the temperature at the pressure-melting point, which is
in a phase-transition process
\cite{fowler1984transport,aschwanden2012enthalpy}.  In mantle
dynamics, a partially molten rock is generated by supplying heat or
reducing the pressure. In both cases, the relative motion between the melt and the
solid matrix is modeled by two-phase flow
\cite{mckenzie1984generation}.

We adopt the mathematical model in \cite{mckenzie1974convection,mckenzie1984generation,scott1984magma,scott1986magma,bercovici2001two,bercovici2003energetics,vsramek2007simultaneous}.
In particular, the mixture parameter of fluid melt and solid matrix is described by
the porosity $\phi$\textemdash the relative volume of fluid melt with respect to  the
bulk volume\textemdash which separates the solid single-phase ($\phi=0$) and
fluid-solid two-phase ($\phi>0$) regions \cite{arbogast2017mixed}.
The partially molten regions are governed by Darcy flow through
a deformable solid matrix which is modeled as a highly
viscous Stokes fluid
\cite{hewitt2008partial,vsramek2007simultaneous}. We use the subscript
$f$ and $s$ to distinguish between the fluid melt and solid matrix, and 
boldface lowercase letters for vector-valued functions.  We denote by $\vb_f$ and $\vb_s$ the velocities of fluid and solid, $\pres_f$ and $\pres_s$
the pressures, $\rho_f$ and $\rho_s$ the densities, $\mu_f$ and
$\mu_s$ the viscosities, and  $\sigma_f$ and $\sigma_s$  the
stresses.  
Darcy's law \cite{darcy1856fontaines,mckenzie1984generation} states
\begin{align}
\eqnlab{Darcy}
  \phi \LRp{\vb_f - \vb_s} &= 
    -\frac{\kappa(\phi)}{\mu_f}
  \LRp{\Grad \pres_f - \rho_f \gb},
\end{align}
where
$\kappa(\phi)$ is the permeability with $\kappa(0)=0$ and 
$\gb$ is the gravity. 
We assume that the solid matrix is more viscous than the fluid melt
 $(\mu_f \ll \mu_s)$ so that the fluid and the solid stresses can be modeled as
\begin{align}
  \sigma_f: &= -\pres_f \mc{I},\\
  \sigma_s: &= -\pres_s \mc{I} + \mu_s \LRp{\Grad \vb_s + \Grad \vb_s^T } 
               -\frac{2}{3} \mu_s \Div \vb_s \mc{I},
\end{align}
where $\mc{I}$ is the second order identity tensor.
The mixture of the melt and the solid matrix
obeys the Stokes equation \cite{bercovici2001two,vsramek2007simultaneous}
\begin{align}
\eqnlab{Stokes}
  \Div \LRp{\phi \sigma_f + (1-\phi) \sigma_s } 
  +\LRp{\phi \rho_f + (1-\phi) \rho_s } \gb = 0.
\end{align}

The mass conservations of the fluid melt and the solid matrix  are given as \cite{mckenzie1984generation,bercovici2001two}
\begin{align}
\eqnlab{MassConservation1}
  \dd{\LRp{\rho_f \phi}}{t} + \Div \LRp{\rho_f \phi \vb_f } & = 0,\\
\eqnlab{MassConservation2}
  \dd{\LRp{\rho_s (1-\phi)}}{t} + \Div \LRp{\rho_s (1-\phi) \vb_s } & = 0.
\end{align}
Applying a Boussinesq approximation (constant and equal densities for
non-buoyancy terms) to \eqnref{MassConservation1}--\eqnref{MassConservation2}, the total mass
conservation of the mixture can be written as
\begin{align}
\eqnlab{MassConservation}
  \Div \LRp{\phi \vb_f + (1-\phi) \vb_s} = 0.
\end{align}
The pressure jump between the melt and the matrix phases (the compaction relation) is given by \cite{sleep1988,bercovici2003energetics}
\begin{align}
\eqnlab{Compaction}
 \LRp{ \pres_s - \pres_f } = - \frac{\mu_s}{\phi} \Div \vb_s.
\end{align}

  The coupled Darcy-Stokes system \eqnref{Darcy}, \eqnref{Stokes},
  \eqnref{MassConservation} and \eqnref{Compaction} describes the
  motion of the mantle flow (and glacier dynamics). 
 The challenge is when $\phi=0$. Since solid matrix always exists, the Stokes part is well-posed, but the Darcy part is degenerate when $\phi=0$.

In this paper, we shall focus  on addressing the challenge of solving the linear degenerate elliptic equation
  arising from the Darcy part of the system. With a change of
  variables, \eqnref{Darcy} and a combination of \eqnref{MassConservation}--\eqnref{Compaction} 
  become
\begin{subequations}
  \eqnlab{DegenerateElliptic}
  \begin{IEEEeqnarray}{rCl}
    \tilde{\vb} + d(\phi)^2 (\Grad \pres - \tilde{\gb}) &= 0,  &\quad \text{ in } \Omega, \\
    \Div \tilde{\vb} + \phi \pres &= \phi^{\frac{1}{2}} f, &\quad \text{ in } \Omega, \\ 
    \phi \pres &= \phihalf g_D, & \quad\text{ on } \Gamma_D,
  \end{IEEEeqnarray}
\end{subequations}
where
$\Omega \subset \R^{dim}, dim=2\text{ or }3$, is an open and bounded domain,
$\Gamma_D=\pOmega$ the Dirichlet boundary,
$g_D$ the Dirichlet data,
$\nb$ the outward unit normal vector,
$\tilde{\vb}=\phi(\vb_f - \vb_s)$ the  Darcy velocity,
  $\pres=\pres_f$ the fluid pressure, $\tilde{\gb}=\rho_f \gb$,
 $f=\phi^{\frac{1}{2}}\pres_s$, $\mu_s=1$ and 
 $d(\phi) = \sqrt{\frac{\kappa(\phi)}{\mu_f}}$.
Though $d$ is a function of $\phi$, we shall write $d$ instead of $d\LRp{\phi} $ for the simplicity of the exposition.





  The boundary value problem  \eqnref{DegenerateElliptic} has been studied in \cite{arbogast2016linear},
  where the scaled velocity and pressure
  were proposed 
in order to obtain well-posedness. For numerical implementation, 
a cell-centered finite difference method \cite{arbogast2017cell}
and a mixed finite element method \cite{arbogast2016linear} have been studied.
The results showed that the numerical schemes are stable and have an optimal convergence rate for smooth solutions. 
However,  these schemes are low order accurate approaches.

Meanwhile, the high-order discontinuous Galerkin (DG) method\textemdash originally developed \cite{ReedHill73,LeSaintRaviart74,
  JohnsonPitkaranta86} for the neutron transport
equation\textemdash has
been studied extensively for virtually all types of partial
differential equations (PDEs)
\cite{douglas1976interior,wheeler1978elliptic,arnold1982interior,
cockburn2000development,arnold2002unified,
liu2004discontinuous,bassi2005discontinuous,
fezoui2005convergence,
noels2008explicit,
tirupathi2015modeling,gassner2016split,wintermeyer2017entropy}. This is due to the
fact that DG combines advantages of finite volume and finite element
methods. As such, it is well-suited to problems with large gradients
including shocks and with complex geometries, and large-scale
simulations demanding parallel implementations. 
In particular, for numerical modeling of magma dynamics, the DG methods have been used to study the interaction between the fluid melt and the solid matrix \cite{tirupathi2015multilevel,tirupathi2015modeling}, and to include a porosity-dependent bulk viscosity and a solid upwelling effect \cite{schiemenz2011modeling}.
In spite of these advantages, DG methods for
steady state and/or time-dependent problems that require implicit
time-integrators are more expensive in comparison to other existing
numerical methods, since DG typically has many more (coupled)
unknowns.

As an effort to mitigate the computational expense associated with DG
methods, the hybridized (also known as hybridizable) discontinuous
Galerkin (HDG) methods are introduced for various types of PDEs
including Poisson-type equation \cite{CockburnGopalakrishnanLazarov09,
  CockburnGopalakrishnanSayas10, KirbySherwinCockburn12,
  NguyenPeraireCockburn09a, CockburnDongGuzmanEtAl09,
  EggerSchoberl10}, Stokes equation \cite{CockburnGopalakrishnan09,
  NguyenPeraireCockburn10}, Euler and Navier-Stokes equations, wave
equations \cite{NguyenPeraireCockburn11, MoroNguyenPeraire11,
  NguyenPeraireCockburn11b, LiLanteriPerrrussel13,
  NguyenPeraireCockburn11a, GriesmaierMonk11, CuiZhang14, rhebergen2017hybridizable}, to name a
few. In \cite{Bui-Thanh15, Bui-Thanh15a, bui2016construction}, 
one of the authors has proposed an upwind HDG framework that provides a unified and 
systematic construction of HDG methods for a large class of PDEs.
We note that the weak Galerkin (WG) methods
in \cite{wang2013weak, WangYe14, zhai2015hybridized,mu2014new}
share many similar advantages with HDG. In fact, HDG and WG are the same for the degenerate elliptic problem in this paper.


The main goal of this paper is to develop a high-order HDG scheme for the linear degenerate elliptic equation 
\eqnref{DegenerateElliptic}.
In section \secref{ScaledSystem},
 we briefly discuss the scaled system for \eqnref{DegenerateElliptic}. 
In section \secref{HDGformulation},
 we derive the HDG formulation for the scaled system based on 
 the upwind HDG framework.
 The key feature is that we have modified the upwind HDG flux to accommodate the degenerate regions. 
 When the porosity vanishes, the resulting HDG system becomes ill-posed
 because the upwind parameter associated with the HDG flux disappears. 
 To overcome the difficulty, we introduce a generalized stabilization parameter
 that is an extension of the upwind based stabilization parameter. It has positive values on the degenerate interfaces.
 Next, we show the well-posedness and error analysis of the HDG system 
 under the assumption that the grid well matches with the intersection between the fluid melt and the solid matrix.
 In section \secref{NumericalResults},
 various numerical results for the scaled system will be presented to confirm the accuracy and robustness 
 of the proposed HDG scheme. Finally, we
 conclude the paper and discuss future research directions in section \secref{Conclusion}.



\section{Handling the degeneracy}
\seclab{ScaledSystem}


Let $\LRp{\cdot,\cdot}_\Omega$ be the $L^2$ inner-product on $\Omega$,
and $\LRa{\cdot,\cdot}_{\pOmega}$ be the $L^2$ inner-product on
$\pOmega$.  We denote the $L^2$ norm by
$\norm{\cdot}_\Omega=\LRp{\cdot,\cdot}_\Omega^\half$ on $\Omega$ and
by $\norm{\cdot}_{\pOmega} = \LRa{\cdot,\cdot}_{\pOmega}^\half$ on
$\pOmega$.  We also define the weighted $L^2$ norm on $\pOmega$ by
$\norm{\cdot}_{\pOmega,\tau} = \LRa{|\tau|
  \cdot,\cdot}_{\pOmega}^\half = \LRp{\int_{\pOmega} |\tau| (\cdot)^2
  dx}^\half$. For any $s \ne 0$, we denote the $H^s\LRp{D}$-norm as $\nor{\cdot}_{s,D}$, for example, $\norm{\cdot}_{\half,\pOmega}$ is the norm of $H^\half\LRp{\pOmega}$.

\subsection{The scaled system}
When the porosity becomes zero, the system \eqnref{DegenerateElliptic} degenerates.
However, we can still investigate how the solutions behave as the porosity vanishes. 
According to \cite{arbogast2016linear}, 
a priori energy estimates for the system \eqnref{DegenerateElliptic} read as
\begin{align}
   \norm{ \d^{-1}\tilde{\vb} }_\Omega
   + \norm{ \phihalf \pres}_\Omega
   + \norm{\phimhalf \Div \tilde{\vb} }_\Omega
   & \le
   c
   \LRp{
     \norm{g_D}_{H^\half(\pOmega)}
    + \norm{\d \tilde{\gb} }_\Omega
    + \norm{f}_\Omega
   },
\end{align}
for some constant $c > 0$.
Note that we may lose control of the pressure $\pres$ 
as the porosity approaches zero. 
This is because 
the fluid pressure $\pres$ is not defined in the solid regions.

To have the control of the pressure, 
following \cite{arbogast2016linear}, we define the scaled velocity and the scaled pressure as
$\ub = \d^{-1} \tilde{\vb}$ and $\p = \phi^{\frac{1}{2}} \pres$, respectively. 
The system \eqnref{DegenerateElliptic} becomes
\begin{subequations}
  \eqnlab{ScaledDegenerateElliptic}
  \begin{IEEEeqnarray}{rCl}
    \ub + \d \Grad \LRp{ \phimhalf \p } &= \d \tilde{\gb}, & \quad \text{ in } \Omega,\\
    \phimhalf \Div \LRp{\d \ub } +  \p &= f,  & \quad \text{ in } \Omega,\\
    \eqnlab{ScaledDegenerateEllipticBC}
    \p  &= g_D, & \quad \text{ on } \Gamma_D.
   \end{IEEEeqnarray}
\end{subequations}
Here, we interpret the differential operators in \eqnref{ScaledDegenerateElliptic} as
\begin{align}
  \d \Grad (\phimhalf \p) 
    &= -\frac{1}{2} \phi^{-\frac{3}{2}} \d \Grad \phi \p 
    + \phimhalf \d \Grad \p,\\
  \phimhalf \Div (\d \ub) 
    &= \phimhalf \Grad \d \cdot \ub + \phimhalf \d \Div \ub,
\end{align}
where we assume that 
\begin{subequations}
  \eqnlab{porosityCoefAssumption}
\begin{align}
  \phi^{-\frac{1}{2}} \d & \in L^{\infty}(\Omega),\\
  \phi^{-\frac{1}{2}} \Grad \d & \in (L^{\infty}(\Omega))^{dim},\\
  \phi^{-\frac{3}{2}} \d \Grad \phi & \in (L^{\infty}(\Omega))^{dim}.
\end{align}
\end{subequations}
With the assumption \eqnref{porosityCoefAssumption}, 
the scaled system does not degenerate. 
If the porosity vanishes, then $d(\phi)=0$, which leads to $\ub=0$ and $\p=f$.
The energy estimates for the scaled system \eqnref{ScaledDegenerateElliptic} read as
\begin{align}
   \norm{ u}_\Omega
   + \norm{ \p}_\Omega
   + \norm{\phimhalf \Div \LRp{ \d \ub }  }_\Omega
   & \le
   c
   \LRp{
     \norm{g_D}_{H^\half(\pOmega)}
    + \norm{\d \tilde{\gb} }_\Omega
    + \norm{f}_\Omega
   },
\end{align}
for some constant $c > 0$ \cite{arbogast2016linear}.  
We clearly see that we have control of the scaled pressure $\p$ even when the porosity becomes zero.


\subsection{Upwind-based HDG flux}
With some simple manipulation,
 the scaled system \eqnref{ScaledDegenerateElliptic} can be rewritten as 
\begin{subequations}
  \eqnlab{ScaledDegenerateElliptic2}
  \begin{IEEEeqnarray}{rCl}
    \ub
    - \phimhalf \Grad \d \p
    + \Div \LRp{\phimhalf \d \p \mc{I} } 
    &= \d \tilde{\gb}, & \quad \text{ in } \Omega,\\
    \frac{1}{2} \phi^{-\frac{3}{2}} \d \Grad \phi \cdot \ub 
    + \p
    + \Div \LRp{\phimhalf \d \ub }
    &= f,  & \quad \text{ in } \Omega.
  \end{IEEEeqnarray}
\end{subequations}
We cast the scaled system \eqnref{ScaledDegenerateElliptic2} 
into the conservative form
\begin{equation}
\eqnlab{LPDE}
\Div \F(\rb) + \mc{G} \rb = \fb, \quad \text{ in } \Omega,
\end{equation}
where we have defined 
the scaled velocity $\ub:=\LRp{\u_1,\u_2,\u_3}$,  
the solution vector $\rb := \LRp{\u_1,\u_2,\u_3, \p}$,
the source vector $\fb := \LRp{\d\tilde{g}_1, \d \tilde{g}_2, \d \tilde{g}_3 ,\f}$,
the flux tensor 
\begin{align}
  \F := 
  \LRp{F_1, F_2, F_3} := \F\LRp{\rb} :=  
  \phimhalf \d
  \begin{pmatrix}
    \p & 0 & 0 \\
    0 & \p & 0 \\
    0 & 0 & \p \\
    u_1 & u_2 & u_3
  \end{pmatrix}
\end{align} and 
\begin{align}
  \mc{G} := 
  \begin{pmatrix}
    1 & 0 & 0 & -\phi^{-\half}\pp{ \d }{x} \\
    0 & 1 & 0 & -\phi^{-\half}\pp{ \d }{y} \\
    0 & 0 & 1 & -\phi^{-\half}\pp{ \d }{z} \\
    \frac{1}{2}\phi^{-\frac{3}{2}}   \d \dd{\phi}{x}
    & \frac{1}{2}\phi^{-\frac{3}{2}} \d \dd{\phi}{y}
    & \frac{1}{2}\phi^{-\frac{3}{2}} \d \dd{\phi}{z}
    & 1 \\
  \end{pmatrix}.
\end{align}

We define the normal vector $\nb:= \LRp{n_1,n_2,n_3}$ and the flux Jacobian 
\begin{align}
  \mc{A} 
  = \sum_{k=1}^{3} n_k \dd{F_k}{\rb} 
  = \phimhalf \d 
  \begin{pmatrix}
    0 & 0 & 0 & n_1 \\
    0 & 0 & 0 & n_2 \\
    0 & 0 & 0 & n_3 \\
   n_1& n_2& n_3 & 0 
  \end{pmatrix}, 
\end{align}
which has four eigenvalues $\LRp{-\phimhalf \d, 0, 0, \phimhalf \d}$ and distinct eigenvectors 
\begin{align}
  W_1=
  \begin{pmatrix}
    -n_1\\
    -n_2\\
    -n_3\\
      1
  \end{pmatrix},
  W_2=
  \begin{pmatrix}
    -n_2\\
     n_1\\
      0\\
      0
  \end{pmatrix},
  W_3=
  \begin{pmatrix}
    -n_3\\
      0\\
     n_1\\
      0
  \end{pmatrix},  
  \text{ and }
  W_4=
  \begin{pmatrix}
    n_1\\
    n_2\\
    n_3\\
     1
  \end{pmatrix}.
\end{align}

 The system \eqnref{LPDE} can be considered as a steady state hyperbolic system \cite{toro2013riemann}. 
Finally, following the upwind HDG framework in \cite{Bui-Thanh15} we can construct
the upwind HDG flux with scalar $\ph$ and vector $\ubh$ trace unknowns as 
\begin{equation}
\eqnlab{HDGflux}
\Fh (\hat{\rb})\cdot \nb = \phimhalf \d
  \begin{pmatrix}
    n_1 \ph\\
    n_2 \ph\\
    n_3 \ph\\
    \ubh \cdot \nb
  \end{pmatrix}
  := \F (\rb) \cdot \nb + |\mc{A}|(\rb-\hat{\rb}) 
  = \phimhalf \d 
  \begin{pmatrix}
    n_1 \LRp{\p + (\ub-\ubh)\cdot \nb} \\
    n_2 \LRp{\p + (\ub-\ubh)\cdot \nb} \\
    n_3 \LRp{\p + (\ub-\ubh)\cdot \nb} \\
    \ub \cdot \nb + (\p-\ph)
  \end{pmatrix},
\end{equation}
where $\hat{\rb}=\LRp{\hat{u}_1,\hat{u}_2,\hat{u}_3,\ph}$,
$|\mc{A}|:=W|D|W^{-1}$, $D$ is the diagonal matrix of eigenvalues of $W_1$, $W_2$, $W_3$ and $W_4$, and $W$
is the matrix of corresponding eigenvectors. 
Following \cite{Bui-Thanh15}, we can compute $\ubh\cdot\n$ as a function of $\ph$, and hence $\ubh\cdot\n$ can be eliminated.
The upwind HDG flux can be then written in terms of $\ph$ as
\begin{align}
  \Fh (\hat{\rb})\cdot \nb = \phimhalf \d
  \begin{pmatrix}
    n_1 \ph\\
    n_2 \ph\\
    n_3 \ph\\
    \ub \cdot \nb + (\p-\ph)
  \end{pmatrix}.
\end{align}

\section{HDG formulation}
\seclab{HDGformulation}
We denote by
$\Omega_h := \cup_{i=1}^\Ne \K_i$ the mesh containing a finite
collection of non-overlapping elements, $\K_i$, that partition
$\Omega$.  Here, $h$ is defined as $h := \max_{j\in
  \LRc{1,\hdots,\Ne}}diam\LRp{\Kj}$. Let $\pOmega_h := \LRc{\pK:\K
  \in \Omega_h}$ be the collection of the boundaries of all elements. Let
us define $\Gh:= \Gho \cup \Ghb$ as the skeleton of
the mesh which consists of the set of all uniquely defined faces/interfaces,
where $\Ghb$ is the set of all boundary faces on $\pOmega$, and
$\Gho=\Gh \setminus \Ghb$ is the set of all interior interfaces. For
two neighboring elements $\Kp$ and $\Km$ that share an interior
interface $\e = \Kp \cap \Km$, we denote by $q^\pm$ the trace of the
solutions on $\e$ from $\K^\pm$.  We define $\nm$ as the unit outward normal vector on
the boundary $\pK^-$ of element $\Km$, and $\np = -\nm$ the unit outward
normal of a neighboring element $\Kp$.  On the interior interfaces $\e
\in \Gho$, we define the mean/average operator $\averageM{\bf v}$,
 where $\vb $ is either a scalar or a vector quantify, as
$\averageM{{\bf v}}:=\LRp{{\bf v}^- + {\bf v}^+}/2$,
and the jump operator $\jump{\vb\cdot \nb}:= \vb^+\cdot \nb^+ + \vb^-\cdot \nb^-$ 
 On the boundary faces $\e \in \Ghb$, we define the mean and jump operators as
$\averageM{{\bf v}}:={\bf v}, \quad \jump{{\bf v}} :={\bf v}$.

Let ${\Poly^k} \LRp{D}$ denote the space of polynomials of degree at
most $k$ on a domain $D$. Next, we introduce discontinuous
piecewise polynomial spaces for scalars and vectors as
\begin{align*}
\Vh\LRp{\Omega_h} &:= \LRc{v \in L^2\LRp{\Omega_h}:
  \eval{v}_{\K} \in \Poly^k\LRp{\K}, \forall \K \in \Omega_h}, \\
\Lamh\LRp{\Gh} &:= \LRc{\lambda \in \Lte:
  \eval{\lambda}_{\e} \in \Poly^k\LRp{\e}, \forall \e \in \Gh},\\
\Vbh\LRp{\Omega_h} &:= \LRc{{\bf v} \in \LRs{L^2\LRp{\Omega_h}}^m:
  \eval{{\bf v}}_{\K} \in \LRs{\Poly^k\LRp{\K}}^m, \forall \K \in \Omega_h},\\
\Lambh\LRp{\Gh} &:= \LRc{\lambdab \in \LRs{\Lte}^m:
  \eval{\lambdab}_{\e} \in \LRs{\Poly^k\LRp{\e}}^m, \forall \e \in \Gh}.
\end{align*}
and similar spaces $\VhK$, $\Lamhe$, $\VbhK$, and $\Lambhe$ by
replacing $\Omega_h$ with $\K$ and $\Gh$ with $\e$. Here, $m$ is the
number of components of the vector under consideration.

We define the broken inner products as
$\LRp{\cdot,\cdot}_{\Omega_h} :=
\sum_{\K\in \Omega_h}\LRp{\cdot,\cdot}_\K$ and
$\LRa{\cdot,\cdot}_{\pOmega_h} :=
\sum_{\pK\in \pOmega_h}\LRa{\cdot,\cdot}_\pK$, and on the mesh
skeleton as $\LRa{\cdot,\cdot}_\Gh := \sum_{\e\in
  \Gh}\LRa{\cdot,\cdot}_\e$.
We also define the associated norms as
$\norm{\cdot}_\Omegah:= \LRp{ \sum_{K\in \Omegah} \norm{\cdot}_K^2 }^\half$, 
$\norm{\cdot}_\pOmegah:= \LRp{ \sum_{K\in \Omegah} \norm{\cdot}_{\pK}^2 }^\half$, and the weighted norm
$\norm{\cdot}_{\pOmegah,\tau}:= \LRp{ \sum_{K\in \Omegah} \norm{\cdot}_{\pK,\tau}^2 }^\half$ 
 (recall $\norm{\cdot}_{\pK,\tau}=\norm{|\tau|^\half (\cdot)}_{\pK}$).

\subsection{Weak form}
From now on, 
{\it we conventionally use $\ub^e$, $\p^e$ and $\ph^e$ for the exact solution while $\ub$, $\p$ and $\ph$ are used to denote the HDG solution.} 
Unlike the DG approach, in which $\ph$ on an interface is computed using
information from neighboring elements that share that interface, i.e.,
\begin{equation}
  \eqnlab{upwindQh}
  \ph = \half \jump{\ub\cdot\n} + \averageM{\p},
\end{equation}
the idea behind HDG is to treat $\ph$ as a new unknown.  Testing
\eqnref{ScaledDegenerateElliptic2} or \eqnref{LPDE} with 
$(\testv,\test)$
and integrating
by parts we obtain the local solver for each element by replacing the flux $\LRa{\F\cdot\nb,(\vb,\q)}_\pK$ with the HDG numerical flux $\LRa{\Fh\cdot\nb,(\vb,\q)}_\pK$. The local solver reads: find
$\LRp{\ub,\p,\ph}  \in \Vbh(\K) \times \Vh(\K) \times \Lamh(\pK)$
 such that
\begin{subequations}
  \eqnlab{LocalSolver}
  \begin{align}
    \LRp{\ub,\testv}_\K
    - \LRp{\phimhalf \Grad \d  \p, \testv}_\K
    - \LRp{ \phimhalf \d \p ,\Div \testv }_\K
    + \LRa{\phimhalf \d \ph, \testv \cdot \nb}_\pK 
    = \LRp{ \d \tilde{\gb},\testv}_\K,\\ 
    \LRp{\p,\test}_\K 
    + \LRp{\frac{1}{2} \phi^{-\frac{3}{2}} \d \Grad \phi \cdot \ub, \test }_\K
    - \LRp{\phimhalf \d  \ub,\Grad \test }_\K
    + \LRa{\phimhalf \d  \LRp{ \ub \cdot \nb + (\p - \ph)},\test}_\pK 
    = \LRp{f,\test}_K,
  \end{align}
\end{subequations}
for all $(\testv,\test) \in \Vbh(\K) \times \Vh(\K)$.

Clearly we need an additional equation to close the system since we
have introduced an additional trace unknown  $\ph$. The natural condition
is the conservation, that is, the continuity of the HDG flux. For the HDG method to be conservative, it is  sufficient to 
weakly enforce the continuity of the last component of the HDG flux
\eqnref{HDGflux} on each face $e$ of the mesh skeleton, i.e.,
\begin{equation}
\eqnlab{GlobalSolver}
  \LRa{\jump{\phimhalf \d \ub\cdot\nb + \phimhalf \d \LRp{\p-\ph}},\testmu}_\e = 0, \quad \forall \e \in \Gho.
\end{equation}

On degenerate faces, where $\phi = 0$, the conservation condition \eqnref{GlobalSolver} is
trivially satisfied. These faces would need to be sorted out and removed from the system. 
However, this creates implementation issues. 
To avoid this, we introduce a more general
HDG flux
\begin{equation}
\eqnlab{HDGfluxGeneral}
\Fh\cdot \nb := 
  \begin{pmatrix}
    n_1 \phimhalf \d \ph \\
    n_2 \phimhalf \d \ph \\
    n_3 \phimhalf \d \ph \\
    \phimhalf \d \ub \cdot \nb + \tau(\p-\ph)
  \end{pmatrix},
\end{equation}
where $\tau$ is a positive function on the edge. For example, we can take $\tau =
\phimhalf \d$ for non-degenerate faces (i.e., faces with $\phi > 0$),
and for degenerate ones (i.e. faces with $\phi=0$) we take
$\tau=\gamma > 0$. Alternatively, we can take a single value $\tau =
\mc{O}\LRp{1/\h}$ over the entire mesh skeleton. We shall compare these choices in Section \secref{NumericalResults}.
With this HDG flux, the conservation condition \eqnref{GlobalSolver} becomes
\begin{equation}
\eqnlab{GeneralGlobalSolver}
  \LRa{\jump{\phimhalf \d \ub\cdot\nb + \tau \LRp{\p-\ph}},\testmu}_\e = 0, \quad \forall \e \in \Gho, \quad \forall \testmu \in \Lamh(\e).
\end{equation}
On the Dirichlet boundary $\Gamma_D$,
 we impose the boundary data $g_D$ to $\ph$ through the weak form of 
\begin{equation}
\eqnlab{GlobalSolverBC}
  \LRa{\tau\ph,\testmu}_{\Gamma_D} = \LRa{\tau g_D,\testmu}_{\Gamma_D}, \quad \forall \testmu \in \Lamh(\Gamma_D).
\end{equation}
With the general HDG flux \eqnref{HDGfluxGeneral} and Dirichlet boundary condition \eqnref{ScaledDegenerateEllipticBC},
the local equation \eqnref{LocalSolver} 
 now becomes 
\begin{subequations}
  \eqnlab{GeneralLocalSolver}
  \begin{align}
    \LRp{\ub,\testv}_\K
    - \LRp{\phimhalf \Grad \d \p, \testv}_\K
    - \LRp{ \phimhalf \d \p ,\Div \testv }_\K
    + \LRa{\phimhalf \d \ph, \testv \cdot \nb}_{\pK \setminus \Gamma_D} \nonumber \\
    + \LRa{\phimhalf \d g_D, \testv \cdot \nb}_{\pK \cap \Gamma_D}
    = \LRp{\d \tilde{\gb},\testv}_\K,\\ 
    \LRp{\p,\test}_\K 
    + \LRp{\frac{1}{2} \phi^{-\frac{3}{2}} \d  \Grad \phi \cdot \ub, \test }_\K
    - \LRp{\phimhalf \d  \ub,\Grad \test }_\K
    + \LRa{\phimhalf \d  \ub \cdot \nb + \tau(\p - \ph),\test}_{\pK \setminus \Gamma_D} \nonumber \\
    + \LRa{\phimhalf \d  \ub \cdot \nb + \tau(\p - g_D),\test}_{\pK \cap \Gamma_D}
    = \LRp{f,\test}_K.
  \end{align}
\end{subequations}

The HDG comprises the local solver \eqnref{GeneralLocalSolver}, the global equation
\eqnref{GeneralGlobalSolver} and the boundary condition \eqnref{GlobalSolverBC}. By summing \eqnref{GeneralLocalSolver} over all elements and \eqnref{GeneralGlobalSolver} over the mesh skeleton, we obtain the complete HDG system
with the weakly imposed Dirichlet boundary condition \eqnref{GlobalSolverBC}:
find $\LRp{\ub,\p,\ph} \in \Vbh(\Omegah) \times \Vh(\Omegah) \times \Lamh(\Gh)$ such that 
\begin{subequations}
  \eqnlab{HDGsystem}
  \begin{align}
    \eqnlab{HDGlocal1}
    \LRp{\ub,\testv}_\Omegah
    - \LRp{\phimhalf \Grad \d \p, \testv}_\Omegah
    - \LRp{ \phimhalf \d \p , \Div \testv }_\Omegah
    + \LRa{\phimhalf \d \ph, \testv \cdot \nb}_{\pOmegah \setminus \Gamma_D} \nonumber\\
    = \LRp{ \d \tilde{\gb},\testv}_\Omegah 
    - \LRa{g_D,\phimhalf \d  \testv \cdot \nb}_{\Gamma_D},\\
    \eqnlab{HDGlocal2}
    \LRp{\p,\test}_\Omegah 
    + \LRp{\frac{1}{2} \phi^{-\frac{3}{2}} \d  \Grad \phi \cdot \ub, \test }_\Omegah
    - \LRp{\phimhalf \d  \ub,\Grad \test }_\Omegah
    + \LRa{\phimhalf \d  \ub\cdot\nb + \tau \LRp{\p-\ph},\test }_{\pOmegah \setminus \Gamma_D} \nonumber \\
    + \LRa{\phimhalf \d  \ub\cdot\nb + \tau \p,\test }_{ \Gamma_D} 
    = \LRp{f,\test}_\Omegah
    + \LRa{ \tau  g_D,\test }_{\Gamma_D},\\
    \eqnlab{HDGglobal}
    - \LRa{\jump{\phimhalf \d \ub\cdot\nb + \tau \LRp{\p-\ph}},\testmu}
    _{\Gh \setminus \Gamma_D}
    + \LRa{\tau \ph,\testmu}_{\Gamma_D}
    = \LRa{ \tau g_D,\testmu}_{\Gamma_D},
  \end{align}
\end{subequations}
for all $(\testv,\test,\testmu) \in \Vbh(\Omegah) \times \Vh(\Omegah)
\times \Lamh(\Gh)$. Note that this form resembles the weak Galerkin
framework 
\cite{wang2013weak, WangYe14, zhai2015hybridized,mu2014new}.
Indeed, HDG and the weak Galerkin method are equivalent in this case.

The HDG computation consists of three steps: first,
solve the local solver for $\LRp{\ub,\p}$ as a function of $\ph$
element-by-element, completely independent of each other; second,
substitute $\LRp{\ub,\p}$ into the global equation \eqnref{GeneralGlobalSolver} to
solve for $\ph$ on the mesh skeleton; and finally recover the local
volume unknown $\LRp{\ub,\p}$ in parallel.

\subsection{Well-posedness}
Let us denote the bilinear form on the left hand side of
\eqnref{HDGsystem} as $a\LRp{\LRp{\ub,\p, \ph}; \LRp{\vb,\q,\mu}}$ and the linear form on right hand side as
$\ell\LRp{\LRp{\vb,\q,\mu}}$. We begin with an energy estimate for the HDG solution.
\begin{proposition}[Discrete energy estimate]
  \propolab{energyEstimate}
Suppose $\gD \in {L^2(\Gamma_D)}$, $\f \in L^2\LRp{\Omegah}$, and
$d(\phi) \tilde{\gb} \in L^2\LRp{\Omegah}$.
If $\tau=\mc{O}(1/h)$, then it holds that 
\begin{align}
a\LRp{\LRp{\ub,\p, \ph}; \LRp{\ub,\p, \ph}} 
      &= \nor{\ub}_\Omegah^2 + \nor{\p}_\Omegah^2 + \nor{\ph}^2_{\gamD,\tau} + \nor{\p}^2_{\gamD,\tau} + \nor{\p - \ph}_{\pOmegah\setminus\gamD,\tau}^2 \\
      & \le c\LRp{\nor{\gD}^2_{\gamD,\tau} + \nor{\tilde{\gb}}^2_\Omegah + \nor{\f}^2_\Omegah},
\end{align}
for some positive constant $c = c\LRp{\phi,d,\tau, h, k}$. In
particular, there is a unique solution $\LRp{\ub,\p,\ph}$ to the HDG system \eqnref{HDGsystem}.
\end{proposition}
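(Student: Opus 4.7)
The plan is to test the HDG system \eqnref{HDGsystem} against the discrete solution itself, setting $\LRp{\testv,\test,\testmu} = \LRp{\ub,\p,\ph}$, sum the three equations, and perform elementwise integration by parts on the term $-\LRp{\phimhalf \d \p, \Div \ub}_\Omegah$ from \eqnref{HDGlocal1}. Expanding $\Grad\LRp{\phimhalf \d \p}$ by the product rule produces three bulk contributions; I expect that two of them cancel exactly against the symmetrization terms $-\LRp{\phimhalf \Grad \d \,\p, \ub}_\Omegah$ and $\frac{1}{2}\LRp{\phi^{-\frac{3}{2}} \d \Grad \phi \cdot \ub, \p}_\Omegah$ in \eqnref{HDGlocal1}--\eqnref{HDGlocal2}, while the third, $\LRp{\phimhalf \d \Grad \p, \ub}_\Omegah$, cancels $-\LRp{\phimhalf \d \ub, \Grad \p}_\Omegah$ in \eqnref{HDGlocal2}. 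What survives from the bulk is exactly $\nor{\ub}_\Omegah^2 + \nor{\p}_\Omegah^2$, together with the boundary trace $-\LRa{\phimhalf \d \p, \ub \cdot \nb}_{\pOmegah}$ generated by the IBP.

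Next I would consolidate the face and Dirichlet-boundary terms. Since $\ph$ is single-valued on each face, \eqnref{HDGglobal} rewrites as $-\LRa{\phimhalf \d \ub\cdot\nb + \tau(\p-\ph), \ph}_{\pOmegah\setminus\Gamma_D}$. On $\pOmegah\setminus\Gamma_D$, summing the IBP trace, the interior boundary terms from \eqnref{HDGlocal2}, and this rewritten conservation term, the $\phimhalf \d \ub\cdot\nb$ contributions telescope and the stabilization terms collapse to $\LRa{\tau(\p-\ph),\p-\ph} = \nor{\p-\ph}_{\pOmegah\setminus\Gamma_D,\tau}^2$. On $\Gamma_D$, analogous bookkeeping shows that $-\LRa{\phimhalf \d \p, \ub\cdot\nb}_{\Gamma_D}$ cancels $\LRa{\phimhalf \d \ub\cdot\nb, \p}_{\Gamma_D}$ from \eqnref{HDGlocal2}, leaving $\nor{\p}_{\Gamma_D,\tau}^2 + \nor{\ph}_{\Gamma_D,\tau}^2$. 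This establishes the asserted identity for $a\LRp{\LRp{\ub,\p,\ph};\LRp{\ub,\p,\ph}}$.

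For the upper bound on $\ell\LRp{\LRp{\ub,\p,\ph}}$, the data terms $\LRp{\d\tilde{\gb},\ub}_\Omegah$, $\LRp{f,\p}_\Omegah$, $\LRa{\tau g_D, \p}_{\Gamma_D}$, and $\LRa{\tau g_D, \ph}_{\Gamma_D}$ yield to Cauchy-Schwarz and Young's inequality, using $\phimhalf \d \in L^\infty$ from \eqnref{porosityCoefAssumption} to control the coefficients. The main obstacle will be the remaining boundary term $-\LRa{g_D, \phimhalf \d\, \ub\cdot\nb}_{\Gamma_D}$, since $\ub\cdot\nb$ is not directly controlled by the LHS norms. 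The remedy I have in mind is a discrete trace inequality $\nor{\ub\cdot\nb}_{\Gamma_D} \le C(k) h^{-\half} \nor{\ub}_\Omegah$ combined with the hypothesis $\tau = \mc{O}(1/h)$, which also yields $\nor{g_D}_{\Gamma_D} \le C h^{\half} \nor{g_D}_{\Gamma_D,\tau}$; the two powers of $h$ cancel, and Young's inequality absorbs the resulting $\nor{\ub}_\Omegah$ factor into the LHS at the cost of a $c\LRp{\phi,\d,\tau,h,k} \nor{g_D}_{\Gamma_D,\tau}^2$ contribution. Well-posedness is then immediate: \eqnref{HDGsystem} is a square linear system, and setting $g_D = f = \tilde{\gb} = 0$ in the energy identity forces $\ub = 0$ and $\p = 0$ in $\Omegah$, $\ph = 0$ on $\Gamma_D$, and $\p = \ph$ on $\pOmegah\setminus\Gamma_D$, so $\ph$ vanishes on the entire skeleton and the kernel is trivial.
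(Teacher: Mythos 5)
Your proposal is correct and takes essentially the same route as the paper: your elementwise integration by parts and bulk cancellations are precisely the identities \eqnref{identities} used there, and your handling of the $-\LRa{g_D,\phimhalf \d\, \ub\cdot\nb}_{\gamD}$ term via the discrete trace inequality together with $\tau=\mc{O}(1/h)$ is the same absorption argument the paper performs with Cauchy--Schwarz, Young, and the inverse trace inequality. The concluding uniqueness/existence argument for the square linear system also matches the paper's.
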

\begin{proof}
  We start with the following identities
    \begin{subequations}
    \eqnlab{identities}
    \begin{align}
      \eqnlab{identitiesU}
      - \LRp{ \phimhalf \Grad \d \p, \vb}_\K
      - \LRp{ \phimhalf \d \p ,\Div \vb }_\K = 
      - \LRp{\p,\phimhalf \Div \LRp{\d \vb } }_\K,\\
      \eqnlab{identitiesP}
      \LRp{\frac{1}{2}\phi^{-\frac{3}{2}} \d \Grad \phi \cdot \ub,q}_\K 
      - \LRp{ \phimhalf \d \ub, \Grad q}_K
      = 
      \LRp{ \phimhalf \Div \LRp{\d \ub }, q }_\K
      - \LRa{ \phimhalf \d \ub\cdot \nb, q}_\pK.
    \end{align}
    \end{subequations}

Now taking $\vb = \ub$, $\q = \p$, and $\qh = \ph$ in
\eqnref{HDGsystem} and \eqnref{identities}, and then adding all equations in \eqnref{HDGsystem} gives
\begin{align*}
  a\LRp{\LRp{\ub,\p, \ph}; \LRp{\ub,\p, \ph}} &= \nor{\ub}_\Omegah^2 +
  \nor{\p}_\Omegah^2 + \nor{\ph}^2_{\gamD,\tau} + \nor{\p}^2_{\gamD,\tau} + \nor{\p -
    \ph}_{\pOmegah\setminus\gamD,\tau}^2 = \\
  &-\LRa{\gD, \phimhalf
    d\ub\cdot\n}_\gamD - \LRa{\tau\gD,\p}_\gamD + \LRa{\tau\gD,\ph}_\gamD +
  \LRp{d\tilde{\gb},\ub}_\Omegah + \LRp{\f,\p}_\Omegah,
\end{align*}
which, after invoking the Cauchy-Schwarz and Young inequalities, becomes
\begin{multline*}
a\LRp{\LRp{\ub,\p, \ph}; \LRp{\ub,\p, \ph}} \le
\frac{\nor{\phimhalf d}_\infty}{2\varepsilon_1} \nor{\gD}_{\gamD,\tau}^2 +  \frac{\varepsilon_1}{2}
\nor{\ub}_{\gamD,\tau^{-1}}^2 \\+ \frac{1}{2\varepsilon_2} \nor{\gD}_{\gamD,\tau}^2 + \frac{\varepsilon_2}{2} \nor{\p-\ph}_{\gamD,\tau}^2
+\frac{1}{2\varepsilon_3} \nor{d\tilde{\gb}}_{\Omegah}^2+ \frac{\varepsilon_3}{2} \nor{\ub}_{\Omegah}^2 +
\frac{1}{2\varepsilon_4} \nor{\f}_{\Omegah}^2+ \frac{\varepsilon_4}{2} \nor{\p}_{\Omegah}^2,
\end{multline*}
which yields the desired energy estimate after applying an inverse trace inequality (c.f. Lemma \eqnref{lemmaA1}) for the second term on right hand side and choosing sufficiently small values for $\varepsilon_1, \varepsilon_2, \varepsilon_3$ and $ \varepsilon_4$.
\end{proof}

Since the HDG system \eqnref{HDGsystem} is linear and square in terms
of the HDG variables $\LRp{\ub,\p,\ph}$, the uniqueness result in
Proposition \proporef{energyEstimate} implies existence and stability, and
hence the well-posedness of the HDG system. 

  \begin{lemma}[Consistency]
    Suppose $\LRp{\ub^e,\p^e}$ is a weak solution of
    \eqnref{ScaledDegenerateElliptic}, which is sufficiently
    regular. Then $\LRp{\ub^e,\p^e, \eval{\p^e}_{\Gh}}$ satisfies the
    HDG formulation \eqnref{HDGsystem}. In particular, the Galerkin orthogonality holds, i.e.,
    \begin{equation}
     \eqnlab{GalerkinOrth}
    a\LRp{\LRp{\ub^e-\ub,\p^e - \p, \eval{\p^e}_{\Gh} - \ph}; \LRp{\vb,\q,\mu}} = 0, \quad \forall (\testv,\test,\testmu) \in \Vbh(\Omegah) \times \Vh(\Omegah) \times \Lamh(\Gh).
    \end{equation}
  \end{lemma}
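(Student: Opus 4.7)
The plan is to substitute the exact solution triple $\LRp{\ub^e,\p^e,\eval{\p^e}_{\Gh}}$ into each of the three equations \eqnref{HDGlocal1}, \eqnref{HDGlocal2}, \eqnref{HDGglobal} and verify they hold identically; Galerkin orthogonality \eqnref{GalerkinOrth} then follows by the linearity of $a(\cdot;\cdot)$ after subtracting the HDG equations for $\LRp{\ub,\p,\ph}$ from those just verified for the exact solution. Sufficient regularity means, concretely, that $\p^e$ has a well-defined single-valued trace on every face of $\Gh$ (so that $\eval{\p^e}_{\Gh} \in \Lamh(\Gh)$-compatible in the sense of traces) and that $d\,\ub^e$ has a continuous normal component across interior faces, i.e. $\jump{\phimhalf d \ub^e \cdot \nb} = 0$ on $\Gh \setminus \Gamma_D$; both follow from the weak formulation of \eqnref{ScaledDegenerateElliptic}.

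For \eqnref{HDGlocal1}, I would start from the first equation in \eqnref{ScaledDegenerateElliptic2} multiplied by $\vb$ and integrated over each $\K$, then apply the identity \eqnref{identitiesU} element by element to move the derivatives off the test function. Summing over $\K \in \Omegah$ produces interior face terms involving $\phimhalf d \p^e$ weighted by $\vb\cdot\nb$; since $\p^e$ is single-valued on $\Gh$, those interior boundary terms collapse to $\LRa{\phimhalf d \eval{\p^e}_{\Gh}, \vb\cdot\nb}_{\pOmegah \setminus \Gamma_D}$, exactly matching the HDG term. On $\Gamma_D$, the boundary condition \eqnref{ScaledDegenerateEllipticBC} gives $\p^e = g_D$, reproducing the right-hand side boundary term. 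Equation \eqnref{HDGlocal2} is handled analogously using \eqnref{identitiesP}; the crucial observation is that the penalty term $\tau(\p^e - \eval{\p^e}_{\Gh})$ vanishes identically on $\pOmegah \setminus \Gamma_D$ because $\p^e$ is single-valued there, and on $\Gamma_D$ it reduces to $\tau(\p^e - g_D) = 0$, which precisely matches the boundary term $\LRa{\tau g_D, \q}_{\Gamma_D}$ after transferring to the right-hand side.

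For the global equation \eqnref{HDGglobal}, both terms in the jump $\jump{\phimhalf d \ub^e\cdot\nb + \tau(\p^e - \eval{\p^e}_{\Gh})}$ vanish on interior faces: the first by continuity of the normal flux of $d\,\ub^e$, the second because $\p^e = \eval{\p^e}_{\Gh}$ pointwise on $\Gh$. On $\Gamma_D$, the identity $\eval{\p^e}_{\Gamma_D} = g_D$ trivially satisfies $\LRa{\tau \eval{\p^e}_{\Gh}, \mu}_{\Gamma_D} = \LRa{\tau g_D, \mu}_{\Gamma_D}$. Therefore all three HDG equations hold with the exact triple, and by linearity of $a$ I obtain \eqnref{GalerkinOrth} by subtraction.

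I do not expect a serious obstacle here beyond careful bookkeeping: the slightly delicate point is making the regularity hypothesis precise enough to justify (i) that $\eval{\p^e}_{\Gh}$ is a legitimate element of (or at least densely approximated within) $\Lamh(\Gh)$ so it can serve as a trace unknown in the HDG sense, and (ii) that the normal flux of $d\,\ub^e$ is continuous across interior faces so the jump in \eqnref{HDGglobal} vanishes. Both are standard consequences of $\ub^e \in H(\Div;\Omega)$ and $\p^e \in H^1(\Omega)$, which the weak formulation of \eqnref{ScaledDegenerateElliptic} provides under the coefficient assumptions \eqnref{porosityCoefAssumption}.
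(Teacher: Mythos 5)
Your proposal is correct and is exactly the argument the paper has in mind: the paper omits the proof as "a simple application of integration by parts," and your element-wise integration by parts, use of the single-valuedness of $\p^e$ on $\Gh$ and of the normal-flux continuity of the exact velocity, followed by subtraction and linearity of $a$, is that standard argument carried out in full.
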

    The proof is a simple application of integration by parts and hence omitted.

  \subsection{Error analysis}
\seclab{errorAnalysis}
We restrict the analyis for simplicial meshes and adopt the projection-based error analysis
  in \cite{CockburnGopalakrishnanSayas10}.  To begin,
  we define $\phe$ as the trace of $\pe$. For any element $\K$, $e \in \Gh$, $e \subset \pK$,
  we denote by $\P \LRp{\ube, \pe, \phe} := \LRp{\Prb\ube, \Pr\pe,
    \Pi\phe}$, where $\Pi$ is the standard $L^2$-projection, a
  collective projection of the exact solution.
  Let us define
  \begin{align}
    \eub^I &:= \ube - \Prb\ube, \quad \eub^h := \ub - \Prb\ube, \\
    \ep^I &:= \pe - \Pr\pe, \quad \ep^h := \p  - \Pr\pe, \\
    \eph^I &:= \phe - \Pi\phe, \quad \eph^h := \ph - \Pi\phe,
  \end{align}
  and then
  the projections
  $\Pr\ube$ and $\Pr\pe$ are defined by
   \begin{subequations}
    \eqnlab{projection}
    \begin{align}
      \eqnlab{proju}
      (\eub^I, \vb)_\K &= 0, \quad  \vb \in \LRs{\poly{k-1}(\K)}^{dim}, \\
      \eqnlab{projp}
      (\ep^I, \q)_\K &= 0, \quad \q \in \poly{k-1}(\K), \\
      \eqnlab{projup}
      \LRa{ {\alpha\eub^I \cdot \n + \tau \ep^I} , \qh }_e &= 0, \quad \qh \in \poly{k}(e),
      \end{align}
   \end{subequations}
for each $\K \in \Omegah$, $e \in \Gh$ and $e \subset \pK$.
Here
$\alpha$, to be defined below, is a  positive constant on each face $\e$ of element $\K$.
\begin{lemma}
\lemlab{eProjection}
  Let $\tau_\K := \tau/\alpha$.  The projections $\Prb\ube$ and $\Pr\pe$ are well-defined, and 
  \begin{align*}
    \nor{\eub^I }_{\K} + h\nor{\eub^I }_{1,\K} &\le ch^{k +1} \nor{\ube }_{k+1, \K} + c h^{k +1}\tau_\K^* \nor{\pe }_{k +1, \K} , \\
\nor{ \ep^I }_{\K} + h\nor{ \ep^I }_{1,\K}&\le c\frac{h^{k +1}}{\tau_\K^{\max}} \nor{ \Div \ube }_{k, \K} + c h^{k +1} \nor{\pe}_{k +1, \K},
  \end{align*}
    where $\tau_\K^{\max}:= \max\eval{\tau_\K}_{\pK}$ and $\tau_\K^*:= \eval{\tau_\K}_{\pK \setminus \e^*}$, where $\e^*$ is the edge on which $\tau_\K$ is maximum.
\end{lemma}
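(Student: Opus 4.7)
The plan is to follow the projection-based error analysis of Cockburn, Gopalakrishnan and Sayas cited earlier in the section, splitting the argument into well-posedness of the local projection system \eqnref{projection} and sharp approximation estimates.

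For well-posedness I would first observe that \eqnref{projection} is a square linear system on $\LRs{\Poly^{k}\LRp{\K}}^{dim} \times \Poly^{k}\LRp{\K}$ by a standard dimension count (in 2D both sides have dimension $3(k+1)(k+2)/2$, with the analogous identity in 3D), so existence reduces to uniqueness. To prove uniqueness, set $\ube = 0$, $\pe = 0$ and aim for $\Prb\ube = 0$, $\Pr\pe = 0$. The central identity is integration by parts on $\LRp{\Prb\ube, \Grad q}_\K$ for $q \in \Poly^{k}\LRp{\K}$: since $\Grad q \in \LRs{\Poly^{k-1}\LRp{\K}}^{dim}$, condition \eqnref{proju} annihilates the left-hand side and produces the identity $\LRp{\Div \Prb\ube, q}_\K = \LRa{\Prb\ube\cdot\n, q}_\pK$ for all $q \in \Poly^{k}\LRp{\K}$. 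Taking $q = \Pr\pe$ makes the left-hand side vanish via \eqnref{projp}; testing \eqnref{projup} with $\qh = \Pr\pe|_e$ and summing over the faces of $\pK$ then forces $\nor{\Pr\pe}_{\pK,\tau_\K}^{2}=0$, which forces $\Pr\pe|_\pK = 0$ since $\tau_\K^{\max} > 0$. Re-using \eqnref{projup} with $\qh = \Prb\ube\cdot\n$ kills the normal trace on $\pK$, and the volume orthogonalities conclude the argument.

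For the error estimates I would use an auxiliary projection. Let $\Pi_k$ and $P_k$ be the standard $L^2$-projections onto $\LRs{\Poly^{k}\LRp{\K}}^{dim}$ and $\Poly^{k}\LRp{\K}$, and set $\delta\ub := \Pi_k\ube - \Prb\ube$, $\delta\p := P_k\pe - \Pr\pe$. Subtracting \eqnref{projection} from the $L^2$-orthogonality of $\Pi_k$ and $P_k$ shows that $(\delta\ub, \delta\p)$ satisfies the \emph{homogeneous} volume conditions \eqnref{proju}--\eqnref{projp}, with \eqnref{projup} driven by the boundary functional $\LRa{\alpha(\Pi_k\ube - \ube)\cdot\n + \tau(P_k\pe - \pe), \qh}_e$. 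Hence $(\delta\ub,\delta\p)$ is a bounded linear function of the easily controlled boundary residuals of the $L^2$-projection. A mapping to the reference element, norm equivalence on the finite-dimensional polynomial space, and the trace estimates $\nor{(\Pi_k\ube-\ube)\cdot\n}_{L^2(e)} \le c h^{k+1/2}\nor{\ube}_{k+1,\K}$ and $\nor{P_k\pe-\pe}_{L^2(e)} \le c h^{k+1/2}\nor{\pe}_{k+1,\K}$ then bound $\nor{\delta\ub}_\K$ and $\nor{\delta\p}_\K$. Combining with $\nor{\ube-\Pi_k\ube}_\K \le c h^{k+1}\nor{\ube}_{k+1,\K}$ and an inverse inequality for the $h\nor{\cdot}_{1,\K}$ terms yields the stated bounds.

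The hard part is the asymmetric $\tau$-bookkeeping that produces $\tau_\K^*$ and $1/\tau_\K^{\max}$ in different places. The special edge $\e^*$ must be isolated: on $\e^*$ equation \eqnref{projup} is solved for $\delta\p$, producing the $1/\tau_\K^{\max}$ factor, and the divergence identity $\LRp{\Div\Prb\ube, q}_\K = \LRa{\Prb\ube\cdot\n,q}_\pK$ obtained in the well-posedness step is what replaces $\nor{\pe}_{k+1,\K}$ by the sharper $\nor{\Div\ube}_{k,\K}$ in the $\ep^I$ bound; on the remaining faces $\delta\ub\cdot\n$ inherits only a $\tau_\K^*$-weighted pressure residual. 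A naive split would introduce $\tau_\K^{\max}$ into both estimates and lose the sharpness needed for the global error analysis that follows.
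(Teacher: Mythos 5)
Your plan is essentially the paper's own route: the paper simply cites Cockburn--Gopalakrishnan--Sayas for this lemma, and after dividing \eqnref{projup} by the elementwise constant $\alpha$ (so that $\tau_\K=\tau/\alpha$ plays the role of the stabilization there), your square-system/unisolvence argument for well-posedness and the comparison with the $L^2$-projections driven by boundary residuals, with the face $\e^*$ isolated to get $1/\tau_\K^{\max}$ in the pressure bound and only $\tau_\K^*$ in the velocity bound, is exactly the structure of that cited proof. When writing it out, make explicit that concluding $\Prb\ube=0$ from a vanishing normal trace plus orthogonality to $\LRs{\Poly^{k-1}(\K)}^{dim}$ is the BDM-type unisolvence on simplices, and that your step forcing $\Pr\pe|_{\pK}=0$ uses $\tau>0$ on every face (valid here, since $\tau$ is assumed positive), not merely $\tau_\K^{\max}>0$.
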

  The proof can be obtained from \cite{CockburnGopalakrishnanSayas10}.

Since the interpolation errors $\eub^I, \ep^I$ and $\eph^I$ have
optimal convergence order, by the triangle inequality, the convergent
rates of the total errors $\eub = \eub^I + \eub^h, \ep = \ep^I +
\ep^h$, and $\eph = \eph^I + \eph^h$ are determined by those of the
discretization errors $\eub^h, \ep^h$ and $\eph^h$. We use an
energy approach to estimate the discretization errors. To begin, let us define
\[
\mc{E}^2_h := \nor{\eub^h}_\Omegah^2 + \nor{\ep^h}_\Omegah^2 + \nor{\eph^h}^2_{\gamD,\tau} + \nor{\ep^h}^2_{\gamD,\tau} + \nor{\ep^h - \eph^h}_{\pOmegah\setminus\gamD,\tau}^2.
\]

\begin{lemma}[Error equation]
\lemlab{ErrorEqn}
  It holds that 
  \begin{multline}
  \mc{E}^2_h = \underbrace{-\LRp{\phimhalf\ep^I,\Div\LRp{\d\eub^h}}_\Omegah}_{A} 
  + \underbrace{\LRa{\phimhalf\d\eph^I,\eub^h\cdot\n}_{\pOmegah\setminus\gamD}}_{B} \underbrace{-\LRp{\d\eub^I,\Grad\LRp{\phimhalf\ep^h}}_\Omegah}_C \\+
  \underbrace{\LRa{\LRp{\phimhalf\d - \alpha}\eub^I\cdot\n,\ep^h-\eph^h}_{\pOmegah\setminus\gamD}
  +\LRa{\LRp{\phimhalf\d - \alpha}\eub^I\cdot\n,\ep^h}_{\gamD}}_D
  \end{multline}
  \end{lemma}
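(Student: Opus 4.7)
The plan combines the Galerkin orthogonality from the preceding lemma with the discrete energy identity that underlies Proposition~\proporef{energyEstimate}, and then exploits the coupled HDG projection~\eqnref{projection} on the skeleton. Decomposing each error as $e = e^I - e^h$ and testing~\eqnref{GalerkinOrth} with $(\vb,\test,\testmu) = (\eub^h,\ep^h,\eph^h)$ gives
\[
a\LRp{\LRp{\eub^h,\ep^h,\eph^h};\LRp{\eub^h,\ep^h,\eph^h}} \;=\; a\LRp{\LRp{\eub^I,\ep^I,\eph^I};\LRp{\eub^h,\ep^h,\eph^h}}.
\]
The left-hand side collapses to $\mc{E}^2_h$ by exactly the computation used in the proof of Proposition~\proporef{energyEstimate}: the identities~\eqnref{identities} supply the cross integration-by-parts cancellations, the flux pairings $\LRa{\phimhalf\d\,\eph^h,\eub^h\cdot\nb}$ and $\LRa{\phimhalf\d\,\eub^h\cdot\nb,\eph^h}$ annihilate, and the $\tau$-terms assemble into the weighted boundary norms. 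The whole task reduces to unpacking the right-hand side.

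To extract $A$, $B$, $C$, I first apply identity~\eqnref{identitiesU} to~\eqnref{HDGlocal1} with $p=\ep^I$, $v=\eub^h$: the two $\ep^I$ volume contributions fuse into $-\LRp{\ep^I,\phimhalf\Div(\d\eub^h)}_\Omegah = A$, while the surviving $\eph^I$-boundary integral is $B$ exactly. For~\eqnref{HDGlocal2} with $u=\eub^I$, $q=\ep^h$, the pair $\LRp{\tfrac12\phi^{-\frac32}\d\Grad\phi\cdot\eub^I,\ep^h}_\Omegah - \LRp{\phimhalf\d\,\eub^I,\Grad\ep^h}_\Omegah$ rearranges, by the product rule for $\Grad(\phimhalf\ep^h)$, directly into $-\LRp{\d\eub^I,\Grad(\phimhalf\ep^h)}_\Omegah = C$, with no further integration by parts.

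All remaining contributions live on $\Gh$. Using single-valuedness of $\eph^h$ on each interior face, the jump integral in~\eqnref{HDGglobal} converts into an integral over $\pOmegah\setminus\gamD$, which combines with the boundary terms of~\eqnref{HDGlocal2} to yield
\[
\LRa{\phimhalf\d\,\eub^I\cdot\nb + \tau(\ep^I - \eph^I),\; \ep^h - \eph^h}_{\pOmegah\setminus\gamD}
\]
together with the analogous $\gamD$-contribution and the $\LRa{\tau\eph^I,\eph^h}_{\gamD}$ leftover from~\eqnref{HDGglobal}. The key algebraic move is adding and subtracting $\alpha\eub^I\cdot\nb$ to isolate the combination $\alpha\eub^I\cdot\nb + \tau\ep^I$: by the coupled projection~\eqnref{projup} this combination integrates to zero against $\ep^h-\eph^h$ on $\pOmegah\setminus\gamD$ and against $\ep^h$ on $\gamD$, since both test functions restrict to $\poly{k}(e)$ on each face. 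The remaining $\tau\eph^I$-terms, on both $\pOmegah\setminus\gamD$ and $\gamD$, vanish by the $L^2$-orthogonality of $\eph^I=\phe-\Pi\phe$ to $\poly{k}(e)$, provided $\tau$ is face-constant. What survives is exactly the $(\phimhalf\d-\alpha)\eub^I\cdot\nb$ contribution, which is $D$.

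The principal obstacle is the sign-consistent bookkeeping of the boundary contributions as the jump in~\eqnref{HDGglobal} is converted to per-element integrals and merged with the~\eqnref{HDGlocal2}-boundary blocks. Because $\phi$ and $\d$ are genuinely variable, no direct $\poly{k-1}$-orthogonality from~\eqnref{proju}--\eqnref{projp} can annihilate the pieces of $A$, $B$, $C$ on its own; identities~\eqnref{identitiesU} and~\eqnref{identitiesP} have to be applied in exactly the form given so that the residual skeleton combination is precisely the one killed by~\eqnref{projup}, with the $(\phimhalf\d - \alpha)$ mismatch showing up as the sole surviving boundary term $D$.
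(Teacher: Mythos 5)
Your route is exactly the paper's (the paper only sketches it): use the Galerkin orthogonality \eqnref{GalerkinOrth} with the splitting into interpolation and discretization errors, test with $\LRp{\eub^h,\ep^h,\eph^h}$, identify the left-hand side with $\mc{E}^2_h$ via the energy identity of Proposition \proporef{energyEstimate}, and reduce the right-hand side using the identities \eqnref{identities}, the projection properties \eqnref{projection}, and the add-and-subtract of $\alpha\eub^I\cdot\nb$ on the skeleton. Your treatment of the skeleton terms, including the observation that the $\tau\eph^I$ pieces only vanish when $\tau$ is facewise constant and that traces of $\ep^h$, $\eph^h$ lie in $\poly{k}(e)$ on simplicial faces, is correct and matches what the paper intends.

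There is, however, one concrete piece of bookkeeping you never address. Expanding $a\LRp{\LRp{\eub^I,\ep^I,\eph^I};\LRp{\eub^h,\ep^h,\eph^h}}$, the reaction terms of \eqnref{HDGlocal1} and \eqnref{HDGlocal2} contribute $\LRp{\eub^I,\eub^h}_\Omegah$ and $\LRp{\ep^I,\ep^h}_\Omegah$, and these do not vanish: \eqnref{proju}--\eqnref{projp} give orthogonality only against $\poly{k-1}$, while $\eub^h$ and $\ep^h$ are of degree $k$. So the identity your argument actually yields is $\mc{E}^2_h = A+B+C+D+\LRp{\eub^I,\eub^h}_\Omegah+\LRp{\ep^I,\ep^h}_\Omegah$ (plus the $\tau\eph^I$ residuals when $\tau$ is not facewise constant), and your write-up silently drops the two volume terms when you say the $\ep^I$ contributions ``fuse into $A$'' and the $\eub^I$ pair ``rearranges into $C$.'' To be fair, the lemma as stated in the paper exhibits the same omission, and it is harmless downstream: the extra terms are bounded by $\LRp{\nor{\eub^I}_\Omegah+\nor{\ep^I}_\Omegah}\mc{E}_h$ and are absorbed into the same right-hand side as $A$ and $C$ in the proof of Theorem \theoref{convergence}. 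But a complete proof of the stated identity must either carry these terms along explicitly or justify why they can be discarded; as written, that step is a gap.
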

    \begin{proof}
      The proof is straightforward by first adding and subtracting
      appropriate projections in the Galerkin orthogonality equation
      \eqnref{GalerkinOrth}, second using the definition of the
      projections \eqnref{projection}, and finally taking $\vb = \eub^h$,
      $\q = \ep^h$, and $\qh = \eph^h$.
    \end{proof}

    The next step is to estimate $A, B, C$ and $D$. To that end, we define $\alpha$ on faces of an element $\K$ as
\begin{equation}
\eqnlab{alpha}
\alpha:= 
\left\{
\begin{array}{ll}
\overline{\phimhalf\d} & \text{if } \overline{\phimhalf\d} \ne 0\\
1                      & \text{otherwise}
\end{array}
\right.,
\end{equation}
where $\overline{\phimhalf\d}$ is the average of $\phimhalf\d$ on the element $\K$.

    \begin{lemma}[Estimation for $A$]
      \lemlab{Aestimate}
      There exists a positive constant $c=c\LRp{\phi,\d}$ such that
      \[
      \snor{A} \le c \nor{\ep^I}_\Omegah\nor{\eub^h}_\Omegah.
      \]
    \end{lemma}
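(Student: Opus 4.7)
The plan is to expand the divergence inside $A$ via the product rule and split $A$ into two pieces that can be handled separately. Writing $\Div(\d\eub^h) = \Grad\d\cdot\eub^h + \d\Div\eub^h$, I set $A = A_1 + A_2$ with
\[
A_1 := -(\phimhalf\Grad\d\cdot\eub^h,\ep^I)_{\Omegah}, \qquad A_2 := -(\phimhalf\d\Div\eub^h,\ep^I)_{\Omegah}.
\]

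The piece $A_1$ is immediate: a direct Cauchy--Schwarz, together with the regularity assumption \eqnref{porosityCoefAssumption} which provides $\phimhalf\Grad\d \in L^\infty(\Omega)$, will yield $\snor{A_1} \le c(\phi,\d)\,\nor{\eub^h}_{\Omegah}\nor{\ep^I}_{\Omegah}$.

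The term $A_2$ is the main obstacle, because the weight $\phimhalf\d$ multiplying the polynomial factor $\Div\eub^h$ is not itself a polynomial, so the orthogonality \eqnref{projp} cannot be invoked directly. My plan is to subtract the elementwise mean $\overline{\phimhalf\d}_\K$ on each $\K$ and split
\[
(\phimhalf\d\ep^I,\Div\eub^h)_\K = \overline{\phimhalf\d}_\K\,(\ep^I,\Div\eub^h)_\K + \bigl((\phimhalf\d - \overline{\phimhalf\d}_\K)\ep^I,\Div\eub^h\bigr)_\K.
\]
Since $\Div\eub^h \in \Poly^{k-1}(\K)$, the first summand vanishes by \eqnref{projp}. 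For the second, I will note that $\Grad(\phimhalf\d) = -\tfrac12\phi^{-3/2}\d\,\Grad\phi + \phimhalf\Grad\d$ belongs to $L^\infty(\Omega)$ by \eqnref{porosityCoefAssumption}, hence $\phimhalf\d \in W^{1,\infty}(\Omega)$, and the standard mean-value estimate delivers $\|\phimhalf\d - \overline{\phimhalf\d}_\K\|_{L^\infty(\K)} \le C(\phi,\d)\,h$. Combined with the inverse inequality $\nor{\Div\eub^h}_\K \le C h^{-1} \nor{\eub^h}_\K$, the two powers of $h$ cancel exactly and produce $\snor{A_{2,\K}} \le c\,\nor{\ep^I}_\K\nor{\eub^h}_\K$. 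Summing over $\K$ and applying a discrete Cauchy--Schwarz will finish the bound.

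The delicate step is the treatment of $A_2$: one must activate \eqnref{projp} via the elementwise mean trick, and then rely on the Lipschitz regularity of $\phimhalf\d$ extracted from \eqnref{porosityCoefAssumption} to produce an $O(h)$ factor that precisely cancels the $h^{-1}$ from the inverse inequality, leaving an $h$-independent constant. The remaining work is bookkeeping of norms across elements.
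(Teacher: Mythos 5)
Your proposal is correct and follows essentially the same route as the paper: split $\Div(\d\eub^h)$ by the product rule, bound the $\phimhalf\Grad\d$ term directly by Cauchy--Schwarz, and handle the $\phimhalf\d\,\Div\eub^h$ term by subtracting the elementwise mean so that \eqnref{projp} applies, then pair the $O(h)$ oscillation bound (Bramble--Hilbert/Lipschitz estimate using \eqnref{porosityCoefAssumption}) with the $O(h^{-1})$ inverse inequality. Your explicit verification that $\phimhalf\d\in W^{1,\infty}$ via the product rule is a nice touch the paper leaves implicit, but the argument is otherwise identical.
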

    \begin{proof}
      We have
\[
        \snor{A} \le \snor{\LRp{\ep^I,\phimhalf\Grad\d\cdot\eub^h}_\Omegah} + \snor{\LRp{\ep^I,\phimhalf\d\Div{\eub^h}}_\Omegah} 
        \]
        Bounding the first term is straightforward:
        \[
        \snor{\LRp{\ep^I,\phimhalf\Grad\d\cdot \eub^h}_\Omegah}  \le       c \nor{\phimhalf\Grad\d}_\infty\nor{\ep^I}_\Omegah\nor{\eub^h}_\Omegah.
        \]

        For the second term, we have
        \begin{multline*}
          \snor{\LRp{\ep^I,\phimhalf\d\Div{\eub^h}}_\Omegah} =
          \snor{\LRp{\ep^I,\LRp{\phimhalf\d -
                \overline{\phimhalf\d}}\Div{\eub^h}}_\Omegah} \le ch\nor{\ep^I}_\Omegah\nor{\phimhalf\d}_{W^{1,\infty}\LRp{\Omegah}}\nor{\Div{\eub^h}}_\Omegah \\ \le c \nor{\ep^I}_\Omegah\nor{\phimhalf\d}_{W^{1,\infty}\LRp{\Omegah}}\nor{\eub^h}_\Omegah,
        \end{multline*}
        where we have used \eqnref{projp} in the first equality, the Cauchy-Schwarz inequality and the
        Bramble--Hilbert lemma (see, e.g., \cite{Brenner-Scott-book})
        in the first inequality, and Lemma \lemref{inverse-inequality} (in the appendix)
        in the last inequality. Here, $W^{1,\infty}$ is a standard
        Sobolev space.
      \end{proof}

        \begin{lemma}[Estimation for $B$]
      \lemlab{Bestimate}
      There exists a positive constant $c=c\LRp{\phi,\d}$ such that
      \[
      \snor{B} \le c h^\half\nor{\ep^I}_\pOmegah \nor{\eub^h}_\Omegah.
      \]
    \end{lemma}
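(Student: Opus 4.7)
The plan is to mimic the proof of Lemma \lemref{Aestimate}: exploit an $L^2$-orthogonality built into the projection $\Pi$, extract a factor of $h$ via Bramble--Hilbert, and convert a trace norm of $\eub^h$ into a bulk norm through an inverse trace inequality, which accounts for the extra $h^{-1/2}$ and leaves the advertised $h^{1/2}$.

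First, I would use that $\Pi$ is the $L^2$-projection onto $\poly{k}(e)$, so that $\LRa{\eph^I,\qh}_e = 0$ for all $\qh \in \poly{k}(e)$. Since $\eub^h \cdot \n$ on any face of $\pK$ is a polynomial of degree at most $k$, multiplying by the $\K$-average $\overline{\phimhalf\d}_\K$ (a constant) keeps it in $\poly{k}(e)$, so subtracting element by element gives
\begin{equation*}
B = \sum_{\K \in \Omegah}\LRa{(\phimhalf\d - \overline{\phimhalf\d}_\K)\,\eph^I,\, \eub^h\cdot\n}_{\pK\setminus\gamD}.
\end{equation*}
Applying the Bramble--Hilbert lemma, $\|\phimhalf\d - \overline{\phimhalf\d}_\K\|_{L^\infty(\K)} \le c h_\K \|\phimhalf\d\|_{W^{1,\infty}(\K)}$, followed by Cauchy--Schwarz face by face and summation, yields
\begin{equation*}
\snor{B} \le c h\, \|\phimhalf\d\|_{W^{1,\infty}(\Omegah)}\, \|\eph^I\|_{\pOmegah}\, \|\eub^h\|_{\pOmegah}.
\end{equation*}

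Two small conversions finish the proof. To replace $\eph^I$ by $\ep^I$, I observe that on each face $e \subset \pK$ the bulk HDG projection $\Pr\pe|_e$ lies in $\poly{k}(e)$, so $\Pi$ reproduces it; hence $\eph^I = \pe - \Pi\pe = (I-\Pi)(\pe - \Pr\pe) = (I-\Pi)\ep^I$ on $e$, which gives $\|\eph^I\|_e \le \|\ep^I\|_e$ since $\Pi$ is an $L^2$-contraction. To convert $\|\eub^h\|_{\pOmegah}$ to $\|\eub^h\|_{\Omegah}$, I would invoke the discrete inverse trace inequality $\|\eub^h\|_{\pOmegah} \le c h^{-1/2}\|\eub^h\|_{\Omegah}$ for piecewise polynomials on shape-regular simplicial meshes (Lemma \lemref{inverse-inequality} in the appendix). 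Combining the two ingredients produces the advertised bound, with $c = c(\phi,\d)$ absorbing $\|\phimhalf\d\|_{W^{1,\infty}(\Omegah)}$.

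The only conceptual hurdle is the apparent mismatch between $\eph^I$ appearing in $B$ and $\ep^I$ in the stated bound; the identity $\eph^I = (I-\Pi)\ep^I$ on faces resolves it cleanly, and the rest is the standard orthogonality--Bramble--Hilbert--inverse-trace pattern used in Lemma \lemref{Aestimate}.
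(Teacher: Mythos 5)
Your proof is correct and follows essentially the same route as the paper: orthogonality of the face $L^2$-projection to drop $\overline{\phimhalf\d}\,\eub^h\cdot\n$, Bramble--Hilbert for the factor $h$, the best-approximation property of $\Pi$ to bound $\nor{\eph^I}_\pOmegah$ by $\nor{\ep^I}_\pOmegah$, and the discrete trace inequality to trade $\nor{\eub^h}_\pOmegah$ for $h^{-\half}\nor{\eub^h}_\Omegah$. The only nit is that this last step is the inequality \eqnref{discrete_trace_ineq} (Lemma \lemref{cont-inv-trace} combined with Lemma \lemref{inverse-inequality}), not Lemma \lemref{inverse-inequality} alone.
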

    \begin{proof}
      We have
      \begin{multline*}
        \snor{B} = \snor{\LRa{\eph^I,\LRp{\phimhalf\d - \overline{\phimhalf\d}}\eub^h\cdot\n}_{\pOmegah\setminus\gamD}}
        \le \nor{\eph^I}_{\pOmegah}\nor{\phimhalf\d - \overline{\phimhalf\d}}_{L^\infty\LRp{\pOmegah}}\nor{\eub^h}_{\pOmegah} \\\le
        c \h\nor{\phimhalf\d}_{W^{1,\infty}\LRp{\Omegah}}\nor{\eph^I}_\pOmegah \nor{\eub^h}_\pOmegah,
        \end{multline*}
        where we have used the property of $L^2$-projection $\Pi\phe$
        in the first equality, the Cauchy-Schwarz inequality in the
        first inequality, and the Bramble--Hilbert lemma in the last inequality.
        Now the best approximation of $\Pi\phe$ implies
        $\norm{\eph^I }_\pOmegah \le \norm{\ep^I}_\pOmegah$ 
        and \eqnref{discrete_trace_ineq} gives the result.
      \end{proof}

    \begin{lemma}[Estimation for $C$]
      \lemlab{Cestimate}
      There exists a positive constant $c=c\LRp{\phi,\d}$ such that
      \[
      \snor{C} \le c \nor{\eub^I}_\Omegah\nor{\ep^h}_\Omegah.
      \]
    \end{lemma}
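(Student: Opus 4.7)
The plan is to mirror the strategy used in Lemma \lemref{Aestimate}, exploiting the projection property \eqnref{proju} of $\Prb$ together with a Bramble--Hilbert argument and an inverse inequality. First I would expand the derivative via the product rule,
\[
\nabla\LRp{\phimhalf\ep^h} = -\tfrac{1}{2}\phi^{-\frac{3}{2}}\Grad\phi\, \ep^h + \phimhalf \Grad\ep^h,
\]
to split $C$ into two contributions: one in which $\ep^h$ appears without a derivative, and one involving $\Grad\ep^h$.

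The first contribution is controlled directly using assumption \eqnref{porosityCoefAssumption}: Cauchy--Schwarz together with $\phi^{-\frac{3}{2}}d\,\Grad\phi\in L^\infty$ yields a bound of the form $c\nor{\eub^I}_\Omegah\nor{\ep^h}_\Omegah$, with $c=c(\phi,d)$. This step is entirely routine and parallels the handling of the gradient-of-$d$ term in Lemma \lemref{Aestimate}.

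The second contribution $-(\phimhalf d\, \eub^I,\Grad\ep^h)_\Omegah$ is the key place where a direct Cauchy--Schwarz would produce $\nor{\Grad\ep^h}_\Omegah$, which is not what we want. Here I would invoke the projection definition \eqnref{proju}: since $\Grad\ep^h\in(\poly{k-1}(\K))^{dim}$ on each element, subtracting off the elementwise average $\overline{\phimhalf d}$ gives
\[
\LRp{\phimhalf d\, \eub^I,\Grad\ep^h}_\K = \LRp{\LRp{\phimhalf d - \overline{\phimhalf d}}\eub^I,\Grad\ep^h}_\K,
\]
because the constant-coefficient term vanishes. Then Bramble--Hilbert on the coefficient gives $\nor{\phimhalf d - \overline{\phimhalf d}}_{L^\infty(\K)}\le c h \nor{\phimhalf d}_{W^{1,\infty}(\K)}$, and the inverse inequality in Lemma \lemref{inverse-inequality} provides $\nor{\Grad\ep^h}_\K \le c h^{-1}\nor{\ep^h}_\K$. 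The two factors of $h$ cancel and summation over elements delivers the required bound.

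The main obstacle is precisely the $\Grad\ep^h$ term: the naive estimate loses an inverse power of $h$. The trick of subtracting a polynomial average to match the test-function space required by \eqnref{proju}, so that Bramble--Hilbert contributes an $h$ that cancels the inverse-inequality loss, is the mechanism that makes the bound clean. All other steps are routine applications of Cauchy--Schwarz and the assumed $L^\infty$-regularity of the coefficients in \eqnref{porosityCoefAssumption}.
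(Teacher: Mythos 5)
Your proposal is correct and follows essentially the same route as the paper: the paper's proof splits $C$ by the product rule into the zeroth-order term (handled by Cauchy--Schwarz and the $L^\infty$ assumptions on the coefficients) and the term involving $\Grad\ep^h$, which it treats "as in Lemma \lemref{Aestimate} by using \eqnref{proju}," i.e.\ exactly your subtraction of the elementwise average $\overline{\phimhalf\d}$ followed by Bramble--Hilbert and the inverse inequality so the factors of $h$ cancel. You have simply written out in full the details the paper leaves implicit.
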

    \begin{proof}
      We have
\[
        \snor{C} \le  \snor{\LRp{\frac{1}{2}\phi^{-\frac{3}{2}} \d \Grad\phi \cdot \eub^I,\ep^h}_\Omegah} 
      +\snor{\LRp{ \eub^I, \phimhalf \d \Grad \ep^h}_\Omegah}.
        \]
        The rest of the proof is similar to that of Lemma \lemref{Aestimate} by using \eqnref{proju}.
      \end{proof}

        \begin{lemma}[Estimation for $D$]
      \lemlab{Destimate}
      There exists a positive constant $c=c\LRp{\phi,\d}$ such that
      \[
      \snor{D} \le c  \beta \nor{\eub^I}_{\pOmegah,\tau^{-1}} \LRp{\nor{\ep^h-\eph^h}_{\pOmegah\setminus\gamD,\tau} + \nor{\ep^h}_{\gamD,\tau}},
      \]
where
\[
\beta :=
\left\{
\begin{array}{ll}
h & \text{if } \overline{\phimhalf\d} \ne 0 \quad \forall \K \in \Omegah\\
1 & \text{otherwise} 
\end{array}
\right..
\]
    \end{lemma}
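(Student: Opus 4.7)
The plan is to split $D$ into the interior-face term and the Dirichlet-face term and bound each by Cauchy-Schwarz on each face separately, inserting the weight $\tau^{1/2}\tau^{-1/2}$ in order to produce the weighted norms that appear on the right-hand side. Specifically, for a face $e \subset \pK \setminus \gamD$,
\[
\snor{\LRa{(\phimhalf\d-\alpha)\eub^I\cdot\n, \ep^h-\eph^h}_e}
\le \nor{\phimhalf\d-\alpha}_{L^\infty(e)} \,\nor{\eub^I\cdot\n}_{e,\tau^{-1}}\,\nor{\ep^h-\eph^h}_{e,\tau},
\]
and analogously on $\gamD$ with $\ep^h-\eph^h$ replaced by $\ep^h$.

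The heart of the matter is therefore to bound $\nor{\phimhalf\d-\alpha}_{L^\infty(e)}$ using the definition \eqnref{alpha}. In the non-degenerate case (when $\overline{\phimhalf\d}\ne 0$ on $\K$), we chose $\alpha = \overline{\phimhalf\d}$, the average of $\phimhalf\d$ over $\K$. Since $\phimhalf\d - \overline{\phimhalf\d}$ has vanishing mean on $\K$, the Bramble-Hilbert lemma yields
\[
\nor{\phimhalf\d - \alpha}_{L^\infty(\K)} \le c\, h\, \nor{\phimhalf\d}_{W^{1,\infty}(\K)},
\]
and restricting to $e \subset \pK$ gives the factor $h$ in $\beta$. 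In the degenerate case ($\alpha=1$), we simply bound $\nor{\phimhalf\d - 1}_{L^\infty(\K)} \le c$ using the assumption \eqnref{porosityCoefAssumption}, which accounts for $\beta = 1$. Note that if any element of the mesh has vanishing average, the global estimate collapses to the worse $\beta=1$ case; this explains the dichotomy in the definition of $\beta$.

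Finally, summing these local bounds over all faces and applying discrete Cauchy-Schwarz to the sum produces
\[
\snor{D} \le c\,\beta\, \nor{\eub^I}_{\pOmegah,\tau^{-1}}\,\LRp{\nor{\ep^h-\eph^h}_{\pOmegah\setminus\gamD,\tau} + \nor{\ep^h}_{\gamD,\tau}},
\]
with $c$ depending on $\phi$ and $\d$ through $\nor{\phimhalf\d}_{W^{1,\infty}(\Omegah)}$. The only real subtlety is making sure that a single $\alpha$ per face (or per element) is used consistently in the local projection \eqnref{projup} and in the algebraic manipulations inside $D$; once this is observed, the estimate is a clean consequence of Bramble-Hilbert plus Cauchy-Schwarz, paralleling the strategies already used for $A$, $B$, and $C$.
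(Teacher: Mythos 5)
Your proposal is correct and takes essentially the same route as the paper: a weighted Cauchy--Schwarz argument (inserting $\tau^{1/2}\tau^{-1/2}$, as in the estimate for $B$) to pull out $\nor{\phimhalf\d-\alpha}_{L^\infty}$ against $\nor{\eub^I}_{\pOmegah,\tau^{-1}}$ and the $\tau$-weighted norms of $\ep^h-\eph^h$ and $\ep^h$, followed by the Bramble--Hilbert lemma with the definition \eqnref{alpha} to get the factor $h$ on elements with $\overline{\phimhalf\d}\ne 0$ and a plain $L^\infty$ bound (hence $\beta=1$) otherwise. Your remark that a single degenerate element forces the global $\beta=1$ matches the paper's definition of $\beta$, so nothing is missing.
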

    \begin{proof}
Employing similar techniques as in estimating $B$, we have
\[
        \snor{D} \le 
        \nor{\phimhalf\d - \alpha}_{L^\infty\LRp{\pOmegah}} \nor{\eub^I}_{\pOmegah,\tau^{-1}} \LRp{\nor{\ep^h-\eph^h}_{\pOmegah\setminus\gamD,\tau} + \nor{\ep^h}_{\gamD,\tau}}.
\]
Now using the definition of $\alpha$ in \eqnref{alpha} and the Bramble--Hilbert lemma, 
\[
\nor{\phimhalf\d - \alpha}_{L^\infty\LRp{\pOmegah}} \le \nor{\phimhalf\d - \alpha}_{L^\infty\LRp{\Omegah}} \le c \beta,
\]
      and this ends the proof.
      \end{proof}

    Now comes the main result of this section.
    \begin{theorem}
      \theolab{convergence}
      Suppose $\ube \in \LRs{H^{k+1}\LRp{\Omegah}}^{dim}$ and  $\pe \in H^{k+1}\LRp{\Omegah}$. Then 
      \begin{multline*}
      \nor{\eub^h}_\Omegah + \nor{\ep^h}_\Omegah + \nor{\eph^h}_{\gamD,\tau} + \nor{\ep^h}_{\gamD,\tau} + \nor{\ep^h - \eph^h}_{\pOmegah\setminus\gamD,\tau} \\
\le c\LRp{\nor{\ube}_{k+1,\Omegah} + \nor{\pe}_{k+1,\Omegah}}
\times
\left\{
\begin{array}{ll}
 h^{k+1}  & \text{if } \overline{\phimhalf\d} \ne 0 \quad \forall \K \in \Omegah\\
h^{k+\half} & \text{otherwise} 
\end{array}
\right.,
      \end{multline*}
      where $c=c\LRp{\phi,\d,\tau}$ is a positive constant independent of $h$.
    \end{theorem}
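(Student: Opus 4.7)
The plan is to combine the error equation of Lemma \lemref{ErrorEqn} with the four term-by-term bounds in Lemmas \lemref{Aestimate}--\lemref{Destimate}, absorb into the left-hand side (via Young's inequality) any pieces that are already controlled by $\mc{E}_h$, and then translate the remaining projection-error norms into explicit rates in $h$ by means of Lemma \lemref{eProjection} together with a standard discrete trace inequality.

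First I would write $\mc{E}_h^2 \le \snor{A}+\snor{B}+\snor{C}+\snor{D}$ from Lemma \lemref{ErrorEqn}, and apply Lemmas \lemref{Aestimate} and \lemref{Cestimate} to obtain bounds of the form $c\,\nor{\ep^I}_\Omegah\nor{\eub^h}_\Omegah$ and $c\,\nor{\eub^I}_\Omegah\nor{\ep^h}_\Omegah$ respectively. Using Young's inequality with a small parameter $\varepsilon$, the discretization factors $\nor{\eub^h}_\Omegah^2$ and $\nor{\ep^h}_\Omegah^2$ can be absorbed into $\mc{E}_h^2$ on the left. A similar absorption applied to $B$ (passing through a discrete trace inequality to relate $\nor{\eub^h}_\pOmegah$ back to $\nor{\eub^h}_\Omegah$) and to $D$ (which already produces the weighted norms $\nor{\ep^h-\eph^h}_{\pOmegah\setminus\gamD,\tau}$ and $\nor{\ep^h}_{\gamD,\tau}$) leaves us with
\[
\mc{E}_h^2 \le c\LRp{\nor{\ep^I}_\Omegah^2 + \nor{\eub^I}_\Omegah^2 + h\nor{\ep^I}_\pOmegah^2 + \beta^2\nor{\eub^I}_{\pOmegah,\tau^{-1}}^2}.
\]

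Next, I would invoke Lemma \lemref{eProjection} to control the volume terms as $\nor{\ep^I}_\Omegah + \nor{\eub^I}_\Omegah \le c\,h^{k+1}\LRp{\nor{\ube}_{k+1,\Omegah}+\nor{\pe}_{k+1,\Omegah}}$, and combine Lemma \lemref{eProjection} with the standard discrete trace inequality $\nor{v}_\pK^2 \le c\LRp{h^{-1}\nor{v}_\K^2 + h\,\nor{v}_{1,\K}^2}$ to bound the boundary terms at the same overall order. The last term $\beta^2\nor{\eub^I}_{\pOmegah,\tau^{-1}}^2$ is the one that dictates the two cases in the statement: with $\tau = \mc{O}(1/h)$, the weight gives $\nor{\eub^I}_{\pOmegah,\tau^{-1}}^2 = \mc{O}(h^{2k+2})$, so that $\beta^2 = h^2$ contributes $h^{2k+4}$ (optimal rate $h^{k+1}$) when $\overline{\phimhalf\d}\ne 0$ on every element, while $\beta^2 = 1$ contributes only $h^{2k+2}$ (sub-optimal $h^{k+\half}$) as soon as some element is degenerate. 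Taking square roots yields the claimed estimate.

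The main obstacle while filling in the details is the bookkeeping around the stabilization parameter $\tau$ and the element-mean $\alpha$ on degenerate cells. One must verify that $\nor{\phimhalf\d - \alpha}_{L^\infty(\pOmegah)}$ is $\mc{O}(h)$ wherever $\alpha = \overline{\phimhalf\d}$ (via the Bramble--Hilbert argument already used in Lemma \lemref{Destimate}) but only $\mc{O}(1)$ on degenerate elements where $\alpha$ is reset to $1$, which is exactly the origin of the piecewise definition of $\beta$. In parallel, one has to check that the projection bounds in Lemma \lemref{eProjection} still deliver the announced rates under $\tau = \mc{O}(1/h)$, since the ratio $\tau_\K^*$ appearing there then scales like $1/h$ and the factors of $h$ must cancel cleanly in order to preserve the global $h^{k+1}$ (respectively $h^{k+\half}$) scaling.
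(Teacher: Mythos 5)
Your overall skeleton is the same as the paper's: start from the error equation of Lemma \lemref{ErrorEqn}, insert the bounds of Lemmas \lemref{Aestimate}--\lemref{Destimate}, absorb the discretization-error factors into $\mc{E}_h$ (your Young-inequality absorption is equivalent to the paper's Cauchy--Schwarz step followed by dividing by $\mc{E}_h$ in \eqnref{E1}), and then convert $\nor{\ep^I}_\Omegah$, $\nor{\eub^I}_\Omegah$, $\nor{\ep^I}_\pOmegah$ and $\nor{\eub^I}_{\pOmegah,\tau^{-1}}$ into rates via Lemma \lemref{eProjection} and the trace inequality, with the dichotomy in the theorem coming from $\beta$. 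Up to the harmless replacement of the paper's $\beta$-weights by $\beta^2$ (sharper, since $\beta\le 1$), this matches \eqnref{E1}--\eqnref{E3}.

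The gap is in your final rate computation for the decisive term $\beta^2\nor{\eub^I}_{\pOmegah,\tau^{-1}}^2$, where you fix $\tau=\mc{O}(1/h)$ and assert $\nor{\eub^I}_{\pOmegah,\tau^{-1}}^2=\mc{O}(h^{2k+2})$. This does not follow from Lemma \lemref{eProjection}: the velocity projection error carries the factor $\tau_\K^*=\eval{\tau/\alpha}_{\pK\setminus\e^*}$, so with a single-valued $\tau\sim 1/h$ one has $\tau_\K^*\sim 1/h$ and $\nor{\eub^I}_\K\lesssim h^{k+1}\nor{\ube}_{k+1,\K}+h^{k}\nor{\pe}_{k+1,\K}$; combining with the trace inequality and the weight $\tau^{-1}\sim h$ gives only $\nor{\eub^I}_{\pOmegah,\tau^{-1}}^2\lesssim h^{2k+2}\nor{\ube}_{k+1}^2+h^{2k}\nor{\pe}_{k+1}^2$, i.e.\ the half order gained from $\max_\K\tau^{-1}$ is lost (and more) through $\max_\K\tau_\K^*$ --- exactly the non-cancellation the paper points out in the remark following Theorem \theoref{convergence}. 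You flag this as something ``to check,'' but it is not a bookkeeping detail: with $\tau\sim 1/h$ the chain of estimates you propose does not deliver $h^{k+1}$ (nor even $h^{k+\half}$ from this term alone), and moreover the theorem's constant $c=c\LRp{\phi,\d,\tau}$ is meant to be independent of $h$, which presupposes an $h$-independent $\tau$ (e.g.\ the upwind $\tau=\phimhalf\d$ on non-degenerate faces and a constant $\gamma$ on degenerate ones). The paper's proof instead keeps $\max_\K\tau^{-1}$ and $\max_\K\tau_\K^*$ as $\tau$-dependent constants in \eqnref{E3}, obtaining $\beta\, h^{2k+1}$ for this term, which yields $h^{2k+2}$ when $\beta=h$ (all elements non-degenerate) and $h^{2k+1}$ when $\beta=1$. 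To repair your argument, drop the $\tau=\mc{O}(1/h)$ specialization, treat $\tau$ as fixed, and carry the $\tau_\K^*$ and $\tau^{-1}$ factors into the constant as the paper does; the rest of your outline then goes through.
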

\begin{proof}
Using the results in Lemmas \lemref{ErrorEqn}---\lemref{Destimate} and the Cauchy-Schwarz inequality, we have
\begin{multline}
\eqnlab{E1}
\mc{E}^2_h \le c\LRp{\nor{\ep^I}_\Omegah^2 + \beta \h \nor{\ep^I}_\pOmegah^2 + \nor{\eub^I}_\Omegah^2 + \beta \nor{\eub^I}_{\pOmegah,\tau^{-1}}^2}^\half \times\\
\LRp{\nor{\eub^h}_\Omegah^2 + \nor{\ep^h}_\Omegah^2 + \nor{\ep^h-\eph^h}_{\pOmegah\setminus\gamD,\tau}^2 + \nor{\ep^h}_{\gamD,\tau}^2}^\half \\
\le c\LRp{\nor{\ep^I}_\Omegah^2 + \beta \h\nor{\ep^I}_\pOmegah^2 + \nor{\eub^I}_\Omegah^2 + \beta  \nor{\eub^I}_{\pOmegah,\tau^{-1}}^2}^\half\times\mc{E}_h.
\end{multline}
The estimate for $\nor{\ep^I}_\Omegah^2$ and $\nor{\eub^I}_\Omegah^2$
can be obtained directly from Lemma \lemref{eProjection}. Now using Lemma \lemref{cont-inv-trace} in the appendix 
and approximation properties of $\Prb\ube, \Pr\pe$ in Lemma
\lemref{eProjection} gives
\begin{equation}
\eqnlab{E2}
\nor{\ep^I}_\pOmegah^2 \le c\sum_\K \LRp{\nor{\Grad \ep^I}_{0,\K} + h^{-1}\nor{\ep^I}_{0,\K} } \nor{\ep^I}_{0,\K} \le c h^{2k+1}\LRp{\max_\K\frac{1}{\tau_\K^{\max}}\nor{\ube}_{k+1,\Omegah} + \nor{\pe}_{k+1,\Omegah}}^2.
\end{equation}
Similarly we can obtain
\begin{equation}
\eqnlab{E3}
\nor{\eub^I}_{\pOmegah,\tau^{-1}}^2 \le ch^{2k+1}\max_\K\frac{1}{\tau} \LRp{\nor{\ube}_{k+1,\Omegah} + \max_\K\tau_\K^*\nor{\pe}_{k+1,\Omegah}}^2.
\end{equation}
The assertion is now ready by combining the inequalities \eqnref{E1}---\eqnref{E3}, the definition of $\beta$, and the Cauchy-Schwarz inequality.
\end{proof}

\begin{remark}
  When the system is degenerate, but the exact solution is piecewise
  smooth, the convergence rate is sub-optimal by half order. The above
  proof, especially inequality \eqnref{E3}, shows that this
  suboptimality may not be improved by using $\tau =
  \mc{O}\LRp{\h^{-1}}$. The reason is that the gain by half order from
  $\max_\K\frac{1}{\tau}$ is taken away by the loss of half order from
  $\max_\K\tau_\K^*$. This will be confirmed in our numerical studies of
  the sensitivity of $\tau$ on the convergence rate in Section \secref{sensitivityTau}.
  \end{remark}

\section{Numerical results}
\seclab{NumericalResults}

In this section, we present numerical examples to support the HDG approach and its convergence analysis.
For a non-degenerate case, we consider a sine solution test, while
for degenerate cases, we choose smooth and non-smooth solution tests \cite{arbogast2016linear}.
We take the upwind based parameter $\tau=\phimhalf d$ for a non-degenerate case, 
and the generalized parameter 
$$ \tau  = \left\{
\begin{array}{ll}
\phimhalf\d & \text{for }  \phi > 0,\\
1/\h      & \text{for } \phi=0
\end{array}
\right.$$
for degenerate cases. 
We also conduct several numerical computations to understand if the stabilization parameter $\tau$
can affect the accuracy of the HDG solution and its convergence rate. 
We assume that porosity $\phi$ is known and $\d =\phi$ in all the numerical examples.
The domain $\Omega$ is chosen as $\Omega=(0,1)^{dim}$ or  
$\Omega=(-1,1)^{dim}$, 
which is either uniformly discretized with $n_e$ rectangular tensor product elements in
each dimension (so that the total number of elements is $\Ne = n_e^{dim}$), 
or $N_e$ triangular elements.
Though we have rigorous optimal convergence theory for only simplicial meshes (see Theorem \theoref{convergence}), a similar result is expected for quadrilateral/hexahedral meshes (see the numerical results in the following sections). Since rectangular meshes are convenient for all problems in this paper with simple interfaces between the fluid melt and the solid matrix,
we use rectangular meshes hereafter, except for the test in Section \secref{simplexQuad}. 

\subsection{Non-degenerate case}
\seclab{simplexQuad}

We consider a non-degenerate case on $\Omega=(0,1)^2$ with the porosity given by $\phi = \exp(2(x+y))$. 
  We choose the pressure to be 
  $\pres^e = \exp(-(x+y)) \sin(m_x \pi x) \sin (m_y \pi y)$.
The corresponding manufactured scaled solutions are given as 
  \begin{subequations}
    \eqnlab{2d-tc3-solutions}
      \begin{align}
        \p^e &= \sin(m_x \pi x) \sin (m_y \pi y),\\
      u_x^e &= \exp(x+y) \sin (m_y \pi y) \LRp{\sin(m_x \pi x) - m_x \pi \cos(m_x\pi x)},\\
      u_y^e &= \exp(x+y) \sin (m_x \pi x) \LRp{\sin(m_y \pi y) - m_y \pi \cos(m_y\pi y)}.
      \end{align}
  \end{subequations}
  Here, we take $m_x = 2$ and $m_y = 3$. 
  \begin{figure}[h!t!b!]
    \centering
    \subfigure[Rectangular elements ]{
      \includegraphics[trim=5.2cm 9.0cm 6cm 9.1cm,clip=true,width=0.4\textwidth]{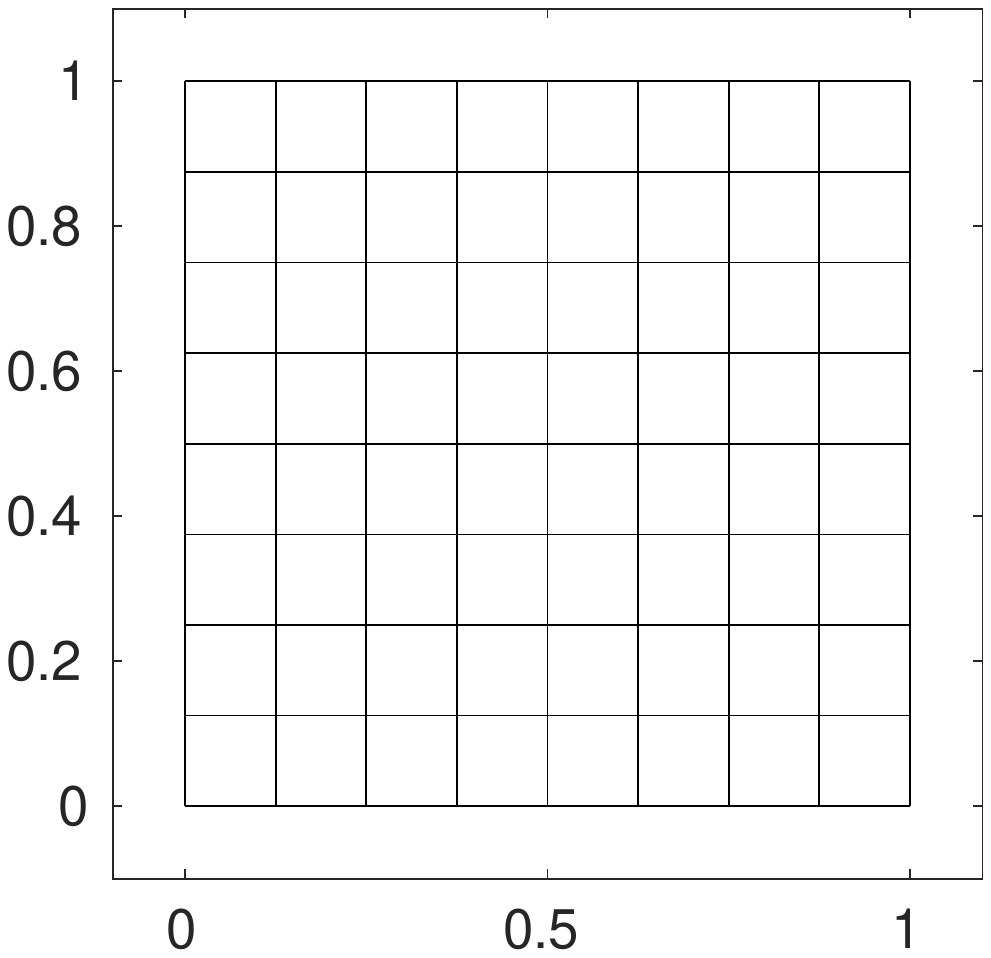}
      \figlab{2d-tc3-grid-quad}
    }
    \subfigure[Triangular elements]{
      \includegraphics[trim=5.2cm 9.0cm 6cm 9.1cm,clip=true,width=0.4\textwidth]{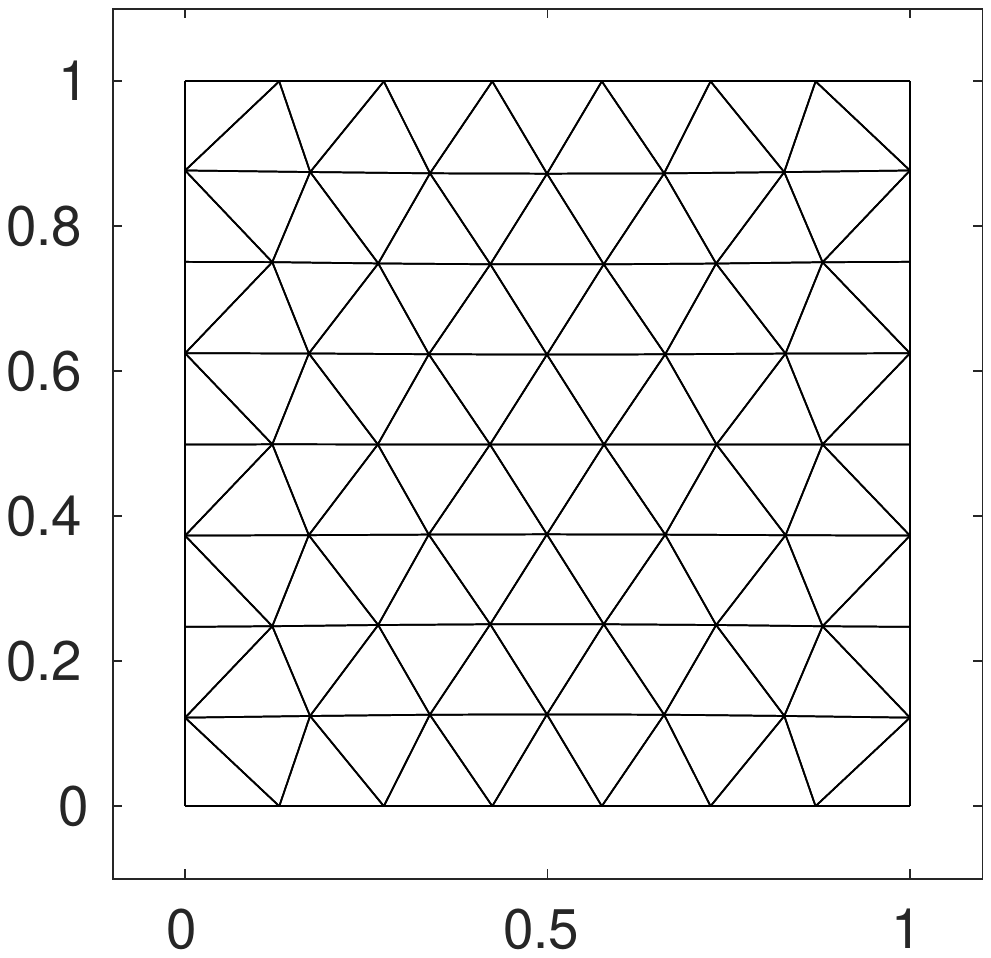}
      \figlab{2d-tc3-grid-triangle}
    }
    \caption{Coarse grids for non-degenerate case with (a) rectangular and (b) triangular elements.}
    \figlab{2d-tc3-grid}
  \end{figure}

Table \tabref{ConvergenceTC3-k23}
 shows  $h$-convergence results in the $L^2(\Omegah)$-norm 
using a sequence of nested meshes with $N_e=\{8^2,32^2,128^2\}$ for rectangular 
and $N_e=\{104,416,1664\}$ for triangular elements, respectively.
The corresponding coarse meshes are shown in Figure \figref{2d-tc3-grid}.
We observe approximately the optimal convergence rates of $\LRp{\k+1}$ for both
scaled pressure $\p$ and scaled velocity $\ub$ for both mesh types.  
\begin{table}[h!t!b!]
\caption{Non-degenerate case: the results show that the HDG solutions
  for scaled pressure $\p$ and scaled velocity $\ub$ converge to the
  exact solution with optimal order of  $\k+1$ for both triangular and rectangular meshes.
  The upwind based parameter $\tau=\phimhalf d$ is used. }
\tablab{ConvergenceTC3-k23}
\begin{center}
\begin{tabular}{*{2}{c}|*{4}{c}|*{5}{c}}
\hline

                     &
& \multicolumn{4}{c|}{Rectangular elements} 
& \multicolumn{4}{c}{Triangular elements} \tabularnewline
\multirow{2}{*}{$\k$} & \multirow{2}{*}{$h$} 
& \multicolumn{2}{c}{$\left\Vert p^e - p     \right\Vert _{\Omegah}$} 
& \multicolumn{2}{c|}{$\left\Vert {\bf u}^e - {\bf u} \right\Vert _{\Omegah}$}
& \multirow{2}{*}{$h$} 
& \multicolumn{2}{c}{$\left\Vert p^e - p     \right\Vert _{\Omegah}$} 
& \multicolumn{2}{c}{$\left\Vert {\bf u}^e - {\bf u} \right\Vert _{\Omegah}$} \tabularnewline
  &  &  error & order & error & order  &  &  error & order & error & order \tabularnewline
\hline\hline
\multirow{4}{*}{1} 
      &   0.0312 &  3.628E-02 &   $-$ & 8.546E-01 & 1.395 &  0.1400 & 6.521E-01 &      $-$ &   5.021E+00 &      $-$\tabularnewline
      &   0.0156 &  1.159E-02 & 1.646 & 2.878E-01 & 1.570 &  0.0700 & 1.813E-01 &    1.847 &   1.436E+00 &    1.806\tabularnewline
      &   0.0078 &  3.389E-03 & 1.775 & 8.804E-02 & 1.709 &  0.0350 & 4.726E-02 &    1.940 &   3.794E-01 &    1.920\tabularnewline
\tabularnewline                                              
\multirow{4}{*}{2}                                           
      &   0.0312 &  1.067E-03 &   $-$ & 2.734E-02 &   $-$ &  0.1400 & 9.445E-02 &      $-$ &   8.312E-01 &      $-$\tabularnewline
      &   0.0156 &  1.597E-04 & 2.741 & 4.272E-03 & 2.678 &  0.0700 & 1.245E-02 &    2.924 &   1.071E-01 &    2.956\tabularnewline
      &   0.0078 &  2.226E-05 & 2.843 & 6.170E-04 & 2.791 &  0.0350 & 1.587E-03 &    2.971 &   1.354E-02 &    2.984\tabularnewline
\tabularnewline                                             
\multirow{4}{*}{3}                                           
      &   0.0312 &  1.970E-05 &   $-$ & 4.691E-04 &  $-$  &  0.1400 & 8.929E-03 &      $-$ &   6.658E-02 &      $-$\tabularnewline
      &   0.0156 &  1.405E-06 & 3.809 & 3.478E-05 & 3.753 &  0.0700 & 5.865E-04 &    3.928 &   4.570E-03 &    3.865\tabularnewline
      &   0.0078 &  9.480E-08 & 3.890 & 2.414E-06 & 3.848 &  0.0350 & 3.728E-05 &    3.976 &   2.942E-04 &    3.957\tabularnewline
\tabularnewline                                            
\multirow{4}{*}{4}                                           
      &   0.0312 &  3.327E-07 &   $-$ & 8.829E-06 &   $-$ &  0.1400 & 8.015E-04 &      $-$ &   7.305E-03 &      $-$\tabularnewline
      &   0.0156 &  1.168E-08 & 4.832 & 3.211E-07 & 4.781 &  0.0700 & 2.571E-05 &    4.963 &   2.292E-04 &    4.995\tabularnewline 
      &   0.0078 &  3.906E-10 & 4.903 & 1.115E-08 & 4.848 &  0.0350 & 8.124E-07 &    4.984 &   7.199E-06 &    4.992\tabularnewline
\hline\hline                                              
\end{tabular}
\end{center}
\end{table}

 

\subsection{Degenerate case with a smooth solution}
\seclab{degenerateSmooth}

Following \cite{arbogast2016linear}
we consider the smooth pressure of the form $\pres^e=\cos(6xy^2)$ on $\Omega=(-1,1)^2$ and the following degenerate porosity 
\begin{align}
  \eqnlab{2d-tc4-porosity}
  \phi = 
  \begin{cases} 
    0, & x \le -\frac{3}{4} \text{ or } y \le -\frac{3}{4},\\
    (x+\frac{3}{4})^\alpha (y+\frac{3}{4})^{2\alpha}, &\text {otherwise}.
  \end{cases}
\end{align}
We note that $\phimhalf \nabla \phi \in \LRs{L^\infty(\Omega)}^2$
for $\alpha\ge 2$, and  we take $\alpha=2$.
The
one-phase region is denoted as $\Omega_1:=\{(x,y): x \le -\frac{3}{4} \text{ or } y \le -\frac{3}{4} \}$ with $\phi=0$, and the two-phase region is given by $\Omega_2:=\{(x,y): -\frac{3}{4} < x < 1 \text{ and } -\frac{3}{4} < y < 1\}$ with $\phi > 0$.
We define the intersection of these two regions by $\pOmega_{12}:=\overline{\Omega}_1 \cap \overline{\Omega}_2$.
In $\overline{\Omega}_1$, the exact  scaled pressure and scaled velocity vanish. 
In $\overline{\Omega}_2$, the exact solutions are given by
  \begin{subequations}
    \eqnlab{2d-tc4-solutions}
      \begin{align}
        \p^e &= \LRp{x + \frac{3}{4}}^{\frac{\alpha}{2}} \LRp{y + \frac{3}{4}}^{\alpha} \cos(6 x y^2),\\
        u_x^e &= 6 y^2 \LRp{x + \frac{3}{4}}^\alpha \LRp{y + \frac{3}{4}}^{2\alpha} \sin \LRp{6 x y^{2}}, \\
        u_y^e &= 12 x y \LRp{x + \frac{3}{4}}^{\alpha} \LRp{y + \frac{3}{4}}^{2\alpha} \sin (6 x y^{2}).
      \end{align}
  \end{subequations}

In Figure
\figref{2d-tc4-ss-pressure} are the contours of the pressure $\pres$ and 
the scaled pressure $\p$ computed from our HDG method using  $N_e=64^2$ rectangular elements and solution order 
$\k=4$. 
We observe that the pressure $\pres$ changes smoothly in the two-phase region $\Omega_2$,
but abruptly becomes zero in the one-phase region $\Omega_1$.
The sudden pressure jump on the intersection $\Omega_{12}$ 
is alleviated with the use of the scaled pressure $\p$. 

\begin{figure}[h!t!b!]
    \centering
    \subfigure[Pressure $\pres$]{
      \includegraphics[trim=4.2cm 1.2cm 24cm 1.0cm,clip=true,width=0.4\textwidth]{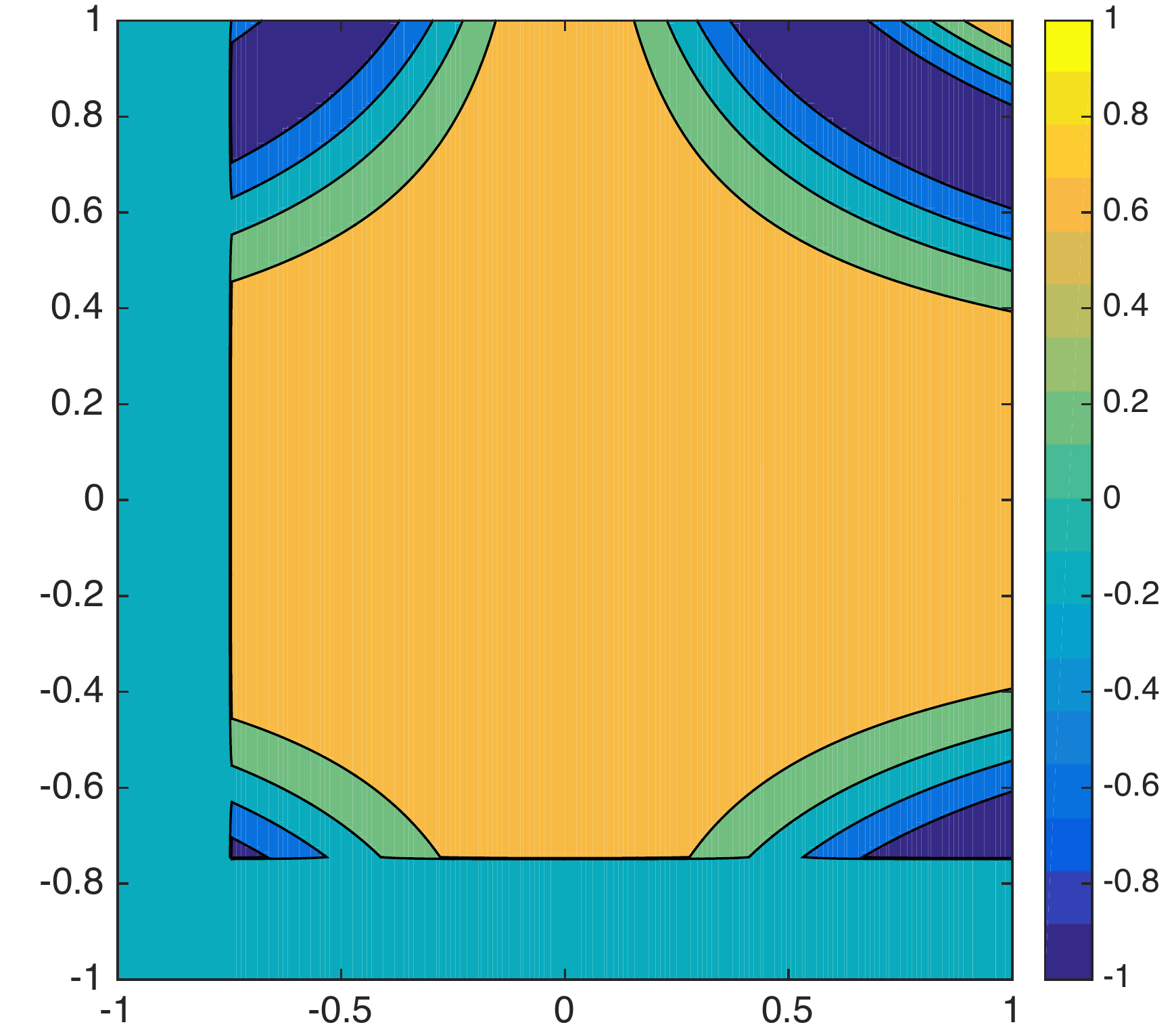}
      \figlab{2d-tc4-ss-pressure-p}
    }
    \subfigure[Scaled pressure $\p$]{
      \includegraphics[trim=24cm 1.2cm 4.2cm 1.15cm,clip=true,width=0.4\textwidth]{./figures/2d-tc4-p4h64-pressure-new}
      \figlab{2d-tc4-ss-pressure-q}
    }
    \caption{Degenerate case with a smooth solution:  (a) contour plot of the pressure $\pres$ field and (b) contour plot of the scaled pressure $\p$ with $N_e=64^2$ and $\k=4$. The pressure field changes smoothly in the two-phase region $\Omega_2$, but suddenly becomes zero in the one-phase region $\Omega_1$. The abrupt change near the intersection $\Omega_{12}$ between the one- and two-phase regions is alleviated with the use of the scaled pressure $\p$.} 
    \figlab{2d-tc4-ss-pressure}
  \end{figure}

For a convergence study, we use a sequence of meshes with $n_e=\{16,32,64,128\}$ and with
$\k=\{1,2,3,4\}$.
Here we choose an even
number of elements so that the mesh skeleton aligns
with the intersection $\pOmega_{12}$.
As can be seen in Figure \figref{2d-tc4-convergence},
the convergence rate of $\LRp{\k+\frac{1}{2}}$ is observed more or less 
 for both the scaled pressure $\p$ and the scaled velocity $\ub$, and this 
agrees with Theorem \theoref{convergence} for the degenerate case with a piecewise smooth solution. 


\begin{figure}[h!t!b!]
    \centering
    \subfigure[$\p$ ]{
      \includegraphics[trim=.0cm .1cm 2.2cm .1cm,clip=true,width=0.4\textwidth]{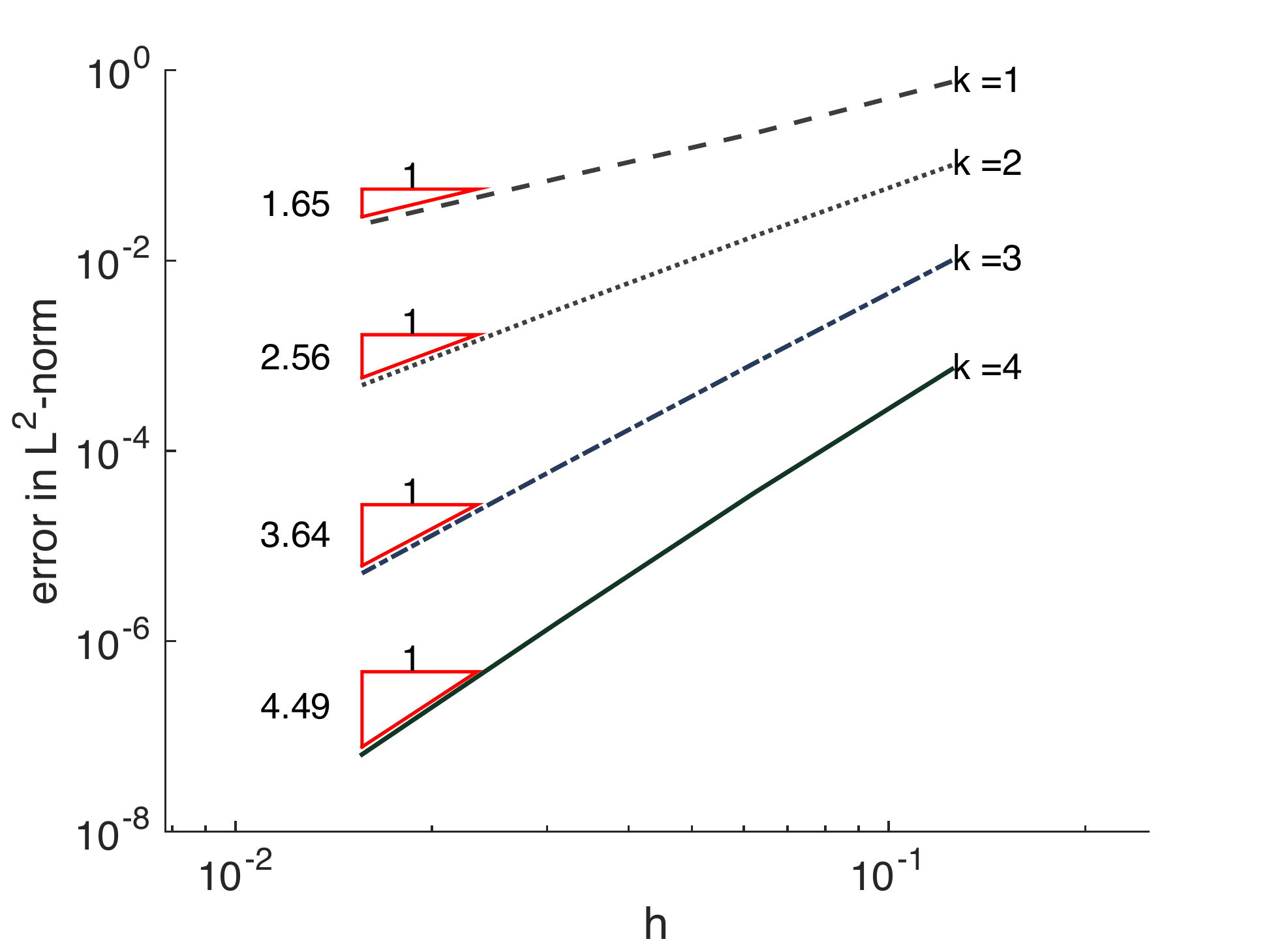}
      \figlab{2d-tc4-conv-tau-q}
    }
    \subfigure[$\ub$]{
      \includegraphics[trim=.1cm .1cm 2.2cm .1cm,clip=true,width=0.4\textwidth]{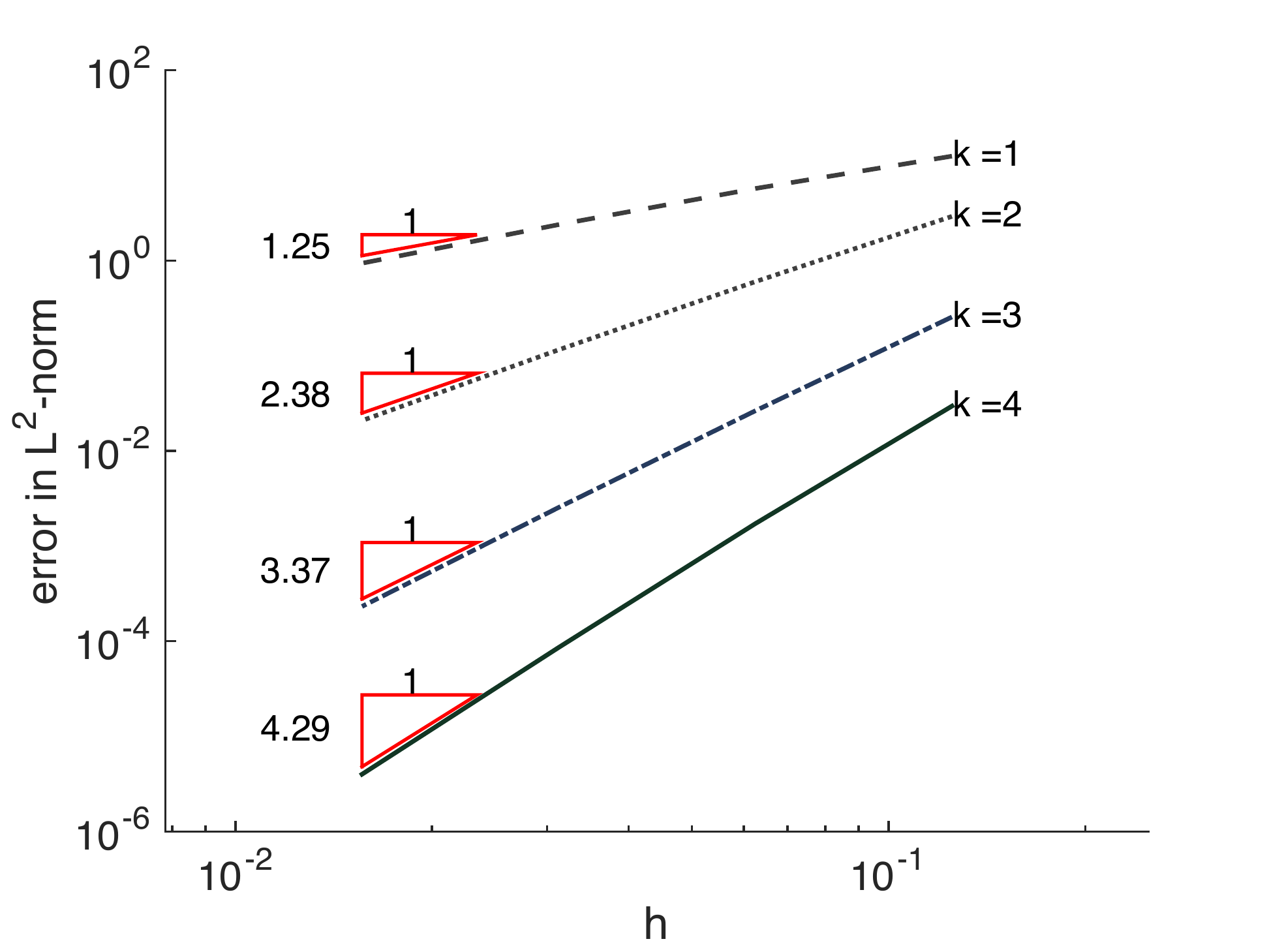}
      \figlab{2d-tc4-conv-tau-u}
    }
    \caption{Degenerate case with a smooth solution: convergence study for (a) the scaled pressure $\p$ field
      and (b) the scaled velocity $\ub$ field.
      The $\LRp{\k+\frac{1}{2}}$ convergence rates 
      are obtained approximately for both the scaled pressure $\p$ and the scaled velocity $\ub$.}
    \figlab{2d-tc4-convergence}
  \end{figure}

\subsection{Degenerate case with low solution regularity}
\seclab{degenerateNonSmooth}
Similar to \cite{arbogast2016linear}, 
we choose the exact pressure to be $\pres^e=y(y-3x)(x+\frac{3}{4})^\beta$ with $\beta=-\frac{1}{4} \text{ or } -\frac{3}{4}$, 
and the porosity $\phi$ is defined in \eqnref{2d-tc4-porosity}.
Similar to Section \secref{degenerateSmooth}, we take $\alpha=2$.
The exact solutions then read
  \begin{subequations}
    \eqnlab{2d-tc5-solutions}
    \begin{align}
      \p^e &= y(y-3x)\LRp{x+\frac{3}{4}}^{\frac{\alpha}{2}+\beta}\LRp{y+\frac{3}{4}}^\alpha,\\
      u_x^e &= y \LRp{ \beta(3x-y) + 3\LRp{x+\frac{3}{4}} } \LRp{x+\frac{3}{4}}^{\alpha +\beta -1 } \LRp{y+\frac{3}{4}}^{2\alpha}, \\
      u_y^e &= (3x-2y)                         \LRp{x+\frac{3}{4}}^{\alpha + \beta   } \LRp{y+\frac{3}{4}}^{2\alpha}.
    \end{align}
  \end{subequations}
The pressure and the scaled pressure fields are simulated with $N_e=64^2$ and $\k=4$ for the two
different cases: $\beta=-\frac{1}{4}$ and $\beta=-\frac{3}{4}$ in
Figure \figref{2d-tc5-ss-pressure}.  
As can be seen from \eqnref{2d-tc5-solutions} and Figure \figref{2d-tc5-ss-pressure} that smaller
$\beta$ implies lower solution regularity. The pressure field with
$\beta=-\frac{3}{4}$ is less regular than that with
$\beta=-\frac{1}{4}$. For both cases, we also observe that the
pressure $\pres$ fields become stiffer (stiff ``boundary layer'') near the
intersection at $x=-\frac{3}{4}$, while the scaled pressure $\p$
fields are much less stiff.

\begin{figure}[h!t!b!]
    \centering

    \subfigure[$\pres$ with $\beta=-\frac{1}{4}$]{
      \includegraphics[trim=4.5cm 1.2cm 24.0cm 1.125cm,clip=true,width=0.22\textwidth]{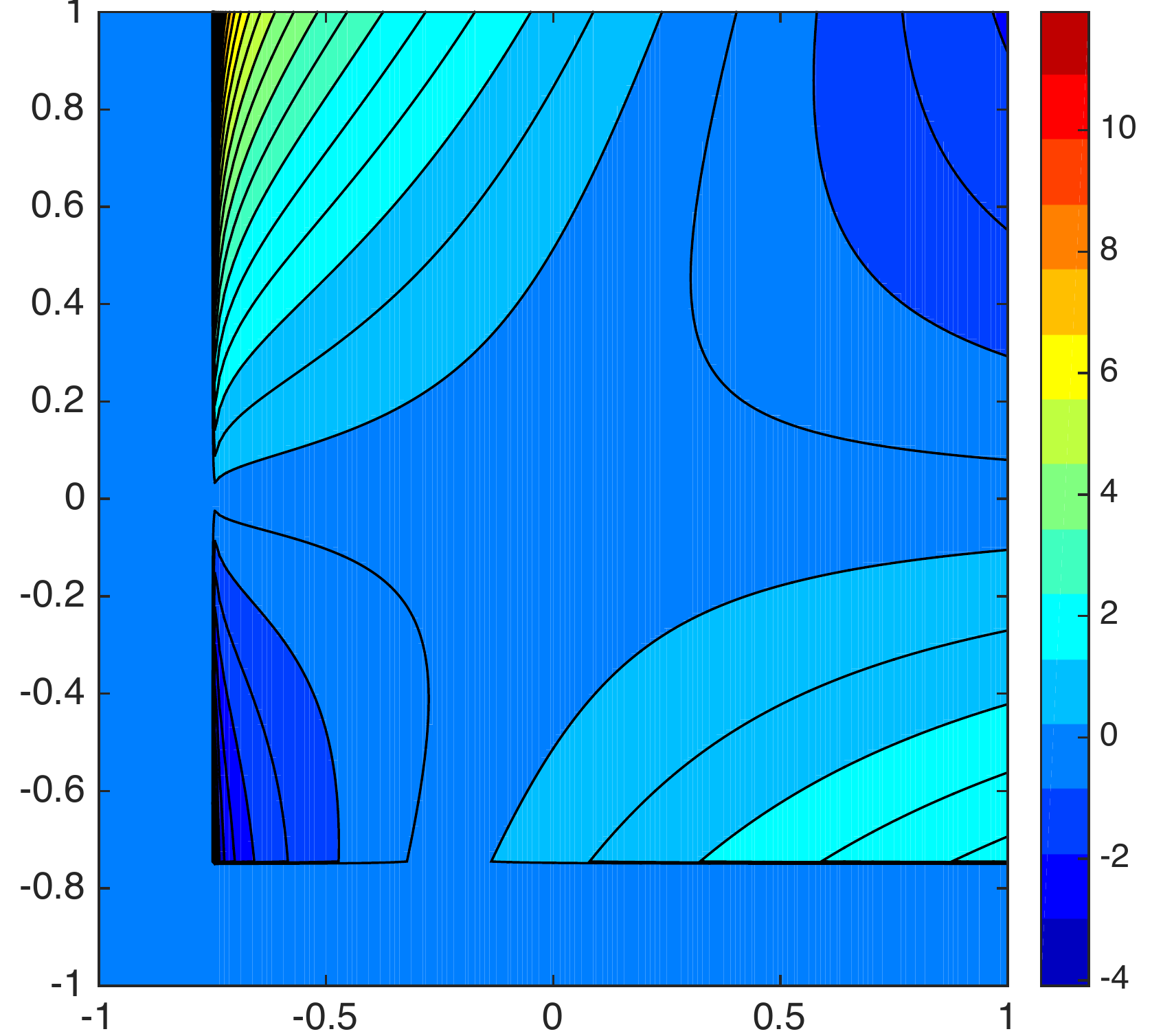}
      \figlab{2d-tc5a-p4h64-pressure-p}
    }
    \subfigure[$\p$ with $\beta=-\frac{1}{4}$]{
      \includegraphics[trim=24cm 1.2cm 4.5cm 1.125cm,clip=true,width=0.22\textwidth]{./figures/2d-tc5a-p4h64-pressure}
      \figlab{2d-tc5a-p4h64-pressure-q}
    }
    \subfigure[$\pres$ with $\beta=-\frac{3}{4}$]{
      \includegraphics[trim=4.5cm 1.2cm 24.0cm 1.125cm,clip=true,width=0.22\textwidth]{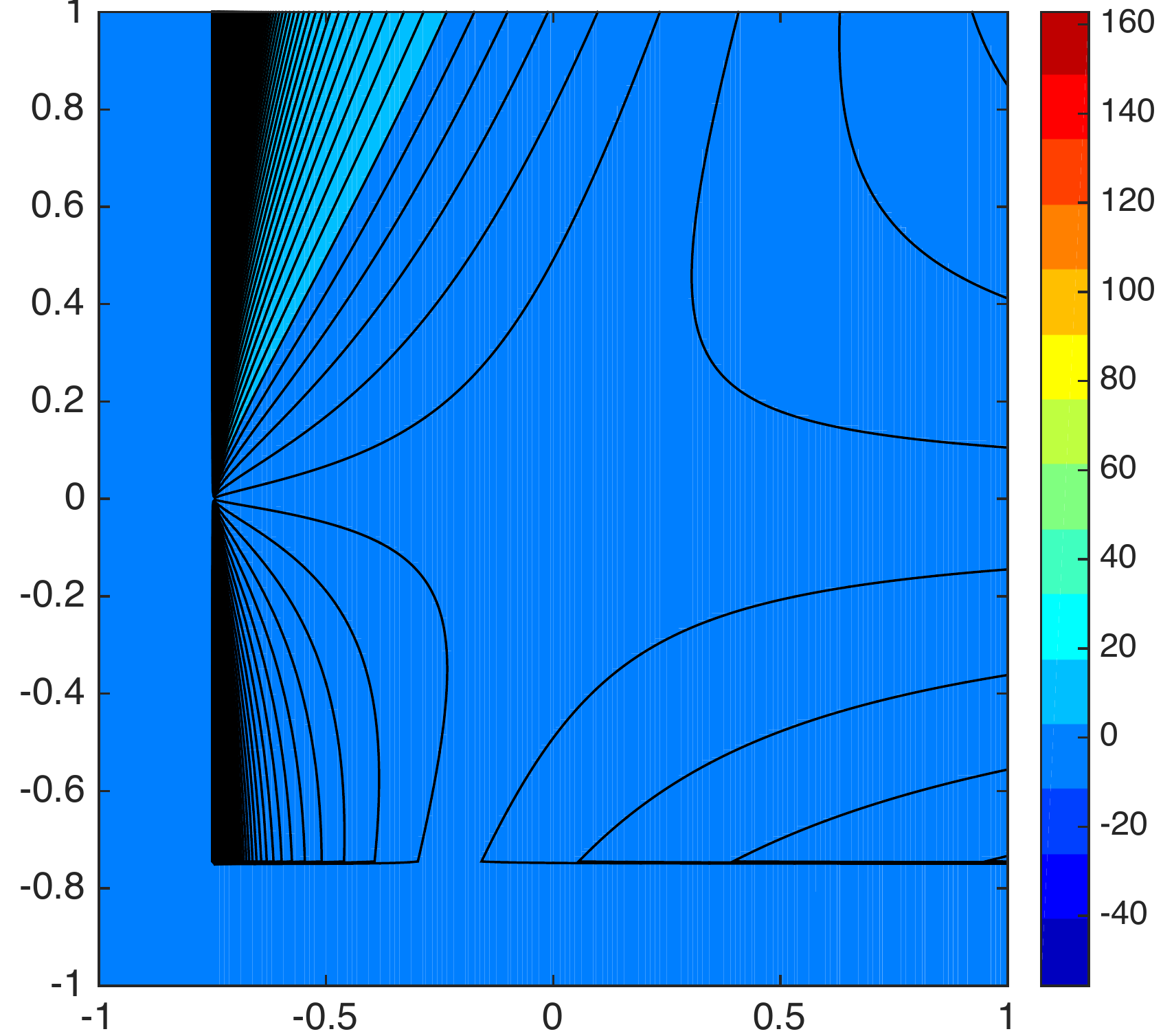}
      \figlab{2d-tc5b-p4h64-pressure-p}
    }
    \subfigure[$\p$ with $\beta=-\frac{3}{4}$]{
      \includegraphics[trim=24cm 1.2cm 4.5cm 1.125cm,clip=true,width=0.22\textwidth]{./figures/2d-tc5b-p4h64-pressure}
      \figlab{2d-tc5b-p4h64-pressure-q}
    }
    \caption{Degenerate case with low solution regularity: simulated with $N_e=64^2$ and $\k=4$ are 
      (a) pressure $\pres$ for $\beta=-\frac{1}{4}$, (b) scaled pressure
      $\p$ for $\beta=-\frac{1}{4}$, (c) pressure $\pres$ with
      $\beta=-\frac{3}{4}$, and (d) scaled pressure $\p$ for
      $\beta=-\frac{3}{4}$.  The pressure field with
      $\beta=-\frac{3}{4}$ is less regular than that with
      $\beta=-\frac{1}{4}$.  In both the cases, the pressure fields
      have low regularity near the intersection $\Omega_{12}$.} 
    \figlab{2d-tc5-ss-pressure}
  \end{figure}

When $\beta=-\frac{1}{4}$, the scaled pressure $\p$ and the scaled velocity $\ub$
reside in $H^{1.25 - \varepsilon}$ for $\varepsilon > 0$ \cite{arbogast2016linear}.
In order to see how the HDG solution behaves for this case, we perform a convergence study with $n_e=\{16,32,64,128\}$ and
$\k=\{1,2,4,8\}$.  As shown in Figure \figref{2d-tc5a-convergence},
the scaled pressure $\p$ and the scaled velocity $\ub$
  converge to the exact counterparts with the rate of about $1.25$. Note that though our error analysis in Section
\secref{errorAnalysis} considers exact solutions residing in standard
Sobolev spaces with integer powers, it can be straightforwardly extended to solutions in
fractional Sobolev spaces. For this example, the convergence rate is
bounded above by $1.25-\varepsilon$ regardless of the solution order.
However, the high order HDG solutions are still beneficial in terms of accuracy, for example,
the HDG solution with $\k=8$ is $4.5$ times more accurate than that with $\k=4$.

\begin{figure}[h!t!b!]
    \centering
    \subfigure[Convergence of $\p$ ]{
      \includegraphics[trim=.1cm .1cm 1.0cm .1cm,clip=true,width=0.4\textwidth]{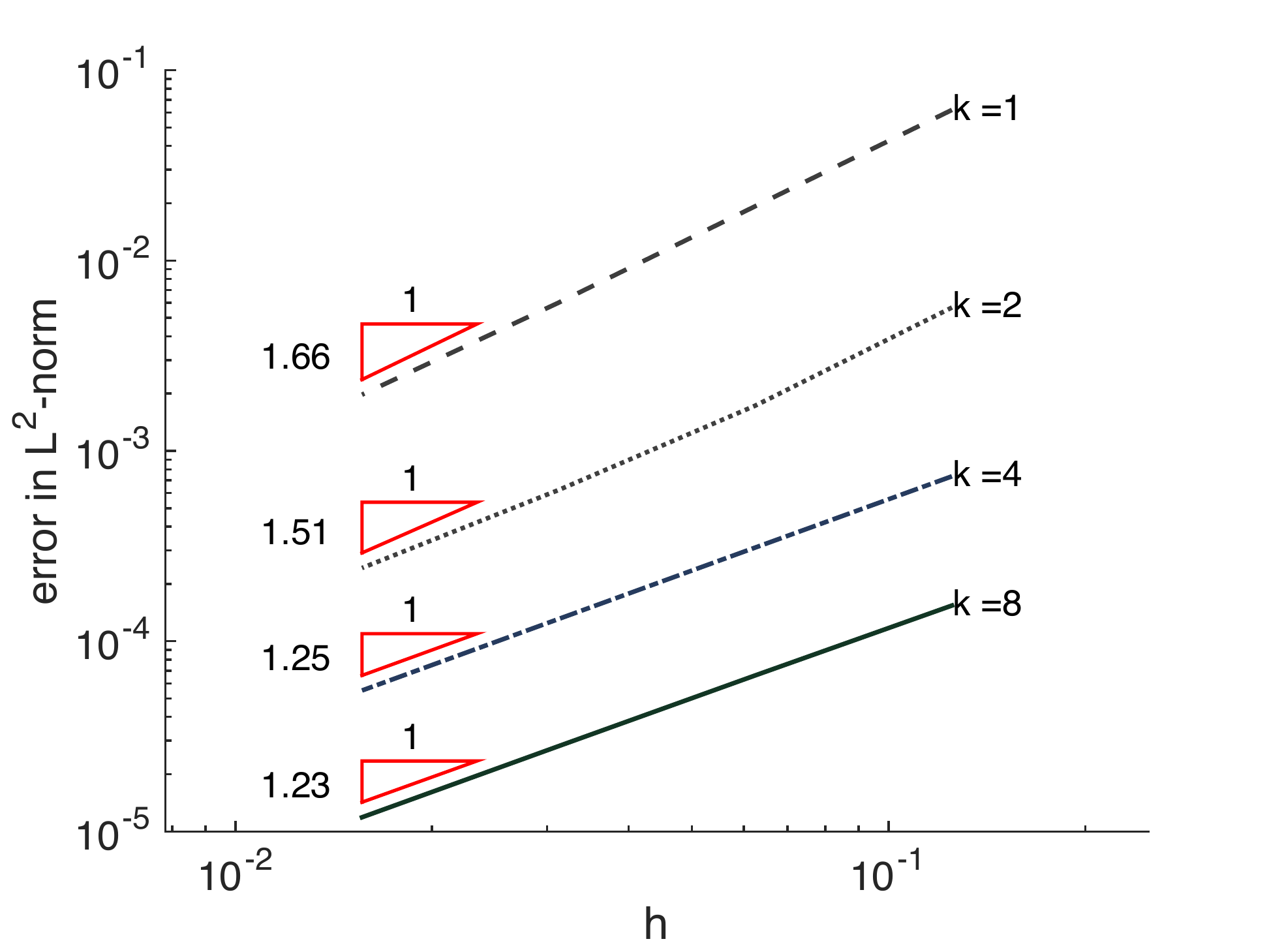}
      \figlab{2d-tc5a-conv-q}
    }
    \subfigure[Convergence of $\ub$ ]{
      \includegraphics[trim=.1cm .1cm 1.0cm .1cm,clip=true,width=0.4\textwidth]{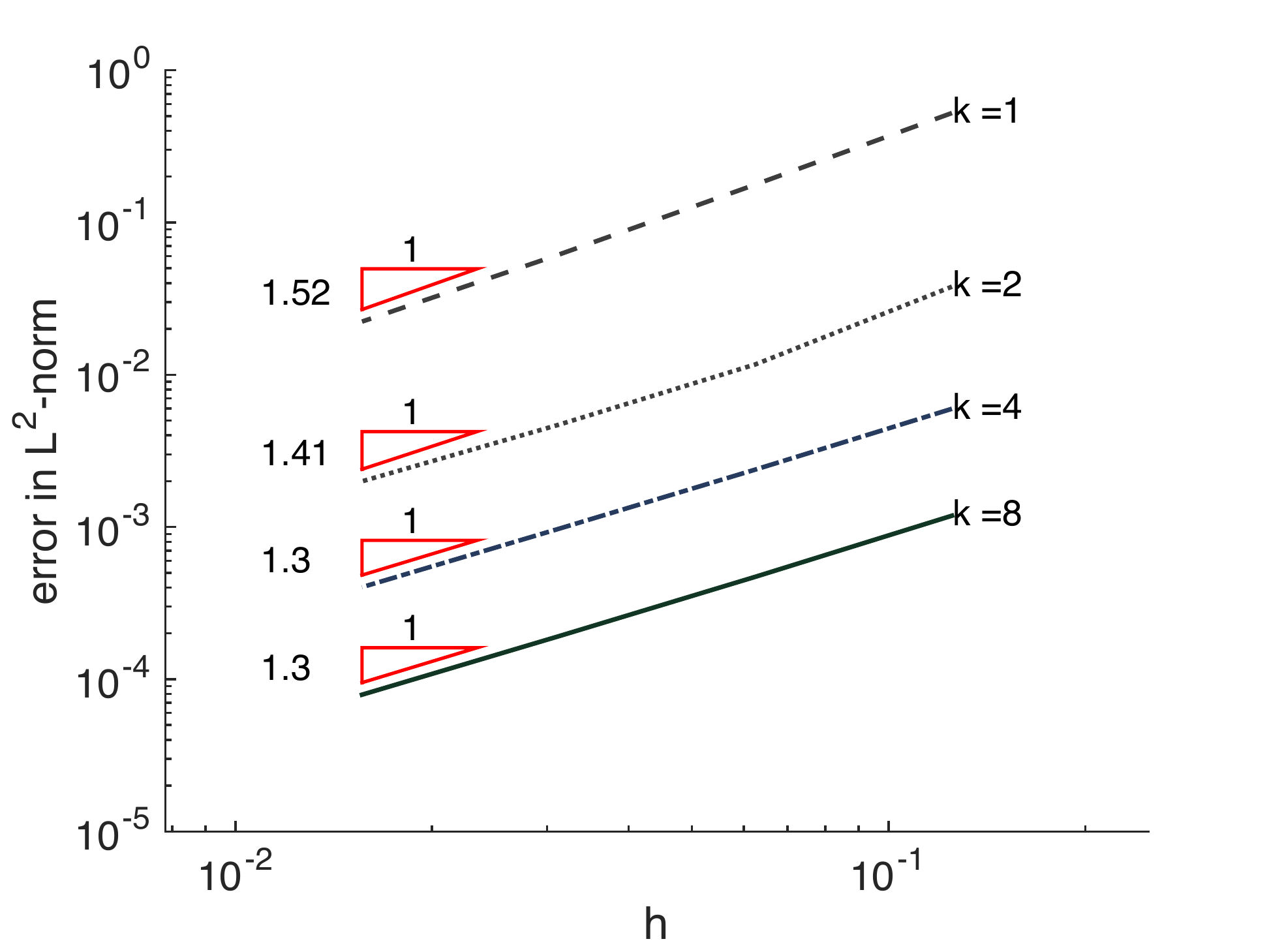}
      \figlab{2d-tc5a-conv-u}
    }
    \caption{Degenerate case with low solution regularity: 
      a convergence study with $\beta=-\frac{1}{4}$ for (a) the scaled pressure $\p$ field and
      (b) the scaled velocity $\ub$ field.}
    \figlab{2d-tc5a-convergence}
\end{figure}


When $\beta=-\frac{3}{4}$, 
the scaled pressure $\p$ and the scaled velocity $\ub$
lie in $H^{0.75-\varepsilon}$ for $\varepsilon>0$ \cite{arbogast2016linear}.
We conduct a convergence study with $n_e=\{16,32,64,128\}$ and
$\k=\{1,2,4,8\}$.  As shown in Figure \figref{2d-tc5b-convergence},
the convergence rate of about $0.75$ is observed for both 
the scaled pressure $\p$ and the scaled velocity $\ub$.
Similar to the case of $\beta=-\frac{1}{4}$,
high order HDG solutions, in spite of more computational demand, are beneficial from an accuracy standpoint.
For instance,
the HDG solution with $\k=8$ is $2.5$ times more accurate than that with $\k=4$.

\begin{figure}[h!t!b!]
    \centering
    \subfigure[Convergence of $\p$ ]{
      \includegraphics[trim=.1cm .1cm 1.0cm .1cm,clip=true,width=0.4\textwidth]{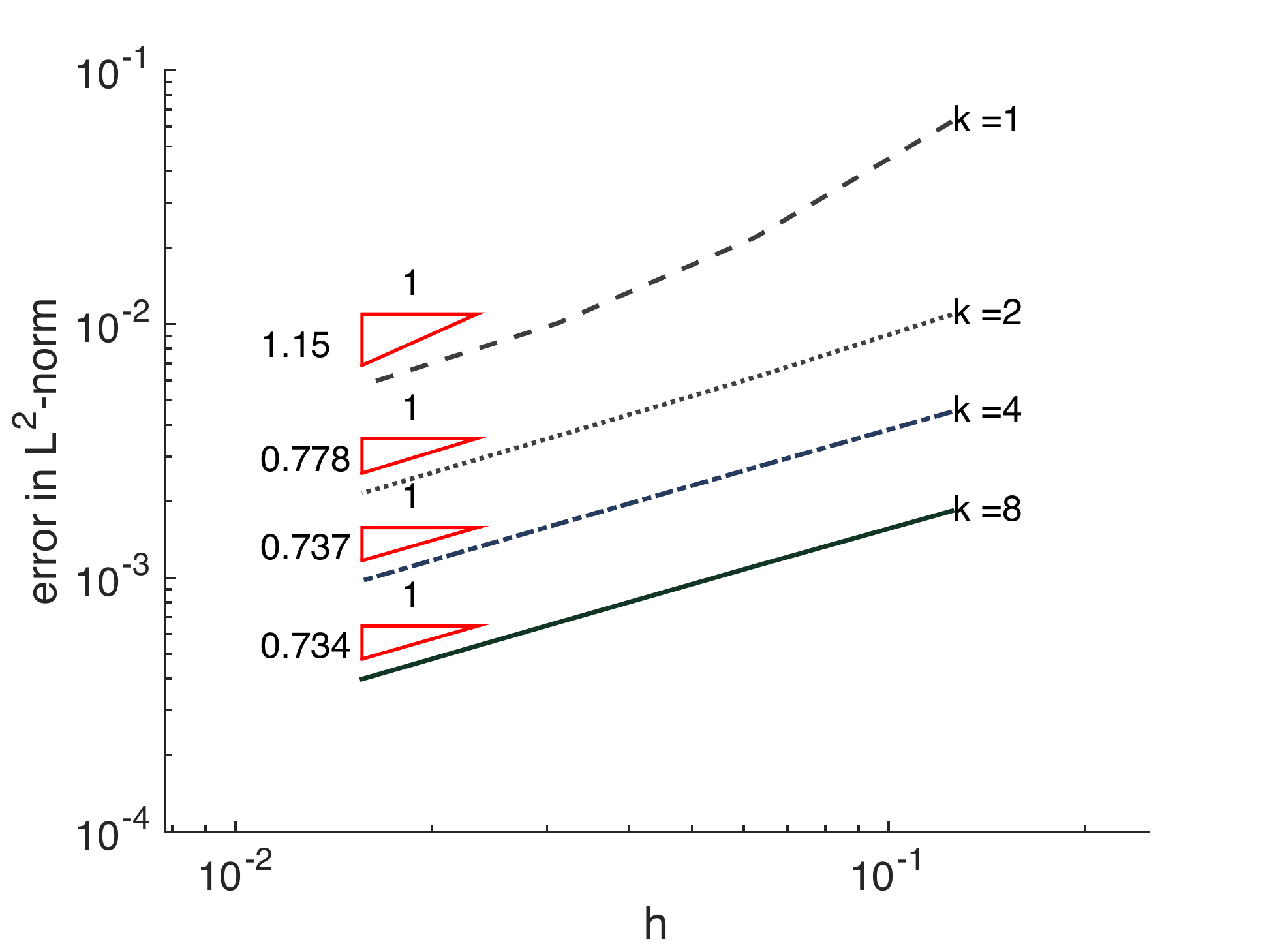}
      \figlab{2d-tc5b-conv-q}
    }
    \subfigure[Convergence of $\ub$]{
      \includegraphics[trim=.1cm .1cm 1.0cm .1cm,clip=true,width=0.4\textwidth]{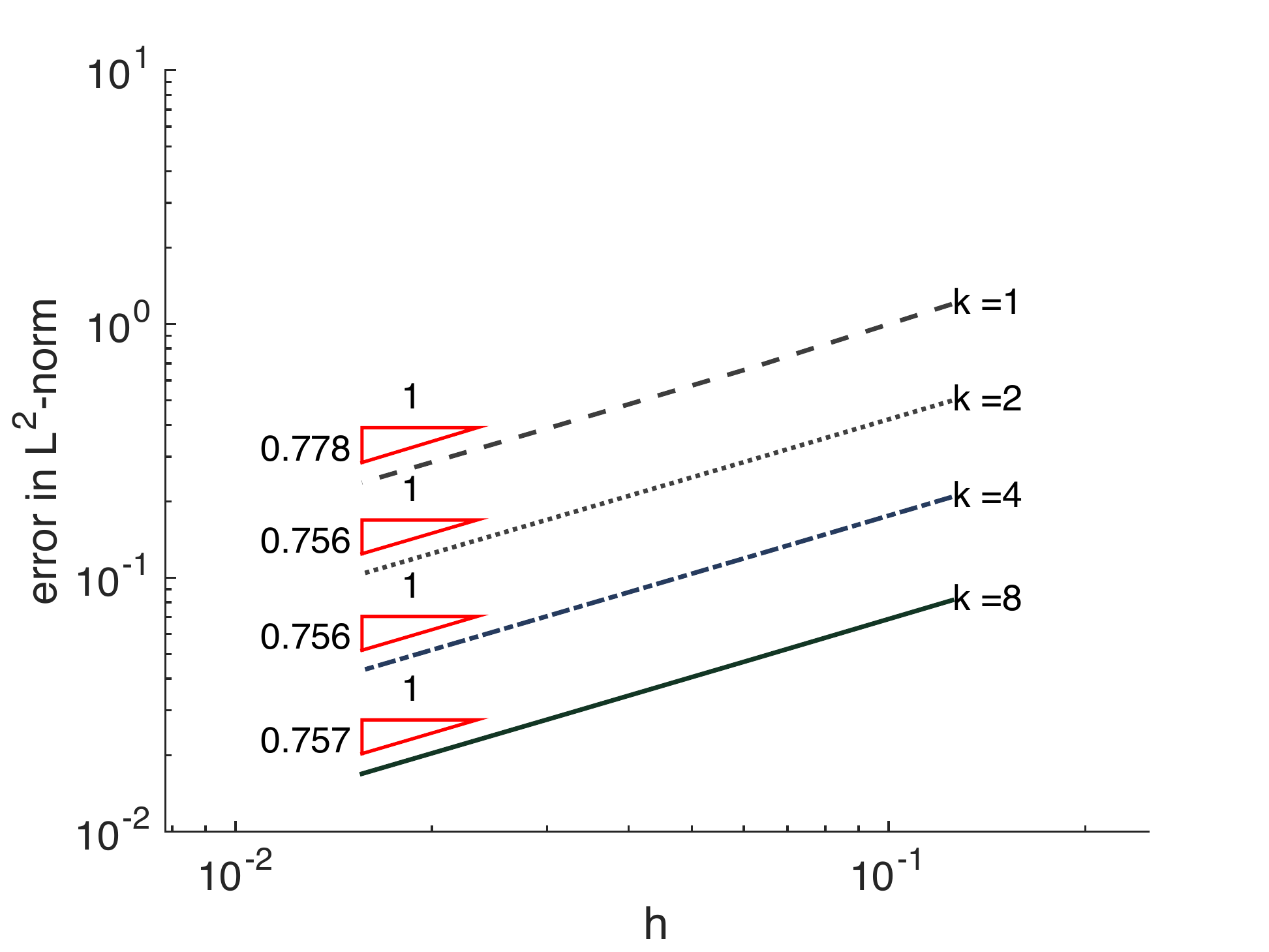}
      \figlab{2d-tc5b-conv-u}
    }
    \caption{Degenerate case with low solution regularity: 
      a convergence study with $\beta=-\frac{3}{4}$ for (a) the scaled pressure $\p$ field and
      (b) the scaled velocity $\ub$ field.}
    \figlab{2d-tc5b-convergence}
  \end{figure}

\subsection{Sensitivity of $\tau$ for the degenerate case with smooth solution}
\seclab{sensitivityTau} In this section we assess numerically whether 
the sub-optimality in Theorem \theoref{convergence} is sharp. To that
end, we consider the degenerate example with smooth solution in
Section \secref{degenerateSmooth} again here. Recall that the
generalized parameter $\tau$ is given by
\begin{equation}
\eqnlab{mixedtau}
\tau:=
\left\{
\begin{array}{ll}
\phimhalf\d & \text{for }  \phi > 0,\\
\gamma      & \text{for } \phi=0.
\end{array}
\right.
\end{equation}
We first compare the convergence rates for three different values of
$\gamma$, namely $\gamma \in \LRc{1/\h, 1,10}$, and the numerical results
(not shown here) show that the convergence rates are the same and are sub-optimal by half order. For that
reason we show only the case when $\gamma = 1/\h$ in the fourth column
of Tables \tabref{SensitivityTest-tau-tc4-pressure} and
\tabref{SensitivityTest-tau-tc4-velocity}, in which we report the convergence rates of $\p$ and $\ub$, respectively. 

We now present convergence rates for the cases where we use a single
value for $\tau$ over the entire mesh skeleton $\Gh$. We consider three cases: $\tau = \LRp{1/\h,1,10}$.
The convergence rates of $\p$ and $\ub$ for these parameters are shown
in the sixth, eighth, and tenth columns of Tables
\tabref{SensitivityTest-tau-tc4-pressure} and
\tabref{SensitivityTest-tau-tc4-velocity}.  The results for $\tau=1$
and $\tau=10$ show the convergence rate of about $\LRp{\k+\half}$.  The
cases with $\tau=\frac{1}{h}$ initially have the convergence rate of
$\LRp{\k+1}$ for both the scaled pressure $\p$ and the scaled velocity
$\ub$, then approach the predicted asymptotic rate of $\LRp{\k+\half}$
as the grid is refined. If we look at the value of the errors at any
grid level, the cases with $\tau = 1/\h$ have the smallest errors
compared to the other cases (including the cases with $\tau$ given in
\eqnref{mixedtau}). It could be due to the initial higher-order convergence
and/or smaller error constants. We thus recommend that $\tau = 1/\h$ should be used.


\begin{table}[h!t!b!]
\caption{Degenerate case with a smooth solution: 
 the errors $\norm{\p^e - \p}_\Omegah$ and the convergence rates for the scaled pressure. Four cases are presented: 
 $\tau$ given \eqnref{mixedtau}, $\tau=\frac{1}{h}$, $\tau=1$ and $\tau=10$.
 }
\tablab{SensitivityTest-tau-tc4-pressure}
\begin{center}
\begin{tabular}{*{2}{c}|*{2}{c}|*{2}{c}|*{2}{c}|*{2}{c}}
\hline
\multirow{2}{*}{$\k$} & \multirow{2}{*}{$h$} 
& \multicolumn{2}{c|}{$ \tau  = \left\{
\begin{array}{ll}
\phimhalf\d & \text{for }  \phi > 0\\
1/\h      & \text{for } \phi=0
\end{array}
\right.$} 
& \multicolumn{2}{c|}{$ \tau=\frac{1}{h} $} 
& \multicolumn{2}{c|}{$ \tau=1 $} 
& \multicolumn{2}{c}{$ \tau=10 $} \tabularnewline
  &  &   error & order  & error & order & error & order & error & order \tabularnewline
\hline\hline
\multirow{4}{*}{ 1} &0.1250 & 7.534E-01 &      $-$ & 1.752E-01 &      $-$ &       2.606E+00 &      $-$ &       3.827E-01 &      $-$ \tabularnewline
                    &0.0625 & 2.188E-01 &    1.784 & 4.107E-02 &    2.093 &       4.809E-01 &    2.438 &       1.228E-01 &    1.639 \tabularnewline
                    &0.0312 & 7.323E-02 &    1.579 & 9.138E-03 &    2.168 &       1.424E-01 &    1.756 &       4.098E-02 &    1.584 \tabularnewline
                    &0.0156 & 2.403E-02 &    1.608 & 2.306E-03 &    1.987 &       5.002E-02 &    1.509 &       1.312E-02 &    1.643 \tabularnewline
\tabularnewline                                      
\multirow{4}{*}{ 2} &0.1250 & 1.004E-01 &      $-$ & 3.313E-02 &      $-$ &       1.642E-01 &      $-$ &       6.442E-02 &      $-$ \tabularnewline
                    &0.0625 & 1.819E-02 &    2.465 & 3.315E-03 &    3.321 &       3.283E-02 &    2.322 &       1.115E-02 &    2.531 \tabularnewline
                    &0.0312 & 3.083E-03 &    2.561 & 3.887E-04 &    3.092 &       6.316E-03 &    2.378 &       1.791E-03 &    2.638 \tabularnewline
                    &0.0156 & 4.907E-04 &    2.651 & 6.161E-05 &    2.658 &       1.154E-03 &    2.453 &       2.713E-04 &    2.723 \tabularnewline
\tabularnewline                                      
\multirow{4}{*}{ 3} &0.1250 & 1.016E-02 &      $-$ & 2.815E-03 &      $-$ &       1.823E-02 &      $-$ &       5.781E-03 &      $-$ \tabularnewline
                    &0.0625 & 8.531E-04 &    3.574 & 1.651E-04 &    4.092 &       1.705E-03 &    3.419 &       4.683E-04 &    3.626 \tabularnewline
                    &0.0312 & 6.857E-05 &    3.637 & 1.098E-05 &    3.910 &       1.540E-04 &    3.469 &       3.630E-05 &    3.690 \tabularnewline
                    &0.0156 & 5.239E-06 &    3.710 & 8.459E-07 &    3.698 &       1.317E-05 &    3.547 &       2.694E-06 &    3.752 \tabularnewline
\tabularnewline                                      
\multirow{4}{*}{ 4} &0.1250 & 7.243E-04 &      $-$ & 2.445E-04 &      $-$ &       1.177E-03 &      $-$ &       4.613E-04 &      $-$ \tabularnewline
                    &0.0625 & 3.700E-05 &    4.291 & 7.196E-06 &    5.086 &       6.741E-05 &    4.126 &       2.298E-05 &    4.327 \tabularnewline
                    &0.0312 & 1.615E-06 &    4.518 & 2.170E-07 &    5.051 &       3.288E-06 &    4.358 &       9.600E-07 &    4.581 \tabularnewline
                    &0.0156 & 6.562E-08 &    4.621 & 8.533E-09 &    4.669 &       1.508E-07 &    4.447 &       3.730E-08 &    4.686 \tabularnewline
\hline\hline
\end{tabular}
\end{center}
\end{table}

\begin{table}[h!t!b!]
\caption{Degenerate case with a smooth solution: 
 the errors $\norm{\ub^e - \ub}_\Omegah$ and the convergence rates for the scaled velocity. Four cases are presented: 
 $\tau$ given \eqnref{mixedtau}, $\tau=\frac{1}{h}$, $\tau=1$ and $\tau=10$.
 }
\tablab{SensitivityTest-tau-tc4-velocity}
\begin{center}
\begin{tabular}{*{2}{c}|*{2}{c}|*{2}{c}|*{2}{c}|*{2}{c}}
\hline
\multirow{2}{*}{$\k$} & \multirow{2}{*}{$h$} 
& \multicolumn{2}{c|}{$ \tau  = \left\{
\begin{array}{ll}
\phimhalf\d & \text{for }  \phi > 0\\
1/\h      & \text{for } \phi=0
\end{array}
\right.$} 
& \multicolumn{2}{c|}{$ \tau=\frac{1}{h} $} 
& \multicolumn{2}{c|}{$ \tau=1 $} 
& \multicolumn{2}{c}{$ \tau=10 $} \tabularnewline
  &  &   error & order & error & order & error & order & error & order \tabularnewline
\hline\hline
\multirow{4}{*}{ 1}&0.1250 &   1.251E+01 &      $-$ & 7.416E+00 &      $-$ &       1.515E+01 &      $-$ &       1.028E+01 &      $-$ \tabularnewline
                   &0.0625 &   5.714E+00 &    1.130 & 2.229E+00 &    1.734 &       7.362E+00 &    1.041 &       4.479E+00 &    1.199 \tabularnewline
                   &0.0312 &   2.386E+00 &    1.260 & 5.664E-01 &    1.977 &       3.334E+00 &    1.143 &       1.784E+00 &    1.329 \tabularnewline
                   &0.0156 &   9.371E-01 &    1.348 & 1.505E-01 &    1.912 &       1.449E+00 &    1.202 &       6.660E-01 &    1.421 \tabularnewline
\tabularnewline                                      
\multirow{4}{*}{ 2}&0.1250 &   2.911E+00 &      $-$ & 1.656E+00 &      $-$ &       3.649E+00 &      $-$ &       2.361E+00 &      $-$ \tabularnewline
                   &0.0625 &   5.996E-01 &    2.279 & 2.372E-01 &    2.804 &       8.361E-01 &    2.126 &       4.566E-01 &    2.371 \tabularnewline
                   &0.0312 &   1.154E-01 &    2.378 & 3.575E-02 &    2.730 &       1.821E-01 &    2.199 &       8.247E-02 &    2.469 \tabularnewline
                   &0.0156 &   2.080E-02 &    2.472 & 5.949E-03 &    2.587 &       3.770E-02 &    2.272 &       1.408E-02 &    2.550 \tabularnewline
\tabularnewline                                      
\multirow{4}{*}{ 3}&0.1250 &   2.551E-01 &      $-$ & 1.237E-01 &      $-$ &       3.595E-01 &      $-$ &       1.876E-01 &      $-$ \tabularnewline
                   &0.0625 &   2.635E-02 &    3.275 & 9.486E-03 &    3.705 &       4.001E-02 &    3.168 &       1.885E-02 &    3.315 \tabularnewline
                   &0.0312 &   2.542E-03 &    3.374 & 6.954E-04 &    3.770 &       4.232E-03 &    3.241 &       1.763E-03 &    3.418 \tabularnewline
                   &0.0156 &   2.316E-04 &    3.456 & 5.436E-05 &    3.677 &       4.236E-04 &    3.321 &       1.565E-04 &    3.494 \tabularnewline
\tabularnewline                                      
\multirow{4}{*}{ 4}&0.1250 &   2.951E-02 &      $-$ & 1.710E-02 &      $-$ &       3.615E-02 &      $-$ &       2.437E-02 &      $-$ \tabularnewline
                   &0.0625 &   1.717E-03 &    4.103 & 6.531E-04 &    4.710 &       2.348E-03 &    3.944 &       1.344E-03 &    4.181 \tabularnewline
                   &0.0312 &   8.585E-05 &    4.322 & 2.274E-05 &    4.844 &       1.321E-04 &    4.152 &       6.358E-05 &    4.402 \tabularnewline
                   &0.0156 &   3.994E-06 &    4.426 & 8.741E-07 &    4.701 &       6.999E-06 &    4.238 &       2.813E-06 &    4.498 \tabularnewline
\hline\hline
\end{tabular}
\end{center}
\end{table}

\subsection{Enhance accuracy by post-processing}
\seclab{postprocessing} 
In this section, we explore the superconvergence property 
for the degenerate elliptic equations.
It is well-known that the HDG methods have a superconvergence property 
\cite{cockburn2012conditions}, i.e., 
the post-processed solution $\pstar$ converges faster than $\p$.
For the standard elliptic system, a post-processed solution $\pstar$ has $(k+2)$ convergence rate. For degenerate equations, the convergence rate, as shown in the numerical results, is dictated by the regularity of the degenerate solutions. For that reason, it is not meaningful to post-process the solution over the entire domain $\Omega$. 
However, over sub-domains with positive porosity, the solution has higher regularity and thus 
the post-processed counterpart is expected to exhibit super-convergence.  
To this end, we seek a new approximation $\pstar \in \Poly^{k+1}(\K)$
by minimizing $ \norm{  \ub 
-\frac{1}{2} \phi^{-\frac{3}{2}} \d \Grad \phi \p 
+ \phimhalf \d \Grad \pstar }^2 $ over an element $\K$, 
which leads to the following local equations:
\begin{subequations}
\eqnlab{postprocessing}
  \begin{align}
  \eqnlab{postprocessing-a}
  \LRp{\Grad \pstar,\Grad \omega }_\K 
    & = - \LRp{\phihalf \d^{-1} \ub,\Grad \omega}_\K 
        + \half \LRp{\phi^{-1}\Grad \phi \p,\Grad \omega }_\K,\\ 
  \LRp{\pstar,1}_\K &= \LRp{\p,1}_K,
  \end{align}
\end{subequations}
for all $\omega \in \Poly^{k+1}(\K)$. 
Since the resulting linear system has $k+2$ equations and any one of \eqnref{postprocessing-a} is a linear combination of the others, 
we remove one of the rows of the linear system of \eqnref{postprocessing-a} in order to obtain a unique solution.
Similarly, we utilize the post-processing technique for the unscaled fluid pressure $\pres$. 
We seek a new approximation $\presstar \in \Poly^{k+1}(\K)$
by minimizing $ \norm{  \ub 
-\d \Grad \presstar }^2 $ over an element $\K$, 
which leads to the following local equations:
\begin{subequations}
\eqnlab{postprocessing_pres}
  \begin{align}
  \LRp{\Grad \presstar,\Grad \omega }_\K 
    & = - \LRp{\d^{-1} \ub,\Grad \omega}_\K,\\
  \LRp{\presstar,1}_\K &= \LRp{\pres,1}_K,
  \end{align}
\end{subequations}
for all $\omega \in \Poly^{k+1}(\K)$. 

Table \tabref{PostProcessing-tc3} shows
$h$-convergence results for the non-degenerate case in Section \secref{simplexQuad} with $(m_x,m_y)=(1,2)$ 
using a series of nested meshes, $N_e = \{8^2,16^2,32^2\}$, for rectangular elements.
We observe that both the post-processed scaled pressure $\pstar$ 
and the post-processed fluid pressure $\presstar$ converge faster than the scaled pressure $\p$ and the fluid pressure $\pres$.
The convergence rates of both the $\pstar$ and $\presstar$ are approximately $(\k+2)$,
except when $\k=1$. 

Next, we examine if the post-processing technique can be utilized in a degenerate case.
To address the question, 
we first define a subdomain $\tilde{\Omega} \subset \Omega$, which is a two-phase region and is ``far'' enough from the degenerate regions 
so that the solutions are less affected by the degeneracy. 
We perform several convergence studies 
for the degenerate cases in Section \secref{degenerateSmooth} and Section \secref{degenerateNonSmooth} 
using a sequence of nested meshes, $N_e = \{16^2,32^2,64^2\}$, for rectangular elements.

Table \tabref{PostProcessing-tc4} shows 
$h$-convergence results for the degenerate case with the smooth solution in Section \secref{degenerateSmooth} 
over $\tilde{\Omega} \in [-0.5,1]^2$. 
In general, both the post-processed solutions  $\pstar$ and $\presstar$ converge faster than the solutions $\p$ and $\pres$.
When $\k>1$, 
the convergence rates for the post-processed solutions $\pstar$ and $\presstar$ are $(\k+\frac{3}{2})$, one order faster than $\p$ and $\pres$. 

For the degenerate case with low solution regularity in Section \secref{degenerateNonSmooth},
the HDG solution already attains the maximal convergence order. It is thus not meaningful to post-process the solution over the whole domain as there will be no gain in the convergence rate.
Instead, we conduct post-processing studies over $\tilde{\Omega} \in [-0.5,1]^2$ and $\tilde{\Omega} \in [0,1]^2$ where the solution is expected to be more regular, and hence allowing the post-processing procedure to achieve a better convergence rate.
In Table \tabref{PostProcessing-tc5b} with $\beta=-\frac{3}{4}$,
 the convergence rate for the scaled pressure $\p$ is approximately $(\k+1)$ when $\k\le 2$, 
 but it starts to degrade to $2.3$ as the order $\k$ increases.
This numerically implies that the exact solution  $\pe$ over $\tilde{\Omega}$ resides in $H^{2.5-\varepsilon}\LRp{\tilde{\Omega}}$ (i.e., it has a higher regularity than over the whole domain). When the convergence rate of $\p$ reaches the maximum possible, there is no improvement for the post-processed counterpart.

Another observation is that the post-processed solution over the sub-domain still provides a benefit in terms of accuracy to a certain extent. 
In Table \tabref{PostProcessing-tc5b} and \tabref{PostProcessing-tc5b-region2},
the post-processed scaled pressure $\pstar$ has a smaller error than 
the scaled pressure $\p$. However, the difference between $\pstar$ and $\p$ becomes negligible as the solution order increases.  
A similar behavior is also seen for the post-processed fluid pressure $\presstar$ and the fluid pressure $\pres$.

\begin{table}[t]
\caption{Non-degenerate case: the errors and the convergence rates 
	for the scaled pressure $\p$, the post-processed scaled pressure $\pstar$, 
	the fluid pressure $\pres$, and the post-processed fluid pressure $\presstar$.
	The post-processed solutions show asymptotically $(\k+2)$ convergence rates for $\k>1$.}
\tablab{PostProcessing-tc3}
\begin{center}
\begin{tabular}{*{2}{c}|*{2}{c}|*{2}{c}|*{2}{c}|*{2}{c}}
\hline
\multirow{2}{*}{$k$} & \multirow{2}{*}{$h$} 
& \multicolumn{2}{c|}{$\left\Vert p^e - p       \right\Vert _{2}$} 
& \multicolumn{2}{c|}{$\left\Vert p^e - p^\star \right\Vert _{2}$} 
& \multicolumn{2}{c|}{$\left\Vert \tilde{p}^e - \tilde{p}       \right\Vert _{2}$} 
& \multicolumn{2}{c}{$\left\Vert \tilde{p}^e - \tilde{p}^\star \right\Vert _{2}$} \tabularnewline
  &  &   error & order & error & order & error & order & error & order \tabularnewline
\hline\hline
  \multirow{3}{*}{1} &   0.1250 &       1.508E-01 &      $-$ &       5.852E-02 &      $-$ &       6.476E-02 &      $-$ &       2.822E-02 &      $-$ \tabularnewline
                     &   0.0625 &       5.014E-02 &    1.589 &       1.245E-02 &    2.233 &       2.005E-02 &    1.692 &       5.928E-03 &    2.251 \tabularnewline
                     &   0.0312 &       1.497E-02 &    1.743 &       2.647E-03 &    2.234 &       5.706E-03 &    1.813 &       1.253E-03 &    2.242 \tabularnewline
\tabularnewline
  \multirow{3}{*}{2} &   0.1250 &       1.337E-02 &      $-$ &       5.001E-04 &      $-$ &       4.612E-03 &      $-$ &       1.970E-04 &      $-$ \tabularnewline
                     &   0.0625 &       2.053E-03 &    2.703 &       3.386E-05 &    3.884 &       6.805E-04 &    2.761 &       1.174E-05 &    4.068 \tabularnewline
                     &   0.0312 &       2.912E-04 &    2.818 &       2.275E-06 &    3.896 &       9.361E-05 &    2.862 &       7.182E-07 &    4.031 \tabularnewline
\tabularnewline
  \multirow{3}{*}{3} &   0.1250 &       6.595E-04 &      $-$ &       1.263E-05 &      $-$ &       2.608E-04 &      $-$ &       3.761E-06 &      $-$ \tabularnewline
                     &   0.0625 &       4.815E-05 &    3.776 &       4.484E-07 &    4.816 &       1.819E-05 &    3.842 &       1.232E-07 &    4.932 \tabularnewline
                     &   0.0312 &       3.289E-06 &    3.872 &       1.523E-08 &    4.880 &       1.209E-06 &    3.911 &       3.987E-09 &    4.950 \tabularnewline
\tabularnewline
  \multirow{3}{*}{4} &   0.1250 &       3.109E-05 &      $-$ &       4.289E-07 &      $-$ &       1.083E-05 &      $-$ &       1.194E-07 &      $-$ \tabularnewline
                     &   0.0625 &       1.113E-06 &    4.804 &       7.568E-09 &    5.825 &       3.731E-07 &    4.859 &       1.940E-09 &    5.944 \tabularnewline
                     &   0.0312 &       3.762E-08 &    4.886 &       1.276E-10 &    5.890 &       1.231E-08 &    4.922 &       3.110E-11 &    5.963 \tabularnewline
\hline\hline
\end{tabular}
\end{center}
\end{table}

\begin{table}[t]
\caption{Degenerate case with a smooth solution:
	the errors and the convergence rates 
	for the scaled pressure $\p$, the post-processed scaled pressure $\pstar$, 
	the fluid pressure $\pres$, and the post-processed fluid pressure $\presstar$.
	The errors are computed over the subdomain $\tilde{\Omega}\in [-0.5,1]^2$. 
	The post-processed solutions show asymptotically $(\k+\frac{3}{2})$ convergence rates for $\k>1$.}
\tablab{PostProcessing-tc4}
\begin{center}
\begin{tabular}{*{2}{c}|*{2}{c}|*{2}{c}|*{2}{c}|*{2}{c}}
\hline
\multirow{2}{*}{$k$} & \multirow{2}{*}{$h$} 
& \multicolumn{2}{c|}{$\left\Vert p^e - p       \right\Vert _{2}$} 
& \multicolumn{2}{c|}{$\left\Vert p^e - p^\star \right\Vert _{2}$} 
& \multicolumn{2}{c|}{$\left\Vert \tilde{p}^e - \tilde{p}       \right\Vert _{2}$} 
& \multicolumn{2}{c}{$\left\Vert \tilde{p}^e - \tilde{p}^\star \right\Vert _{2}$} \tabularnewline
  &  &   error & order & error & order & error & order & error & order \tabularnewline
\hline\hline
  \multirow{3}{*}{1} &   0.0625 &       7.659E-01 &      $-$ &       2.301E-01 &      $-$ &       2.264E-01 &      $-$ &       9.357E-02 &      $-$ \tabularnewline
                     &   0.0312 &       2.211E-01 &    1.792 &       5.174E-02 &    2.153 &       6.588E-02 &    1.781 &       2.052E-02 &    2.189 \tabularnewline
                     &   0.0156 &       7.361E-02 &    1.587 &       1.135E-02 &    2.188 &       2.085E-02 &    1.660 &       4.512E-03 &    2.185 \tabularnewline
\tabularnewline
  \multirow{3}{*}{2} &   0.0625 &       1.005E-01 &      $-$ &       6.934E-03 &      $-$ &       2.742E-02 &      $-$ &       2.829E-03 &      $-$ \tabularnewline
                     &   0.0312 &       1.818E-02 &    2.467 &       5.790E-04 &    3.582 &       4.617E-03 &    2.570 &       2.049E-04 &    3.788 \tabularnewline
                     &   0.0156 &       3.077E-03 &    2.563 &       4.939E-05 &    3.551 &       7.517E-04 &    2.619 &       1.522E-05 &    3.750 \tabularnewline
\tabularnewline
  \multirow{3}{*}{3} &   0.0625 &       1.007E-02 &      $-$ &       2.697E-04 &      $-$ &       2.689E-03 &      $-$ &       9.138E-05 &      $-$ \tabularnewline
                     &   0.0312 &       8.422E-04 &    3.579 &       1.620E-05 &    4.512 &       2.174E-04 &    3.629 &       3.697E-06 &    4.627 \tabularnewline
                     &   0.0156 &       6.758E-05 &    3.639 &       7.168E-07 &    4.498 &       1.703E-05 &    3.674 &       1.597E-07 &    4.533 \tabularnewline
\tabularnewline
  \multirow{3}{*}{4} &   0.0625 &       7.239E-04 &      $-$ &       2.446E-05 &      $-$ &       1.921E-04 &      $-$ &       5.478E-06 &      $-$ \tabularnewline
                     &   0.0312 &       3.672E-05 &    4.301 &       6.056E-07 &    5.336 &       8.891E-06 &    4.434 &       1.279E-07 &    5.421 \tabularnewline
                     &   0.0156 &       1.592E-06 &    4.528 &       1.400E-08 &    5.435 &       3.704E-07 &    4.585 &       2.928E-09 &    5.449 \tabularnewline
\hline\hline
\end{tabular}
\end{center}
\end{table}

\begin{table}[t]
	\caption{Degenerate case with low solution regularity for $\beta = -\frac{3}{4}$: 
	the errors and the convergence rates 
	for the scaled pressure $\p$, the post-processed scaled pressure $\pstar$, 
	the fluid pressure $\pres$, and the post-processed fluid pressure $\presstar$.
	The errors are computed over the subdomain $\tilde{\Omega}\in [-0.5,1]^2$.}
\tablab{PostProcessing-tc5b}
\begin{center}
\begin{tabular}{*{2}{c}|*{2}{c}|*{2}{c}|*{2}{c}|*{2}{c}}
\hline
\multirow{2}{*}{$k$} & \multirow{2}{*}{$h$} 
& \multicolumn{2}{c|}{$\left\Vert p^e - p       \right\Vert _{2}$} 
& \multicolumn{2}{c|}{$\left\Vert p^e - p^\star \right\Vert _{2}$} 
& \multicolumn{2}{c|}{$\left\Vert \tilde{p}^e - \tilde{p}       \right\Vert _{2}$} 
& \multicolumn{2}{c}{$\left\Vert \tilde{p}^e - \tilde{p}^\star \right\Vert _{2}$} \tabularnewline
  &  &   error & order & error & order & error & order & error & order \tabularnewline
\hline\hline
  \multirow{3}{*}{1} &   0.0625 &       5.982E-02 &      $-$ &       1.588E-02 &      $-$ &       6.842E-02 &      $-$ &       4.649E-02 &      $-$ \tabularnewline
                     &   0.0312 &       1.711E-02 &    1.806 &       3.798E-03 &    2.064 &       1.808E-02 &    1.920 &       1.168E-02 &    1.992 \tabularnewline
                     &   0.0156 &       4.627E-03 &    1.887 &       9.559E-04 &    1.990 &       4.674E-03 &    1.952 &       2.938E-03 &    1.992 \tabularnewline
\tabularnewline
  \multirow{3}{*}{2} &   0.0625 &       2.679E-03 &      $-$ &       1.641E-04 &      $-$ &       5.039E-03 &      $-$ &       1.644E-03 &      $-$ \tabularnewline
                     &   0.0312 &       3.654E-04 &    2.874 &       1.971E-05 &    3.058 &       5.641E-04 &    3.159 &       1.068E-04 &    3.944 \tabularnewline
                     &   0.0156 &       4.821E-05 &    2.922 &       3.479E-06 &    2.502 &       6.806E-05 &    3.051 &       8.619E-06 &    3.632 \tabularnewline
\tabularnewline
  \multirow{3}{*}{3} &   0.0625 &       7.697E-05 &      $-$ &       1.337E-05 &      $-$ &       1.761E-04 &      $-$ &       4.230E-05 &      $-$ \tabularnewline
                     &   0.0312 &       5.928E-06 &    3.699 &       3.020E-06 &    2.146 &       1.207E-05 &    3.867 &       5.301E-06 &    2.996 \tabularnewline
                     &   0.0156 &       6.829E-07 &    3.118 &       5.954E-07 &    2.343 &       1.253E-06 &    3.267 &       1.027E-06 &    2.367 \tabularnewline
\tabularnewline
  \multirow{3}{*}{4} &   0.0625 &       4.568E-06 &      $-$ &       4.249E-06 &      $-$ &       9.124E-06 &      $-$ &       7.593E-06 &      $-$ \tabularnewline
                     &   0.0312 &       8.400E-07 &    2.443 &       8.381E-07 &    2.342 &       1.470E-06 &    2.634 &       1.458E-06 &    2.381 \tabularnewline
                     &   0.0156 &       1.689E-07 &    2.314 &       1.689E-07 &    2.311 &       2.920E-07 &    2.331 &       2.920E-07 &    2.320 \tabularnewline
\hline\hline
\end{tabular}
\end{center}
\end{table}

\begin{table}[t]
	\caption{Degenerate case with low solution regularity for $\beta = -\frac{3}{4}$: 
	the errors and the convergence rates 
	for the scaled pressure $\p$, the post-processed scaled pressure $\pstar$, 
	the fluid pressure $\pres$, and the post-processed fluid pressure $\presstar$.
	The errors are computed over the subdomain $\tilde{\Omega}\in [0,1]^2$}
\tablab{PostProcessing-tc5b-region2}
\begin{center}
\begin{tabular}{*{2}{c}|*{2}{c}|*{2}{c}|*{2}{c}|*{2}{c}}
\hline
\multirow{2}{*}{$k$} & \multirow{2}{*}{$h$} 
& \multicolumn{2}{c|}{$\left\Vert p^e - p       \right\Vert _{2}$} 
& \multicolumn{2}{c|}{$\left\Vert p^e - p^\star \right\Vert _{2}$} 
& \multicolumn{2}{c|}{$\left\Vert \tilde{p}^e - \tilde{p}       \right\Vert _{2}$} 
& \multicolumn{2}{c}{$\left\Vert \tilde{p}^e - \tilde{p}^\star \right\Vert _{2}$} \tabularnewline
  &  &   error & order & error & order & error & order & error & order \tabularnewline
\hline\hline
  \multirow{3}{*}{1} &   0.0625 &       4.630E-02 &      $-$ &       1.061E-02 &      $-$ &       2.127E-02 &      $-$ &       7.002E-03 &      $-$ \tabularnewline
                     &   0.0312 &       1.352E-02 &    1.776 &       2.237E-03 &    2.246 &       6.038E-03 &    1.817 &       1.413E-03 &    2.309 \tabularnewline
                     &   0.0156 &       3.710E-03 &    1.865 &       5.230E-04 &    2.097 &       1.640E-03 &    1.881 &       3.148E-04 &    2.166 \tabularnewline
\tabularnewline
  \multirow{3}{*}{2} &   0.0625 &       2.210E-03 &      $-$ &       9.474E-05 &      $-$ &       1.034E-03 &      $-$ &       4.804E-05 &      $-$ \tabularnewline
                     &   0.0312 &       3.055E-04 &    2.854 &       7.706E-06 &    3.620 &       1.388E-04 &    2.898 &       3.983E-06 &    3.592 \tabularnewline
                     &   0.0156 &       4.061E-05 &    2.912 &       6.422E-07 &    3.585 &       1.817E-05 &    2.933 &       3.595E-07 &    3.470 \tabularnewline
\tabularnewline
  \multirow{3}{*}{3} &   0.0625 &       6.260E-05 &      $-$ &       1.584E-06 &      $-$ &       3.369E-05 &      $-$ &       8.669E-07 &      $-$ \tabularnewline
                     &   0.0312 &       4.235E-06 &    3.886 &       1.202E-07 &    3.721 &       2.176E-06 &    3.952 &       1.028E-07 &    3.076 \tabularnewline
                     &   0.0156 &       2.791E-07 &    3.924 &       2.114E-08 &    2.507 &       1.409E-07 &    3.949 &       1.987E-08 &    2.371 \tabularnewline
\tabularnewline
  \multirow{3}{*}{4} &   0.0625 &       1.172E-06 &      $-$ &       1.523E-07 &      $-$ &       8.112E-07 &      $-$ &       1.426E-07 &      $-$ \tabularnewline
                     &   0.0312 &       4.882E-08 &    4.586 &       2.980E-08 &    2.354 &       3.792E-08 &    4.419 &       2.820E-08 &    2.338 \tabularnewline
                     &   0.0156 &       6.110E-09 &    2.998 &       5.978E-09 &    2.317 &       5.703E-09 &    2.733 &       5.646E-09 &    2.321 \tabularnewline
\hline\hline
\end{tabular}
\end{center}
\end{table}

\subsection{Non-degenerate case in three dimensions}

We consider finally a non-degenerate case on $\Omega=(0,1)^3$ with the positive porosity $\phi = \exp(2(x+y+z))$.
%
Let the pressure $\pres^e = \sin(m_x \pi x) \sin (m_y \pi y)\sin (m_z \pi z)
\exp\LRp{-(x+y+z)}$.
The corresponding manufactured scaled solutions are given as 
  \begin{subequations}
    \eqnlab{3d-tc3-solutions}
      \begin{align}
        q^e &= \sin(m_x \pi x) \sin (m_y \pi y) \sin (m_z \pi z),\\
      u_x^e &= \exp(x+y+z) \sin (m_y \pi y) \sin (m_z \pi z) \LRp{\sin(m_x \pi x) - m_x \pi \cos(m_x\pi x)},\\
      u_y^e &= \exp(x+y+z) \sin (m_x \pi x) \sin (m_z \pi z ) \LRp{\sin(m_y \pi y) - m_y \pi \cos(m_y\pi y)},\\
      u_z^e &= \exp(x+y+z) \sin (m_x \pi x) \sin (m_y \pi y ) \LRp{\sin(m_z \pi z) - m_z \pi \cos(m_z\pi z)}.
      \end{align}
  \end{subequations}

Table \tabref{NonDegenerateCase3D-k111}
 shows  $h$-convergence results in the $L^2(\Omegah)$-norm 
using a sequence of nested meshes with $n_e=\{8,12,16,20\}$.
Here we take $m_x=m_y=m_z=1$ and use the upwind based parameter $\tau= \phimhalf d$. 
We observe the convergence rates between $\LRp{\k+\half}$ and $\LRp{\k+1}$ for both
scaled pressure $\p$ and scaled velocity $\ub$. Recall that the optimal convergence rate of $\k + 1$ is proved for only simplices, though similar results for quadrilaterals and hexahedra are expected. Indeed, Table \tabref{NonDegenerateCase3D-k111} shows that as the solution order increases, the convergence rate is above $\k+\half$. 

\begin{table}[h!t!b!]
\caption{Non-degenerate case in three dimensions: the results show that the HDG solutions for scaled pressure $\p$ and scaled velocity $\ub$ converge to the exact solutions with the rate in above $k + \frac{1}{2}$.
The upwind based parameter $\tau=\phimhalf d$ is used. }
\tablab{NonDegenerateCase3D-k111}
\begin{center}
\begin{tabular}{*{2}{c}|*{2}{c}|*{2}{c}}
\hline
\multirow{2}{*}{$\k$} & \multirow{2}{*}{$h$} 
& \multicolumn{2}{c|}{$\left\Vert p^e - p     \right\Vert _{2}$} 
& \multicolumn{2}{c}{$\left\Vert {\bf u}^e - {\bf u} \right\Vert _{2}$} \tabularnewline
  &  &   error & order & error & order \tabularnewline
\hline\hline
\multirow{4}{*}{ 1}    
     &   0.1250 &       2.553E-02 &      $-$ &       5.895E-01 &      $-$ \tabularnewline
     &   0.0833 &       1.473E-02 &    1.356 &       3.527E-01 &    1.267 \tabularnewline
     &   0.0625 &       9.754E-03 &    1.433 &       2.399E-01 &    1.340 \tabularnewline
     &   0.0500 &       6.994E-03 &    1.491 &       1.757E-01 &    1.396 \tabularnewline
\tabularnewline
\multirow{4}{*}{ 2}    
     &   0.1250 &       1.405E-03 &      $-$ &       4.408E-02 &      $-$ \tabularnewline
     &   0.0833 &       5.274E-04 &    2.417 &       1.712E-02 &    2.333 \tabularnewline
     &   0.0625 &       2.576E-04 &    2.491 &       8.579E-03 &    2.402 \tabularnewline
     &   0.0500 &       1.460E-04 &    2.545 &       4.961E-03 &    2.455 \tabularnewline
\tabularnewline
\multirow{4}{*}{ 3}    
     &   0.1250 &       4.872E-05 &      $-$ &       1.477E-03 &      $-$ \tabularnewline
     &   0.0833 &       1.183E-05 &    3.491 &       3.684E-04 &    3.425 \tabularnewline
     &   0.0625 &       4.234E-06 &    3.572 &       1.348E-04 &    3.495 \tabularnewline
     &   0.0500 &       1.885E-06 &    3.626 &       6.105E-05 &    3.550 \tabularnewline
\tabularnewline
\multirow{4}{*}{ 4}    
     &   0.1250 &       1.100E-06 &      $-$ &       2.741E-05 &      $-$ \tabularnewline
     &   0.0833 &       1.705E-07 &    4.598 &       4.428E-06 &    4.496 \tabularnewline
     &   0.0625 &       4.447E-08 &    4.672 &       1.188E-06 &    4.573 \tabularnewline
     &   0.0500 &       1.551E-08 &    4.720 &       4.235E-07 &    4.622 \tabularnewline
\hline\hline
\end{tabular}
\end{center}
\end{table}

\section{Conclusions and future work}
\seclab{Conclusion}
In this paper, we developed numerical methods for both glacier dynamics and mantle convection.
Both phenomena can be described by a two-phase mixture model,
 in which the mixture of the fluid and the solid is described by the porosity $\phi$ 
(i.e., $\phi >0$ implies the fluid-solid two-phase and $\phi=0$ means the solid single-phase region). 
The challenge is when the porosity vanishes because the system degenerates, which make the problem difficult to solve numerically.  
To address the issue, following \cite{arbogast2016linear}, we start by scaling variables to obtain the well-posedness.
Then we spatially discretize the system using the upwind HDG framework. 
The key feature is that we have modified the upwind HDG flux to accommodate the degenerate (one-phase) region. 
When the porosity vanishes, the unmodified HDG system becomes ill-posed
because the stabilization parameter associated with the HDG flux disappears.
For this reason, we introduce the generalized stabilization parameter 
that is composed of the upwind based parameter $\tau=\phimhalf \d$ in the two-phase region and a positive parameter $\tau=\frac{1}{h}>0$ in the one-phase region.
This enabled us to develop a high-order HDG method for a linear degenerate elliptic equation 
arising from a two-phase mixture of both glacier dynamics and mantle convection. 

We have shown the well-posedness and the convergence analysis of our HDG scheme.
The rigorous theoretical results tell us that our HDG method has the convergence rates of $\LRp{\k+1}$ for a non-degenerate case 
and $\LRp{\k+\half}$ for a degenerate case with a piecewise smooth solution. 

 Several numerical results confirm that our proposed HDG method works well for linear degenerate elliptic equations.
For the non-degenerate case, we obtain the $\LRp{\k+1}$ convergence rates of both the scaled pressure $\p$ and the scaled velocity $\ub$ in two dimensions, whereas in three dimensions we observe the convergence rates above $\LRp{\k+\half}$.
For the degenerate case with a smooth solution, the convergence rate of $\LRp{\k+\half}$ is observed for both the scaled pressure $\p$ and the scaled velocity $\ub$.
For the degenerate case with low solution regularity, the convergence rates of the numerical solutions are bounded by the solution regularity, 
but the high-order method still shows a benefit in terms of accuracy. 
For smooth solutions for which the superconvergence property of HDG methods holds,
we can enhance the HDG solutions by post-processing. We have shown that
the post-processed HDG solution converges to the exact solution
faster than the HDG solution.
The convergence rates for the post-processed solutions are approximately $(\k+2)$ for the non-degenerate case, and $(\k+\frac{3}{2})$ for the degenerate case with a smooth solution. 
Through a parameter study, 
we found that using a positive parameter on the one-phase region does not affect the accuracy of a numerical solution.
We also found that $\tau=1/h$ showed slightly better performance in terms of error levels and convergence rates for the degenerate case with smooth solution. 

In order for our proposed method to work in two-phase flows, 
the interfaces between matrix solid and fluid melt need to be identified 
and grids should be aligned with the interfaces.
In other words, the degeneracies are always required to lie on a set of measure zero. 
Note that we do not consider the full set of dynamical equations 
\eqnref{Darcy}, \eqnref{Stokes}, 
\eqnref{MassConservation}, \eqnref{Compaction} yet. 
We will tackle this challenge in a future work.


\appendix

\section{Auxiliary results}
\seclab{auxiliary}
In this appendix we collect some technical results that are useful for our analysis.
\begin{lemma}[Inverse Inequality {\cite[Lemma 1.44]{PietroErn12}}] \lemlab{inverse-inequality}
  For $v \in \poly{k}(\K)$ with $\K \in \Omegah$, there exists $c>0$ independent of $h$ such that
  \begin{align}
    \eqnlab{lemmaA1}
    \nor{\Grad v}_{0,\K} \le c h_K^{-1} \nor{v}_{0,\K} .
  \end{align}
\end{lemma}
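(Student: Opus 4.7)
The plan is to reduce the inequality to a fixed reference element where it becomes an elementary statement about norm equivalence on a finite-dimensional space, and then transfer the estimate back to $\K$ by keeping careful track of the powers of $h_\K$ that arise from the affine change of variables.

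First I would fix a reference simplex (or reference hypercube) $\hat{\K}$ and, for each mesh element $\K \in \Omegah$, introduce the affine diffeomorphism $F_\K : \hat{\K} \to \K$, $F_\K(\hat{x}) = B_\K \hat{x} + b_\K$. Under the standard shape-regularity assumption implicit in the paper, one has $\|B_\K\| \lesssim h_\K$, $\|B_\K^{-1}\| \lesssim h_\K^{-1}$, and $|\det B_\K| \simeq h_\K^{\mathrm{dim}}$, with constants depending only on the reference element and the shape-regularity constant. For $v \in \poly{k}(\K)$ I would set $\hat{v} := v \circ F_\K \in \poly{k}(\hat{\K})$.

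Next I would invoke equivalence of norms on the finite-dimensional space $\poly{k}(\hat{\K})$ to obtain a constant $\hat{c} = \hat{c}(k,\hat{\K})$ such that
\begin{equation*}
\nor{\hat{\nabla} \hat{v}}_{0,\hat{\K}} \le \hat{c}\, \nor{\hat{v}}_{0,\hat{\K}},
\end{equation*}
where $\hat{\nabla}$ denotes the gradient with respect to $\hat{x}$. This is the only non-scaling ingredient and, because $\poly{k}(\hat{\K})$ is finite-dimensional, it requires no real work.

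Finally I would transfer the estimate back to $\K$ by the chain rule $\nabla v = B_\K^{-T} \hat{\nabla}\hat{v} \circ F_\K^{-1}$ and the standard change-of-variables identities
\begin{align*}
\nor{\nabla v}_{0,\K}^2 &= \int_{\hat{\K}} |B_\K^{-T} \hat{\nabla}\hat{v}|^2\, |\det B_\K|\, d\hat{x} \le \|B_\K^{-1}\|^2\, |\det B_\K|\, \nor{\hat{\nabla}\hat{v}}_{0,\hat{\K}}^2,\\
\nor{v}_{0,\K}^2 &= |\det B_\K|\, \nor{\hat{v}}_{0,\hat{\K}}^2.
\end{align*}
Combining these with the reference-element inequality cancels the $|\det B_\K|$ factors and leaves a net scaling of $\|B_\K^{-1}\| \lesssim h_\K^{-1}$ on the right-hand side, which is the claimed bound with $c = c(k,\hat{\K},\text{shape-regularity})$ independent of $h$.

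There is no real obstacle here: the only place a subtlety could creep in is the reference-element norm equivalence, and that is immediate from finite-dimensionality. The one thing to be careful about is that the constant absorbs dependence on $k$ and on the shape-regularity of $\Omegah$, but not on $h_\K$ itself; once the powers of $|\det B_\K|$ are matched, this is automatic.
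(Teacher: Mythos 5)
Your proof is correct and is essentially the standard scaling argument (reference-element norm equivalence plus affine change of variables) that underlies the cited result; the paper itself does not prove this lemma but simply quotes it from Di Pietro and Ern, where the same mapping/shape-regularity argument is used. The only point worth noting is that the constant also depends on the spatial dimension and the polynomial degree $k$, which you correctly acknowledge.
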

\begin{lemma}[Trace inequality {\cite[Lemma 1.49]{PietroErn12}}] \lemlab{cont-inv-trace}
  For $v \in H^1(\Omegah)$ and for $\K \in \Omegah$ with $e \subset \pK$, there exists $c>0$ independent of $h$ such that
  \begin{align}
    \nor{v}_{0,e}^2 \le c \LRp{\nor{\Grad v}_{0,\K} + h_{\K}^{-1} \nor{v}_{0,\K} } \nor{v}_{0,\K} .
  \end{align} 
\end{lemma}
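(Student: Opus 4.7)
The plan is to establish the multiplicative trace inequality by a standard scaling argument combined with a divergence-theorem bound on the reference element. First, since $e \subset \pK$, I would observe that $\nor{v}_{0,e}^2 \le \nor{v}_{0,\pK}^2$, so it suffices to prove the stronger statement
\[
\nor{v}_{0,\pK}^2 \le c\LRp{\nor{\Grad v}_{0,\K} + h_\K^{-1} \nor{v}_{0,\K}}\nor{v}_{0,\K}.
\]
I would then reduce to the reference simplex $\hat\K$ via the affine map $F(\hat x)=B\hat x+b$. Shape regularity of the mesh $\Omegah$ gives $\norm{B}\sim h_\K$, $\norm{B^{-1}}\sim h_\K^{-1}$, and $|\det B|\sim h_\K^{dim}$, from which the usual change-of-variables identities yield $\nor{v}_{0,\K}\sim h_\K^{dim/2}\nor{\hat v}_{0,\hat\K}$, $\nor{v}_{0,\pK}^2 \sim h_\K^{dim-1}\nor{\hat v}_{0,\partial\hat\K}^2$, and $\nor{\Grad v}_{0,\K}\sim h_\K^{dim/2-1}\nor{\hat\Grad\hat v}_{0,\hat\K}$. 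Substituting these in, the inequality on $\K$ collapses exactly to the dimension-free reference inequality
\[
\nor{\hat v}_{0,\partial\hat\K}^2 \le \hat c\LRp{\nor{\hat\Grad\hat v}_{0,\hat\K} + \nor{\hat v}_{0,\hat\K}}\nor{\hat v}_{0,\hat\K}.
\]

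The core analytic step is proving this reference inequality, which I would do by the divergence-theorem trick. Fix a smooth vector field $\mathbf{w}$ on $\hat\K$ with $\mathbf{w}\cdot\hat{\mathbf{n}}\ge c_0>0$ on $\partial\hat\K$; for a simplex, $\mathbf{w}(\hat x)=\hat x-\hat x_0$ with $\hat x_0$ the incenter suffices, since a simplex is star-shaped with strictly positive normal component on every face with respect to its incenter. Applying the divergence theorem to the vector field $\hat v^2\mathbf{w}$ gives
\[
c_0\nor{\hat v}_{0,\partial\hat\K}^2 \le \int_{\partial\hat\K}\hat v^2\,\mathbf{w}\cdot\hat{\mathbf{n}}\,ds = \int_{\hat\K}\LRp{2\hat v\,\hat\Grad\hat v\cdot\mathbf{w} + \hat v^2\,\hat\Grad\cdot\mathbf{w}}\,d\hat x,
\]
and Cauchy--Schwarz together with the uniform $L^\infty$ bounds on $\mathbf{w}$ and $\hat\Grad\cdot\mathbf{w}$ yields the reference inequality for $\hat v\in C^1(\hat\K)$. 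A density argument extends the bound to all of $H^1(\hat\K)$.

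The main obstacle is ensuring that the constant $c$ is genuinely independent of $h$ and of the element $\K$. This is not automatic from the scaling alone; it requires a shape-regularity hypothesis on the mesh so that the singular values of the Jacobian $B$ remain uniformly comparable to $h_\K$ and the reference constant $\hat c$ depends only on $\hat\K$. Under the standard shape-regularity assumption used throughout the error analysis of Section \secref{errorAnalysis}, these constants combine to give the claimed uniform inequality, which is exactly the form used when invoking the lemma to bound $\nor{\ep^I}_\pOmegah^2$ in \eqnref{E2}.
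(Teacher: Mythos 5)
Your proof is correct. The paper itself offers no argument for this lemma---it simply cites Lemma 1.49 of Di Pietro and Ern---and your scaling-to-the-reference-simplex argument combined with the divergence-theorem trick (integrating $\hat v^2\mathbf{w}$ for a field $\mathbf{w}$ with uniformly positive normal component, e.g.\ $\hat x - \hat x_0$ with $\hat x_0$ the incenter) is precisely the standard proof given in that reference, including the correct identification that shape regularity is what makes the constant uniform in $h$ and $\K$.
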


Applying the arithmetic-geometric mean inequality to the right side, we can derive 
\begin{align} 
  \nor{v}_{0,e} \le c \LRp{h_k^\half \nor{\Grad v}_{0,\K} + h_{\K}^{-\half} \nor{v}_{0,\K} }.
\end{align}

If $v \in H^1(\Omegah)$ is in a piecewise polynomial space, we can derive the following 
inequality from Lemma~\lemref{cont-inv-trace} and the inverse inequality (Lemma~\lemref{inverse-inequality}):
\begin{equation}
  \eqnlab{discrete_trace_ineq}
  \nor{v}_{0,\e} \le c  { h_{\K}^{-\half} \nor{v}_{0,\K} }.
\end{equation}

\section*{Acknowledgements}
The first and the second authors are partially supported by the NSF Grant NSF-DMS1620352. 
The third author is partially supported by NSF-DMS1720349.
We are grateful for the support.


\bibliographystyle{elsarticle-num}
\bibliography{ceo}







\end{document}